\theoremstyle{plain}
\newtheorem{theorem}{Theorem}
\newtheorem{lemma}[theorem]{Lemma}
\theoremstyle{definition}
\newtheorem{definition}[theorem]{Definition}
\theoremstyle{remark}
\title{Optimal Identity Testing with High Probability}
\newtheorem{fact}[theorem]{Fact}
\def\colorful{0}
\renewcommand{\epsilon}{\varepsilon}
\def\nnewcolor{0}
\newcommand{\nnew}[1]{{\color{red} #1}}
\newcommand{\nnew}[1]{#1}
\newcommand{\new}[1]{{#1}}
\newcommand{\new}[1]{{#1}}
\newcommand{\R}{\mathbb{R}}
\newcommand{\Z}{\mathbb{Z}}
\newcommand{\E}{\mathbb{E}}
\newcommand{\poly}{\mathrm{poly}}
\newcommand{\dtv}{d_{\mathrm TV}}
\newcommand{\wh}[1]{{\widehat{#1}}}
\newcommand{\var}{\mathrm{Var}}
\newcommand{\ignore}[1]{}
\newcommand{\eps}{\epsilon}
\newcommand{\abs}[1]{\lvert#1\rvert}
\newcommand{\Var}{\mathop{\textnormal{Var}}\nolimits}
\newcommand{\Poi}{\mathop{\textnormal{Poi}}\nolimits}
\newcommand{\eqdef}{\stackrel{\text{\footnotesize$\mathrm{def}$}}{=}}
\author{
Ilias Diakonikolas\thanks{Supported by NSF Award CCF-1652862 (CAREER) and a Sloan Research Fellowship.}\\
CS, USC\\
{\tt diakonik@usc.edu}\\
\and
Themis Gouleakis\thanks{Supported by NSF Awards, including No. CCF-1650733, CCF-1733808, and IIS-1741137}\\
CSAIL, MIT\\
{\tt tgoule@mit.edu}\\
\and
John Peebles\thanks{Supported by the NSF Graduate Research Fellowship under Grant No. 1122374, and by the NSF under Grant No. 1065125.}\\ 
CSAIL, MIT\\
{\tt jpeebles@mit.edu}\\
\and
Eric Price\\
CS, UT Austin\\
{\tt ecprice@cs.utexas.edu }\\
}
\begin{document}

\maketitle

\setcounter{page}{0}

\thispagestyle{empty}

\begin{abstract}
We study the problem of testing identity against a given distribution with a focus on the high confidence regime. More precisely, given samples from an unknown distribution $p$ over $n$ elements, an explicitly given distribution $q$, and parameters $0< \epsilon, \delta < 1$, we wish to distinguish, {\em with probability at least $1-\delta$}, whether the distributions are identical versus $\eps$-far in total variation distance. Most prior work focused on the case that $\delta = \Omega(1)$, for which the sample complexity of identity testing is known to be $\Theta(\sqrt{n}/\epsilon^2)$. Given such an algorithm, one can achieve arbitrarily small values  of $\delta$ via black-box amplification,  which multiplies the required number of samples by $\Theta(\log(1/\delta))$. 

We show that \new{black-box amplification} is suboptimal for any $\delta = o(1)$, and give a new identity tester that achieves the optimal sample complexity.  Our new upper and lower bounds show that the optimal sample complexity of identity testing is
\[
  \Theta\left( \frac{1}{\epsilon^2}\left(\sqrt{n \log(1/\delta)} + \log(1/\delta) \right)\right)
\]
for any $n, \eps$, and $\delta$. For the special case of uniformity testing, where the given distribution is the uniform distribution $U_n$ over the domain,  our new tester is surprisingly simple: to test whether $p = U_n$ versus $\dtv (p, U_n) \geq \eps$, we simply threshold $\dtv(\wh{p}, U_n)$, where $\wh{p}$ is the empirical probability distribution.  The fact that this simple ``plug-in'' estimator is sample-optimal is surprising,  even in the constant $\delta$ case. Indeed, it was believed that such a tester would not attain sublinear sample complexity even for constant values of $\eps$ and $\delta$.

An important contribution of this work lies in the analysis techniques that we introduce in this context. First, we  exploit an underlying strong convexity property to bound from below  the expectation gap in the completeness and soundness cases.  Second, we give a new, fast method for obtaining provably correct  empirical estimates of the true worst-case failure probability  for a broad class of  uniformity testing statistics over all possible input distributions---including all previously studied statistics for this problem. We believe that our novel analysis techniques will be useful for other distribution testing problems as well.

\end{abstract}

\newpage

\section{Introduction}  \label{sec:intro}

Distribution property testing~\cite{GR00, BFR+:00, Batu13}, originating 
in statistical hypothesis testing~\cite{NeymanP, lehmann2005testing}, 
studies problems of the form: 
given sample access to one or more unknown distributions,
determine whether they satisfy some global property or are ``far''
from satisfying the property. (See Section \ref{ssec:defs} for a formal definition.) 
During the past two decades problems of this form have received significant attention
within the computer science community. See~\cite{Rub12, Canonne15} for two recent surveys.

Research in this field has primarily centered on determining 
tight bounds on the sample complexity of testing various properties 
\emph{in the constant probability of success regime}. That is, 
the testing algorithm must succeed with a probability of (say) at least $2/3$.
This constant confidence regime is fairly well understood. 
For a range of fundamental properties~\cite{Paninski:08, CDVV14, VV14, DKN:15, DKN:15:FOCS, ADK15, DK16, DiakonikolasGPP16}
we now have \emph{sample-optimal} testers that use provably optimal number of samples (up to constant factors) in this regime.%

In sharp contrast, the high confidence regime---i.e., the case where the desired failure probability is subconstant---%
is poorly understood even for the most basic properties.
For essentially all distribution property testing problems studied in the literature, the standard amplification method 
is the only way known to achieve a high confidence success probability.
Amplification is a black-box method that can boost the success probability to any desired accuracy. However, using it increases the number of required samples beyond what is necessary to obtain constant confidence. Specifically, to achieve a high confidence success probability of $1-\delta$ via amplification, the number of samples required increases by a factor of $\Theta(\log(1/\delta))$ compared to the constant confidence regime.

This discussion raises the following natural questions:
{\em For a given distribution property testing problem, does black-box amplification give sample-optimal testers for obtaining a high confidence success probability? Specifically, is the $\Theta (\log(1/\delta))$ multiplicative increase
in the sample size the best possible? If not, can we design testers that have optimal sample complexity in terms of all relevant problem parameters, including the error probability $\delta$?
}

We believe that these are fundamental questions that merit theoretical
investigation in their own right. As Goldreich
notes~\cite{Gold17-whp}, ``eliminating the error probability as a
parameter does not allow to ask whether or not one may improve over
the straightforward error reduction''.  From a practical perspective,
understanding this high confidence regime is important to applications
of hypothesis testing (e.g., in biology), because the failure
probability $\delta$ of the test can be reported as a $p$-value. (The
family of distribution testing algorithms with success probability
$1-\delta$ for a given problem is equivalent to the family of
statistical tests whose $p$-value (probability of Type I error) and
probability of Type II error are both at most $\delta$.)  Standard
techniques for addressing the problem of multiple comparisons, such as
Bonferroni correction, require vanishingly small $p$-values.

Perhaps surprisingly, with one exception~\cite{HM13-it}, 
this basic problem has not been previously
investigated {in the finite sample regime}. A conceptual contribution
of this work is to raise this problem as a fundamental goal in
distribution property testing.  We note here that the analogous
question in the context of {\em distribution learning} has been
intensely studied in statistics and probability theory (see,
e.g.,~\cite{VWellner96, DL:01}) and tight bounds are known in a range
of settings. 

\subsection{Formal Framework} \label{ssec:defs}

\noindent The focus of this work is on the task of
identity testing, which is arguably \emph{the} most fundamental
distribution testing problem.
 
\begin{definition}[Distribution Identity Testing Problem]
  Given a target distribution $q$ with domain {$D$} of size $n$,
  parameters $0< \eps, \delta < 1$, and sample access to an unknown
  distribution $p$ over the same domain, we want to distinguish {\em
    with probability at least $1-\delta$} between the following cases:
\begin{itemize}
\item Completeness: $p = q$.
\item Soundness: $\dtv(p, q) \geq \eps$.
\end{itemize}
We call this the problem of {\em $(\eps, \delta)$ testing identity to
  $q$}.  The special case of $q$ being uniform is known as
\emph{uniformity testing}.  An algorithm that solves one of these
problems will be called an \emph{$(\varepsilon,\delta)$-tester} for
identity/uniformity.
\end{definition}

\noindent Note that $\dtv(p, q)$ denotes the total variation distance
or statistical distance between distributions $p$ and $q$, i.e.,
$\dtv(p, q) \eqdef \frac{1}{2} \cdot \|p-q\|_1$.  The goal is to
characterize the sample complexity of the problem: i.e., the number of
samples that are necessary and sufficient to correctly distinguish
between the completeness and soundness cases with  probability
$1-\delta$.

\subsection{Our Results} \label{ssec:results}

Our main result is a complete characterization of the worst-case sample complexity 
of identity testing in the high confidence regime. For this problem, 
we show that black-box amplification is suboptimal for {\em any} $\delta = o(1)$, 
and give a new identity tester that achieves the optimal sample complexity:

\begin{theorem}[Main Result] \label{thm:main}
There exists a computationally efficient $(\eps, \delta)$-identity tester for discrete distributions of support size $n$ with sample complexity 
\begin{align}\label{eq:samplecomplexity}
\Theta\left(\frac{1}{\eps^2}\left(\sqrt{n \log(1/\delta)} + \log(1/\delta)\right)\right).
\end{align}
Moreover, this sample size is information-theoretically optimal, up to a constant factor, for all 
$n, \eps, \delta$.
\end{theorem}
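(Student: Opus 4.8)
The plan is to prove the upper and lower sample‑complexity bounds separately. For the upper bound, by the standard sample‑preserving reduction from identity testing to uniformity testing (which turns each sample of $p$ into a sample of a related distribution on a domain of size $O(n)$ while only rescaling $\epsilon$ by a constant, and hence preserves the failure probability $\delta$), it suffices to give a uniformity tester with sample complexity \eqref{eq:samplecomplexity}. The tester draws $m$ samples, forms the empirical distribution $\wh p$, and accepts iff $\dtv(\wh p, U_n)\le\tau$ for a threshold $\tau$ to be chosen; the analysis splits into the undersampled regime $m\le n$ and the oversampled regime $m>n$.

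In the regime $m\le n$ the key point is that the statistic collapses: since every occupied bin has count at least $1 > m/n$, an elementary identity gives $\|\wh p - U_n\|_1 = \tfrac2n Z_0$, where $Z_0$ is the number of empty bins, so the tester is really thresholding the number of distinct elements observed. Its analysis has two ingredients. (i) \emph{Expectation gap.} Writing $\E_p[Z_0] = \sum_i (1-p_i)^m$, strong convexity of $x\mapsto(1-x)^m$ near $1/n$ (where $f''$ is of order $m^2(1-1/n)^m \gtrsim m^2$) together with $\|v\|_2^2\ge\|v\|_1^2/n$ gives $\E_p[Z_0] - \E_{U_n}[Z_0]\gtrsim m^2\epsilon^2/n$ for every $p$ with $\dtv(p,U_n)\ge\epsilon$ whose perturbation is spread out; a perturbation concentrated on few coordinates only increases this gap, which is checked by a direct per‑coordinate estimate (the local strong‑convexity constant degrades there, but then $(1-p_i)^m$ is simply far smaller than $(1-1/n)^m$). (ii) \emph{Concentration.} The occupancy indicators $\mathbf 1[N_i=0]$ are negatively associated, so $Z_0$ obeys a Bernstein‑type tail bound; crucially, a careful second‑moment computation shows $\Var(Z_0) = O(m^2/n)$, a full factor of $n$ smaller than the $O(m)$ that a black‑box bounded‑differences estimate yields — this is exactly why the naive analysis of the plug‑in estimator fails and why such a sharp variance bound is needed. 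Combining (i) and (ii), for $m$ of the stated order the fluctuation $\tfrac2n\big(\sqrt{\Var(Z_0)\log(1/\delta)} + \log(1/\delta)\big)$ is a small fraction of the expectation gap in both the completeness and soundness cases, so a suitable $\tau$ yields an $(\epsilon,\delta)$‑tester; matching $\sqrt{\Var(Z_0)\log(1/\delta)}$ against the gap produces the $\sqrt{n\log(1/\delta)}/\epsilon^2$ term, and the additive Bernstein term produces $\log(1/\delta)/\epsilon^2$. The oversampled regime $m>n$ is similar but easier: there $\wh p$ concentrates around $p$, the expectation gap is controlled by convexity of $x\mapsto\E\,|\mathrm{Bin}(m,x) - m/n|$, and a bounded‑differences inequality (each sample moves the statistic by $O(1/m)$) supplies the concentration.

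For the lower bound we use two hard instances and the inequality $\max\{A,B\}\ge\tfrac12(A+B)$. The first is the Paninski ensemble on $[n]$: completeness $p = U_n$, and soundness a uniformly random $p_z$ ($z\in\{\pm1\}^{n/2}$) in which the two coordinates of pair $j$ carry masses $\tfrac1n(1\pm 2\epsilon z_j)$, so $\dtv(p_z, U_n) = \epsilon$ for all $z$. Put $P_0 = U_n^{\otimes m}$, $P_1 = \E_z[p_z^{\otimes m}]$, $L = P_1/P_0$. The engine of the argument is that any $(\epsilon,\delta)$‑tester with test function $\phi$ must satisfy $\E_{P_0}[\phi]\le\delta$ and $\E_{P_1}[\phi]\ge 1-\delta$, whence for any $\theta>0$, $1-\delta\le\E_{P_1}[\phi] = \E_{P_0}[L\phi]\le\theta\,\E_{P_0}[\phi] + \Pr_{P_1}[L\ge\theta]\le\theta\delta + \Pr_{P_1}[L\ge\theta]$; taking $\theta = 1/(4\delta)$ (and $\delta$ at most a small constant) forces $\Pr_{x\sim P_1}[L(x)\ge 1/(4\delta)]\ge\tfrac12$. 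It thus suffices to show that if $m = o(\sqrt{n\log(1/\delta)}/\epsilon^2)$ then $\Pr_{P_1}[L\ge 1/(4\delta)] < \tfrac12$. Here one exploits the explicit product form $L(x) = \prod_j g(a_j, b_j)$, where $(a_j,b_j)$ are the sample counts in the two coordinates of pair $j$ and $g(a,b) = \tfrac12\big[(1+2\epsilon)^a(1-2\epsilon)^b + (1-2\epsilon)^a(1+2\epsilon)^b\big]$: under $P_1$, $\log L = \sum_j \log g(a_j,b_j)$ is a sum of essentially independent terms, contributed mainly by the (in expectation) $\Theta(m^2/n)$ pairs that receive two samples — each such term having mean and variance of order $\epsilon^4$ — so $\log L$ has mean and variance both of order $m^2\epsilon^4/n$; when $m^2\epsilon^4/n\ll\log(1/\delta)$, Chebyshev gives $\log L < \log(1/(4\delta))$ with probability above $\tfrac12$, the required contradiction, yielding $m = \Omega(\sqrt{n\log(1/\delta)}/\epsilon^2)$. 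The second instance is the two‑point problem $q = (\tfrac12,\tfrac12)$ versus $p^\star = (\tfrac12+\epsilon, \tfrac12-\epsilon)$ on $\{1,2\}\subseteq[n]$; the same argument with $\log L = \sum_t \log(p^\star(x_t)/q(x_t))$ a sum of $m$ i.i.d.\ bounded terms of mean $\Theta(\epsilon^2)$ gives $m = \Omega(\log(1/\delta)/\epsilon^2)$.

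The hardest part is the concentration step of the upper bound in the regime $m\le n$: the expectation gap is only of order $m^2\epsilon^2/n$ for the worst (maximally spread) soundness distribution, far below the $\Theta(1/\sqrt m)$ deviation permitted by a generic bounded‑differences argument, so making the plug‑in tester work at the optimal sample size requires showing that the statistic concentrates a full $\sqrt n$ factor better than that — i.e., the sharp bound $\Var(Z_0) = O(m^2/n)$ and its Bernstein analogue — and proving the expectation‑gap bound uniformly over \emph{all} $\epsilon$‑far $p$, including perturbations concentrated on few coordinates where the clean strong‑convexity estimate is unavailable. (The paper's fast method for provably correct empirical estimates of the worst‑case failure probability is an additional tool for pinning down such worst cases and the associated constants.) On the lower‑bound side the analogous difficulty is that the usual $\chi^2$ second‑moment bound loses the $\log(1/\delta)$ factor entirely, so one must instead control the \emph{tail} of $\log L$ via the truncated‑likelihood argument above, which also requires bounding the lower‑order but nonzero contribution of pairs receiving three or more samples.
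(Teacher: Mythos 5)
Your high-level route matches the paper's: a plug-in empirical total-variation tester for uniformity, reduced from identity via Goldreich's reduction, with an expectation-gap argument driven by convexity and a Bernstein-type concentration bound, plus a Paninski-style lower bound. However, two steps in the upper-bound analysis have genuine gaps.

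First, you wave away the reduction to ``nice'' far-from-uniform alternatives. You suggest handling perturbations concentrated on few coordinates by ``a direct per-coordinate estimate,'' but you never supply one, and — more importantly — you implicitly assume the same reduction when you do concentration. For a general $\eps$-far $p$ (e.g.\ one with a heavy element), the per-sample conditional variance of the statistic is \emph{not} $O(m/n^3)$, so the Bernstein bound you need does not follow. The paper closes this with a stochastic-domination/majorization lemma: any symmetric convex statistic of the histogram is stochastically dominated when probability mass is averaged, so one may replace $p$ by a ``flattened'' $p'$ with all masses in $\left\{\tfrac{1\pm\eps'}{n},\tfrac1n\right\}$, and this reduction is used both for the expectation gap \emph{and} to justify the small conditional variances in the concentration step. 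That lemma is load-bearing, not a convenience.

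Second, your concentration argument does not actually deliver the $\sqrt n$ improvement you correctly identify as the crux. You observe $\Var(Z_0)=O(m^2/n)$ (true, by the covariance cancellation) and assert a Bernstein tail ``since the occupancy indicators are negatively associated.'' But negative association gives MGF domination $\E[e^{\lambda\sum_i\mathbf 1_i}]\le\prod_i\E[e^{\lambda\mathbf 1_i}]$, which yields a Bernstein-type bound with variance proxy $\sum_i\Var[\mathbf 1_i]=\Theta(m)$, not the sharp $\Theta(m^2/n)$; the marginal variance of $Z_0$ does not automatically propagate into the exponent of an NA tail bound. The paper sidesteps this by decomposing over the $m$ \emph{samples} rather than the $n$ \emph{bins}: each sample $Y_j$ changes $S$ by at most $1/n$, and (after the majorization reduction) has conditional variance $\sigma_j^2\le 2m/n^3$, so the Bernstein form of McDiarmid applied to $f(Y_1,\dots,Y_m)$ gives the needed tail with variance proxy $\sum_j\sigma_j^2=O(m^2/n^3)$. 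That change of decomposition is the missing idea.

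On the lower bound, your truncated-likelihood-ratio plus Chebyshev argument on the Paninski ensemble is a genuinely different route from the paper's (Poissonization, pseudo-distributions that nearly sum to one, per-coordinate Hellinger bound, and a sharpened Hellinger-to-TV inequality for distances near $1$). Both ultimately exploit that the per-pair information is $\Theta(m^2\eps^4/n^2)$. Your version avoids the pseudo-distribution machinery but must honestly address the multinomial dependence between pairs (``essentially independent'' is doing real work there), and the contribution of pairs with $\ge 3$ hits when $m$ is not $\ll n$; the paper's Poissonization makes both issues disappear by construction. As a sketch this is plausible, but it would need those two points made rigorous.
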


\new{ As we explain in Section~\ref{ssec:discussion},~\cite{HM13-it}
  gave a tester that achieves the optimal sample complexity when the
  sample size is $o(n)$.  However this tester {\em completely} fails
  with $\Omega(n)$ samples, as may be required when either $\eps$ or
  $\delta$ are sufficiently small.  Theorem~\ref{thm:main} provides a complete
  characterization of the worst-case sample complexity of the problem
  with a {\em single} statistic for all settings of parameters
  $n, \eps, \delta$.

\smallskip

\noindent {\bf Brief Overview of Techniques.} 
To analyze our tester, we introduce two new techniques for the
analysis of distribution testing statistics, which we describe in more
detail in Section \ref{ssec:techniques}. Our techniques leverage a
simple common property of numerous distribution testing statistics
which does not seem to have been previously exploited in their
analysis: their convexity.  Our first technique crucially exploits an
underlying strong convexity property to bound from below the
expectation gap between the completeness and soundness cases. We remark
that this is a contrast to most known distribution testers where
bounding the expectation gap is easy, and the challenge is in bounding
the variance of the statistic.

Our second technique implies a new, fast method for obtaining empirical estimates 
of the true worst-case failure probability of any member of a broad class of uniformity testing statistics. 
This class includes all uniformity testing statistics studied in the literature. Critically, these estimates 
come with provable guarantees about the worst-case failure probability of the statistic 
over all possible input distributions, and have tunable additive error. We elaborate in Section \ref{ssec:techniques}.
}

\subsection{Discussion and Prior Work} \label{ssec:discussion}
Uniformity testing is the first and one of the most well-studied problems in
distribution testing~\cite{GR00, Paninski:08, VV14, DKN:15, DiakonikolasGPP16}. 
As already mentioned, the literature has almost exclusively focused on the case of constant error probability 
$\delta$.  The first uniformity tester, introduced by Goldreich and Ron~\cite{GR00}, 
counts the number of collisions among the samples and was shown to work with
$O(\sqrt{n}/\eps^4)$ samples~\cite{GR00}.  A related tester proposed by Paninski~\cite{Paninski:08}, 
which relies on the number of distinct elements in the set of samples, was shown to have
the optimal $m = \Theta(\sqrt{n}/\eps^2)$ sample complexity, as long as $m = o(n)$.  
Recently, a chi-squared based tester was shown in~\cite{VV14, DKN:15} to
achieve the optimal $\Theta(\sqrt{n}/\eps^2)$ sample complexity without any
restrictions.  Finally, the original collision-based tester of~\cite{GR00} was very recently shown
to also achieve the optimal $\Theta(\sqrt{n}/\eps^2)$ sample
complexity~\cite{DiakonikolasGPP16}.  Thus, the situation for constant
values of $\delta$ is well understood.

{The problem of identity testing against an arbitrary (explicitly given) distribution was studied in~\cite{BFFKRW:01}, who gave an $(\eps, 1/3)$-tester
with sample complexity $\tilde{O}(n^{1/2})/\poly(\eps)$. The tight bound of $\Theta(n^{1/2}/\eps^2)$ was first given in~\cite{VV14} 
using a chi-squared type tester (inspired by~\cite{CDVV14}). In subsequent work, a similar chi-squared tester that also achieves 
the same sample complexity bound was given in~\cite{ADK15}.
(We note that the~\cite{VV14, ADK15} testers have sub-optimal sample complexity in the high confidence regime, even for the case of uniformity.)
In a related work, \cite{DKN:15} obtained a reduction of identity to uniformity 
that preserves the sample complexity, up to a constant factor, in the constant error probability regime.
More recently, Goldreich~\cite{Goldreich16}, building on~\cite{DK16}, gave a different reduction of identity to uniformity that preserves the error probability.
We use the latter reduction in this paper to obtain an optimal identity tester starting from our new optimal uniformity tester.
}

Since the sample complexity of identity testing is $\Theta(\sqrt{n}/\eps^2)$ for $\delta = 1/3$~\cite{VV14, DKN:15},
standard amplification gives a sample upper bound of $\Theta(\sqrt{n} \log(1/\delta)/\eps^2)$ for this problem.
It is not hard to observe that this naive bound cannot be optimal for all values of $\delta$. For example, in the extreme
case that $\delta = 2^{-\Theta(n)}$, this gives a sample complexity of $\Theta(n^{3/2}/\eps^2)$.
On the other hand, one can {\em learn} the underlying distribution 
(and therefore test for identity) with $O(n/\eps^2)$ samples for such values of $\delta$\footnote{This follows from the fact that, 
for any distribution $p$ over $n$ elements, the empirical probability distribution $\widehat{p}_m$ obtained after 
$m = \Omega((n+\log(1/\delta))/\eps^2)$ samples drawn from $p$ 
is $\eps$-close to $p$ in total variation distance with probability at least $1-\delta$.}.

{The case where $1 \gg \delta \gg 2^{-\Theta(n)}$ is more subtle,
  and it is not a priori clear how to improve upon naive
  amplification.  Theorem~\ref{thm:main} provides a smooth transition
  between the extremes of $\Theta(\sqrt{n}/\eps^2)$ for constant
$\delta$ and $\Theta(n/\eps^2)$ for $\delta = 2^{-\Theta(n)}$.  It
thus provides a quadratic improvement in the dependence on $\delta$
over the naive bound for all $\delta \geq 2^{-\Theta(n)}$, and shows
that this is the best possible.  For $\delta < 2^{-\Theta(n)}$, it
turns out that the additive $\Theta(\log(1/\delta)/\eps^2)$ term is
necessary, as outlined in Section~\ref{ssec:techniques}, so learning
the distribution is optimal for such values of $\delta$.}

We obtain the first sample-optimal
{\em uniformity tester} for the high confidence regime. Our
sample-optimal identity tester follows from our uniformity tester by applying the recent result of
Goldreich~\cite{Goldreich16}, which provides a black-box reduction of identity to uniformity. 
We also show a matching information-theoretic lower bound on the sample complexity.

The sample-optimal uniformity tester we introduce is remarkably
simple: to distinguish between the cases that $p$ is the uniform distribution $U_n$ over $n$ elements 
versus $\dtv(p, U_n) \geq \eps$, we simply compute
$\dtv(\wh{p}, U_n)$ for the empirical distribution $\wh{p}$.
The tester accepts that $p = U_n$ if the value of this statistic is below some
well-chosen threshold, and rejects otherwise.

It should be noted that such a tester 
was not previously known to work with sub-learning sample complexity---i.e., fewer than $\Theta(n/\epsilon^2)$ samples---%
even in the constant confidence regime. Surprisingly, in a literature with several different
uniformity testers~\cite{GR00, Paninski:08, VV14, DKN:15}, no one has previously used the empirical total variation distance. {On the contrary, it would be natural 
to assume---as was suggested in~\cite{BFR+:00, Batu13}---that this tester cannot possibly work.} 
A likely reason for this is the following observation: When
the sample size $m$ is smaller than the domain size  $n$, the empirical total variation 
distance is very far from the true distance to uniformity. 
This suggests that the empirical distance statistic gives little, if any, information in this setting.

Despite the above intuition, we prove that the natural ``plug-in'' estimator
relying on the empirical distance from uniformity actually works for the following reason:
the empirical distance from uniformity is {\em noticeably smaller} for the uniform distribution 
than for ``far from uniform'' distributions, {\em even with a sub-linear sample size}. 
Moreover, we obtain the stronger statement that the ``plug-in'' estimator is 
a sample-optimal uniformity tester for all parameters $n$, $\eps$ and $\delta$.

In~\cite{HM13-it}, it was shown that the distinct-elements tester
of~\cite{Paninski:08} achieves the optimal sample complexity of
$m = \Theta(\sqrt{n \log(1/\delta)}/\eps^2)$ , as long as $m = o(n)$.
When $m = \Omega(n)$, as is the case in many practically relevant
settings (see, e.g., the Polish lottery example in \cite{Rubinfeld14}
with $n< \sqrt{n}/\eps^2 \new{\ll n/\eps^2}$), this tester is known to
fail completely even in the constant confidence regime.  \new{On the
  other hand, in such settings the sample size is {\em not}
  sufficiently large so that we can actually {\em learn} the
  underlying distribution.}

It is important to note that {\em all} previously considered
uniformity testers~\cite{GR00, Paninski:08, VV14, DKN:15} do not
achieve the optimal sample complexity (as a function of all
parameters, including $\delta$), and this is {\em inherent}, i.e., not
just a failure of previous analyses.  Roughly speaking, since the
collision statistic~\cite{GR00} and the chi-squared based
statistic~\cite{VV14, DKN:15} are not Lipschitz, it can be shown that
their high-probability performance is poor.  Specifically, in the
completeness case ($p = U_n$), if many samples happen to land in the
same bucket (domain element), these test statistics become quite
large, leading to their suboptimal behavior {\em for all}
$\delta = o(1)$. {(For a formal justification, the reader is referred
  to Section V of~\cite{HM13-it}).}  On the other hand, the
distinct-elements tester~\cite{Paninski:08} does not work for
$m = \omega(n)$.  For example, if $\eps$ or $\delta$ are sufficiently
small to necessitate $m \gg n \log n$, then typically all $n$ domain
elements will appear in both the completeness and soundness cases,
hence the test statistic provides no information.

\subsection{Our Techniques} \label{ssec:techniques}

\subsubsection{Upper Bound for Uniformity Testing} We would like to show
that the test statistic $\dtv(\wh{p}, U_n)$ is with high
probability larger when $\dtv(p, U_n) \geq \eps$ than when
$p = U_n$.  We start by showing that among all possible
alternative {distributions $p$  with $\dtv(p, U_n) \geq \eps$}, it suffices to consider those in a very
simple family.  We then show that the test statistic is highly
concentrated around its expectation, and that the expectations are
significantly different in the two cases. The main technical components of our paper are our techniques for accomplishing these tasks.

To simplify the structure of $p$, we show in {(Section~\ref{sec:sd})}
that if $p$ majorizes another distribution $q$, then the test
statistic $\dtv(\wh{p}, U_n)$ stochastically dominates
$\dtv(\wh{q}, U_n)$.  (In fact, this statement holds for {\em any}
test statistic that is a convex symmetric function of the empirical
histogram.)  Therefore, for any $p$, if we average out the large and
small entries of $p$, the test statistic becomes harder to distinguish
from uniform.

We remark as a matter of independent interest that this stochastic
domination lemma immediately implies a fast algorithm for performing
rigorous empirical comparisons of test statistics.  A major difficulty
in empirical studies of distribution testing is that it is not
possible to directly check the failure probability of a tester over
every possible distribution as input, because the space of such
distributions is quite large. Our {structural} lemma reduces the
search space dramatically for uniformity testing: for any convex
symmetric test statistic (which includes all existing ones), the worst
case distribution will have $\alpha n$ coordinates of value
$(1 + \eps/\alpha)/n$, and the rest of value
$(1 - \eps/(1-\alpha))/n$, for some $\alpha$.  Hence, there are only
$n$ possible worst-case distributions for any $\eps$. Notably, this
reduction does not lose anything, so it could be used to identify the
non-asymptotic optimal constants that a distribution testing statistic
achieves for a given set of parameters.

Returning to our uniformity tester, at the cost of a constant factor
in $\eps$ we can assume $\alpha = 1/2$.  As a result, we only need to
consider $p$ to be either $U_n$ or of the form $\frac{1 \pm \eps}{n}$
in each coordinate.  We now need to separate the expectation of the
test statistic in these two situations.  The challenge is that both
expectations are large, and we do not have a good analytic handle on
them.  We therefore introduce a new technique for showing a separation
between the completeness and soundness cases that utilizes the strong
convexity of the test statistic.  Specifically, we obtain an explicit
expression for the Hessian of the expectation, as a function of
$p$. The Hessian is diagonal, and for our two situations of
$p_i \approx 1/n$ each entry is within constant factors of the same
value, giving a lower bound on its eigenvalues.  Since the expectation
is minimized at $p = U_n$, strong convexity implies an expectation
gap. Specifically, we prove that this gap is
$\eps^2 \cdot \min(m^2/n^2, \sqrt{m/n}, 1/\eps)$.

Finally, we need to show that the test statistic concentrates about
its expectation.  For $m \geq n$, this follows from McDiarmid's inequality:
since the test statistic is $1/m$-Lipschitz in the $m$ samples, with
probability $1 - \delta$ it lies within $\sqrt{\log(1/\delta) / m}$ of
its expectation. When $m$ is larger than the desired sample
complexity given in~\eqref{eq:samplecomplexity}, this is less than the
expectation gap above. The concentration is trickier when $m < n$,
since the expectation gap is smaller, so we need to establish tighter concentration.
We get this by using a Bernstein variant of McDiarmid's inequality, {which is stronger
than the standard version of McDiarmid in this context.} We note that the use of the stochastic domination is also crucial here. Since our statistic is a symmetric convex function of the histogram values, we can use lemma \ref{lm:general domination} to assume without loss of generality that the soundness case distribution has possible probability mass values exclusively in the set $\{\frac{1+\varepsilon^\prime}{n},\frac1n,\frac{1-\varepsilon^\prime}{n}\}$, for some $\varepsilon^\prime = O(\varepsilon)$. This distribution has a stronger Lipschitz-type property than the other soundness case distributions. Therefore, we are able to use a stronger concentration bound via McDiarmid's inequality and argue that even though other soundness case distributions may have weaker concentration, they still have smaller error due to our stochastic domination argument.

\subsubsection{Upper Bound for Identity Testing} In~\cite{Goldreich16}, it was
shown how to reduce $\eps$-testing of an arbitrary distribution $q$
over $[n]$ to $\eps/3$-testing of $U_{6n}$.  This reduction preserves
the error probability $\delta$, so applying it gives an identity tester 
with the same sample complexity as our uniformity tester, up to constant factors.

\subsubsection{Sample Complexity Lower Bound} To match our upper
bound~\eqref{eq:samplecomplexity}, we need two lower bounds.  The
lower bound of $\Omega(\frac{1}{\eps^2} \log (1/\delta))$ is
straightforward from the same lower bound as for distinguishing a fair
coin from an $\eps$-biased coin, while the
$\sqrt{n \log (1/\delta)}/\eps^2$ bound is more challenging.

For intuition, we start with a $\sqrt{n \log(1/\delta)}$ lower bound
for constant $\eps$.  When $p = U_n$, the chance that all $m$ samples
are distinct is at least $(1-m/n)^m \approx e^{-m^2/n}$.  Hence, if
$m \ll \sqrt{n \log(1/\delta)}$, this would happen with {probability}
significantly larger than $2\delta$.  On the other hand, if $p$ is
uniform over a random subset of $n/2$ coordinates, the $m$ samples
will also all be distinct with probability $(1 - 2m/n)^m > 2\delta$.
The two situations thus look the same with $2\delta$ probability, so
no tester could have accuracy $1-\delta$.

This intuition can easily be extended to include a $1/\eps$
dependence, but getting the desired $1/\eps^2$ dependence requires
more work.  First, we Poissonize the number of samples, so we
independently see $\Poi(m p_i)$ samples of each coordinate $i$; with
exponentially high probability, this Poissonization only affects the
sample complexity by constant factors.  Then, in the alternative
hypothesis, we set each $p_i$ independently at random to be
$\frac{1 \pm \eps}{n}$.  This has the unfortunate property that $p$ no
longer sums to $1$, so it is a ``pseudo-distribution'' rather than an
actual distribution.  Still, it is exponentially likely to sum to
$\Theta(1)$, and using techniques from~\cite{WY16,DK16} this is
sufficient for our purposes.

At this point, we are considering a situation where the number of
times we see each coordinate is either $\Poi(m/n)$ or
$\frac{1}{2}(\Poi((1-\eps)\frac{m}{n}) + \Poi((1+\eps)\frac{m}{n}))$,
and every coordinate is independent of the others.  These two
distributions have Hellinger distance at least $\eps^2m/n$ in each
coordinate. 
Then the composition property for Hellinger distance 
over $n$ independent coordinates implies
$m \geq \sqrt{n \log(1/\delta)}/\eps^2$ is necessary for success
probability $1-\delta$.

\subsection{Notation}

We write $[n]$ to denote the set $\{1, \ldots, n\}$.
We consider discrete distributions over $[n]$, which are functions
$p: [n] \rightarrow [0,1]$ such that $\sum_{i=1}^n p_i =1.$
We use the notation $p_i$ to denote the probability of element
$i$ in distribution $p$. {For $S \subseteq [n]$, we will denote $p(S) = \sum_{i \in S} p_i$.}
We will also sometimes think of $p$ as an $n$-dimensional vector. We will denote by $U_n$ the uniform distribution over $[n]$.

For $r \ge 1$, the $\ell_r$-norm of a distribution is identified with the $\ell_r$-norm of the corresponding vector, i.e.,
$\|p\|_r = \left(\sum_{i=1}^n |p_i|^r\right)^{1/r}$. The $\ell_r$-distance between distributions
$p$ and $q$ is defined as the  the $\ell_r$-norm of the vector of their difference. 
{The total variation distance between distributions $p$ and $q$ is defined as $\dtv(p, q) \eqdef \max_{S \subseteq [n]} |p(S) - q(S)| = (1/2) \cdot \|p-q\|_1$.
The Hellinger distance between $p$ and $q$ is 
$H(p, q) \eqdef (1/\sqrt{2}) \cdot \| \sqrt{p} - \sqrt{q} \|_2 = (1/\sqrt{2}) \cdot \sqrt{\sum_{i=1}^n (\sqrt{p_i} - \sqrt{q_i})^2}$.
We denote by $\mathrm{Poi}(\lambda)$ the Poisson distribution with parameter $\lambda$.
}

{
\subsection{Structure of this Paper} 
In Section~\ref{sec:uniform}, we formally describe and analyze our sample-optimal uniformity tester.
In Section~\ref{sec:lb}, we give our matching sample complexity lower bound.
Finally, Section~\ref{sec:sd} establishes our stochastic domination result that is crucial for the analysis
of the soundness in Section~\ref{sec:uniform}, and may be useful in the rigorous empirical evaluation
of test statistics.
}

\section{Sample-Optimal Uniformity Testing} \label{sec:uniform}
In this section, we describe and analyze our optimal uniformity tester.
Given samples from an unknown distribution $p$ over $[n]$, 
our tester returns ``YES'' with probability $1-\delta$ if $p = U_ n$, 
and ``NO'' with  probability $1-\delta$ if $\dtv(p,U_n)\geq \eps$. 

\subsection{Our Test Statistic} \label{ssec:empirical-test}
We define a very natural statistic that yields a uniformity tester with optimal 
dependence on the domain size $n$, the proximity parameter $\varepsilon$, and the error probability $\delta$. 
Our statistic is a thresholded version of the empirical total variation distance between the unknown distribution $p$ and the uniform distribution. 
Our tester \textsc{Test-Uniformity} is described in the following pseudocode:

\medskip
\fbox{\parbox{6in}{

{\bf Algorithm} \textsc{Test-Uniformity$(p, n, \eps,\delta)$} \\
Input: sample access to a distribution $p$ over $[n]$, $\eps>0$, and $\delta>0$.\\
Output: ``YES'' if $p =U_n$; ``NO'' if $\dtv(p, U_n) \geq \eps.$
\begin{enumerate}
  \item Draw $m=\Theta\left(  (1/\eps^2) \cdot \Big(\sqrt{n \log (1/\delta)} + \log(1/\delta)\Big)\right)$ i.i.d. samples from $p$.
  
  \item Let $X=(X_1,X_2,\dots , X_n) \in \Z_{>0}^n$ be the histogram of the samples. 
  That is, $X_i$ is the number of times domain element $i$ appears in the (multi-)set of samples. 
  
  \item Define the random variable $S = \frac12 \sum_{i=1}^n \left\vert \frac{X_i}{m}-\frac1n \right\vert$ 
  and set the threshold \[t=\mu(U_n)+C \cdot
   \begin{cases}
   \epsilon^2 \cdot \frac{m^2}{n^2} & \textrm{ for } m \leq n\\
   \epsilon^2 \cdot \sqrt{\frac{m}{n}} & \textrm{ for } n < m \leq \frac{n}{\epsilon^2} \\
   \epsilon & \textrm{ for } \frac{n}{\epsilon^2} \leq m\\ 
   \end{cases} \;, \] where $C$ is a universal constant (derived from the analysis of the algorithm), 
  and $\mu(U_n)$ is the expected value of the statistic in the completeness case. (We can compute $\mu(U_n)$ in $O(m)$ time using the procedure in Appendix~\ref{app:exp}.) 

 \item If $S \ge t$ return ``NO''; otherwise, return ``YES''.
\end{enumerate}
}}

\bigskip

The main part of this section is devoted to the analysis of  \textsc{Test-Uniformity}, 
establishing the following theorem:

\begin{theorem}\label{thm:main-unif}
There exists a universal constant $C>0$ such that the following holds:
Given $$m \geq C \cdot (1/\eps^2) \left(\sqrt{n \log (1/\delta)} + \log(1/\delta)\right)$$ samples from an unknown distribution $p$,
Algorithm \textsc{Test-Uniformity} is an $(\eps, \delta)$-tester for uniformity of distribution $p$.
\end{theorem}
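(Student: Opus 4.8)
The plan is to analyze the statistic $S=\dtv(\wh p,U_n)$ directly, in three ingredients: a structural reduction to a canonical hard instance, a lower bound on the gap between $\mu(p):=\E[S]$ in the completeness and soundness cases, and a concentration bound for $S$ about its mean. Throughout, write $g=g(m,n,\eps)$ for the piecewise quantity in the threshold, so that $t=\mu(U_n)+C\cdot g$. Since $S$ is a symmetric convex function of the histogram, the stochastic domination result of Section~\ref{sec:sd} (Lemma~\ref{lm:general domination}) reduces the soundness case to a canonical family of ``least spread-out'' alternatives: it suffices to prove $\Pr[S\ge t]\ge 1-\delta$ when $p$ takes values in $\{(1+\eps')/n,\,1/n,\,(1-\eps')/n\}$ for some $\eps'=\Theta(\eps)$, and since shrinking the perturbation only makes the instance easier in the majorization order we may additionally take $\eps'\le 1/2$. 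Consequently every distribution we must handle --- $U_n$ for completeness and this canonical $p$ for soundness --- has $\max_i p_i\le 2/n$, which is precisely the property the concentration argument will exploit.

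\emph{Expectation gap.} By linearity of expectation $\mu(p)=\sum_{i=1}^n f(p_i)$ with $f(x)=\tfrac12\,\E_{X\sim\Binom(m,x)}\bigl|\tfrac Xm-\tfrac1n\bigr|$, and $f$ is convex because binomial averaging preserves convexity of $k\mapsto|k/m-1/n|$ (equivalently $f''(x)=\tfrac{m(m-1)}{2}\,\E_{\Binom(m-2,x)}[\Delta^2]$ with $\Delta^2\ge 0$ the discrete second difference). By Jensen $\mu(p)$ is minimized at $U_n$, and since $\sum_i(p_i-1/n)=0$, a second-order expansion of each $f(p_i)$ gives $\mu(p)-\mu(U_n)\ge\tfrac12\bigl(\min_y f''(y)\bigr)\,\|p-U_n\|_2^2$, with $y$ ranging over $[(1-\eps')/n,(1+\eps')/n]\subseteq[1/(2n),3/(2n)]$. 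I would evaluate the second-difference formula to show $f''(y)=\Theta(m^2/n)$ for $m\le n$ and $f''(y)=\Theta(\sqrt{mn})$ for $n\le m\le n/\eps^2$; combined with $\|p-U_n\|_2^2=\Theta(\eps^2/n)$ for the canonical instance, this yields expectation gaps $\Omega(\eps^2 m^2/n^2)$ and $\Omega(\eps^2\sqrt{m/n})$ respectively. For $m\ge n/\eps^2$ the quadratic regime is no longer the right scale; there I would use $\mu(p)\ge\dtv(p,U_n)-\E[\dtv(\wh p,p)]\ge\eps'-\tfrac12\sqrt{n/m}$ together with $\mu(U_n)=\E[\dtv(\wh p,U_n)]\le\tfrac12\sqrt{n/m}$ (both from Cauchy--Schwarz on $\sum_i\E|X_i/m-p_i|$), giving a gap $\Omega(\eps)$. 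In every case the gap is a constant multiple of $g$, so taking $C$ a small enough constant makes $t$ lie at least $\tfrac12 g$ above $\mu(U_n)$ and at least $\Omega(g)$ below $\mu(p)$ in the soundness case; both sides of the test then reduce to showing $\Pr[|S-\mu|\ge Cg]\le\delta$.

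\emph{Concentration.} Changing one sample changes $S$ by at most $1/m$, so for $m\ge n$ McDiarmid gives $\Pr[|S-\mu|\ge t']\le 2e^{-2mt'^2}$, and substituting $t'=Cg$ one checks that $m=\Omega\bigl((\sqrt{n\log(1/\delta)}+\log(1/\delta))/\eps^2\bigr)$ suffices (using $g=\eps^2\sqrt{m/n}$ for $n\le m\le n/\eps^2$ and $g=\eps$ for $m\ge n/\eps^2$). The case $m<n$ is the delicate one: plain McDiarmid loses a polynomial factor, and here I would use a Bernstein-type strengthening of the bounded-differences inequality. The key observation is that for $m<n$, resampling one coordinate changes $S$ by exactly $1/(2m)$ unless the destination bin was previously empty, in which case it changes by $1/(2m)-1/n$; hence if $\pi$ is the current probability mass on empty bins, the conditional variance of $S$ under that resampling is $\tfrac1{n^2}\pi(1-\pi)\le\tfrac1{n^2}(1-\pi)$, and $1-\pi\le m\max_i p_i\le 2m/n$ since at most $m-1$ bins are nonempty (and $S$ in fact changes by at most $1/n$). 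Summing over the $m$ samples, the variance proxy is $v=O(m^2/n^3)$, so Bernstein--McDiarmid gives $\Pr[|S-\mu|\ge t']\le 2\exp\!\bigl(-\Omega(t'^2 n^3/m^2)\bigr)$ for $t'=O(m^2/n^2)$; with $t'=Cg=C\eps^2m^2/n^2$ the exponent is $\Omega(\eps^4 m^2/n)$, which is $\Omega(\log(1/\delta))$ exactly when $m=\Omega(\sqrt{n\log(1/\delta)}/\eps^2)$. Here the Step~1 reduction is essential: the estimate $1-\pi\le 2m/n$ needs $\max_i p_i=O(1/n)$, true for $U_n$ and for the canonical soundness instance but not for an arbitrary $\eps$-far $p$; for the latter, stochastic domination transfers the lower-tail bound from the canonical instance.

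\emph{Main obstacle.} The heart of the proof is the $m<n$ concentration bound: the statistic is only $1/m$-Lipschitz, so McDiarmid alone is too weak, and the fix --- recognizing that the per-sample fluctuation actually has variance $\Theta(m/n^3)$ rather than $\Theta(1/m^2)$, because the change is almost always the fixed value $1/(2m)$ and only with probability $\Theta(m/n)$ perturbed by $\Theta(1/n)$ --- depends on the bounded-mass property, and hence on the stochastic-domination reduction being woven through both the soundness expectation gap and its concentration. A secondary difficulty is extracting the sharp $1/\eps^2$ (rather than $1/\eps$) dependence in the expectation gap, which is what forces the explicit Hessian computation rather than a cruder bound.
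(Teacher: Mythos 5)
Your proposal follows essentially the same route as the paper's proof: the same stochastic-domination reduction to an alternative with all masses $O(1/n)$, the same strong-convexity mechanism for the expectation gap (your discrete-second-difference identity $f'' = m(m-1)\,\E_{\Binom(m-2,\cdot)}[\Delta^2\phi]$ is exactly the paper's generating-function formula for the Hessian entry $s_{t,i}$ in a different notation, both localizing at the kink $k\approx m/n$), and the same split into plain McDiarmid for $m\ge n$ and a Bernstein form of McDiarmid for $m<n$ with variance proxy $\sigma_j^2=O(m/n^3)$ derived from the bounded-mass property the reduction secures. The minor cosmetic differences (your per-sample accounting of $1/(2m)$ vs.\ $1/(2m)-1/n$ is the same information as the paper's identity $S=\tfrac1n|\{i:X_i=0\}|$ for $m\le n$; you use the empirical-learning bound for all of $m\ge n/\eps^2$ where the paper uses the Hessian argument up to $4n/\eps^2$) do not change the structure of the argument.
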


{
As we point out in Appendix~\ref{app:exp}, the value $\mu(U_n)$ can be computed efficiently, hence our overall tester
is computationally efficient.}
To prove correctness of the above tester, we need to show that the expected value of the statistic 
in the completeness case is sufficiently separated from the expected value in the soundness case, 
and also that the value of the statistic is highly concentrated around its expectation in both cases. 
In Section~\ref{ssec:exp-gap}, we bound from below the difference in the expectation
of our statistic in the completeness and soundness cases. In Section~\ref{ssec:concen}, 
we prove the desired concentration which completes the proof of Theorem~\ref{thm:main-unif}.
  
\subsection{Bounding the Expectation Gap} \label{ssec:exp-gap}
The expectation of the statistic in algorithm \textsc{Test-Uniformity} can be viewed 
as a function of the $n$ variables $p_1,\dots ,p_n$. 
We denote this expectation by $\mu(p) \eqdef \mathbb{E}[S(X_1, \ldots, X_n)]$ 
when the samples are drawn from distribution $p$.  

{Our analysis has a number of complications for the following reason:}
the function $\mu(p)-\mu(U_n)$ is a linear combination of sums 
that have no indefinite closed form, even if the distribution $p$ assigns only two possible 
probabilities to the elements of the domain. This statement is made precise in Appendix~\ref{app:no-closed-form}. 
As such, we should only hope to obtain an approximation of this quantity.

A natural approach to try and obtain such an approximation would be to produce 
separate closed form approximations for $\mu(p)$ and $\mu(U_n)$, and combine these quantities to obtain an approximation for their difference.
However, one should not expect such an approach to work in our context.
The reason is that the difference $\mu(p)-\mu(U_n)$ can be much smaller than $\mu(p)$ and $\mu(U_n)$; it can even be arbitrarily small.
As such, obtaining separate approximations of $\mu(p)$ and $\mu(U_n)$ to any fixed accuracy would contribute too much error to their difference.

To overcome these difficulties, we introduce the following technique, which is novel in this context.
We directly bound from below the difference $\mu(p)-\mu(U_n)$ using {\em strong convexity}. 
{Specifically, we show that the function $\mu$ is strongly convex with appropriate parameters and use this fact
to bound the desired expectation gap. The main result of this section is the following lemma:}

\begin{lemma}\label{lem:expectation-gap}
Let $p$ be a distribution over $[n]$ and $\epsilon = \dtv(p, U_n)$. 
{For all $m \geq 6$ and $n \geq 2$}, we have that:
$$
\mu(p) - \mu(U_n) \geq \Theta(1) \cdot
\begin{cases}
\epsilon^2 \cdot \frac{m^2}{n^2} & \textrm{ for } m \leq n\\
\epsilon^2 \cdot \sqrt{\frac{m}{n}} & \textrm{ for } n < m \leq \frac{n}{\epsilon^2} \\ 
\epsilon & \textrm{ for } \frac{n}{\epsilon^2} \leq m
\end{cases} \;.
$$
\end{lemma}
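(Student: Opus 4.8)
The plan rests on the observation that $\mu$ is \emph{separable}. Since the histogram of $m$ i.i.d.\ samples from $p$ has $\Bin(m,p_i)$ as the marginal law of its $i$-th coordinate and $S$ is a sum of per-coordinate contributions, $\mu(p)=\sum_{i=1}^n f(p_i)$ where $f(x)\eqdef\tfrac12\,\E_{Y\sim\Bin(m,x)}\bigl|\tfrac Ym-\tfrac1n\bigr|$. As $f$ is a Bernstein-type average of the convex map $t\mapsto\tfrac12|t-\tfrac1n|$, it (hence $\mu$) is convex; being also symmetric, $\mu$ is Schur-convex, is minimized over the simplex at $U_n$, and has gradient orthogonal to the simplex there. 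By the stochastic-domination result of \secref{sec:sd} (Lemma~\ref{lm:general domination}), $p\succeq q$ in the majorization order implies $\mu(p)\ge\mu(q)$, so it suffices to prove the bound for $p$ majorization-minimal among distributions with $\dtv(p,U_n)\ge\eps$, i.e.\ a two-level distribution with $\alpha n$ coordinates equal to $(1+\eps/\alpha)/n$ and $(1-\alpha)n$ equal to $(1-\eps/(1-\alpha))/n$ for some $\alpha\in(0,1)$; the case $\eps=\Omega(1)$ follows by monotonicity from a fixed constant.

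The engine is an exact formula for the Hessian. Applying the identity $\tfrac{d}{dx}\E_{Y\sim\Bin(m,x)}[h(Y)]=m\,\E_{Y\sim\Bin(m-1,x)}[h(Y{+}1)-h(Y)]$ twice, and using that the second finite difference of $|\cdot|$ is the triangle kernel $z\mapsto(1-|z|)_+$, gives $\nabla^2\mu(p)=\mathrm{diag}(f''(p_1),\dots,f''(p_n))$ with
\[
f''(x)\;=\;(m-1)\cdot\E_{Y\sim\Bin(m-2,x)}\bigl[(1-|Y+1-m/n|)_+\bigr].
\]
For $m\le n$ all the absolute values in the definition of $f$ resolve the same way, collapsing to the closed form $\mu(p)=\tfrac1n\sum_i(1-p_i)^m$, so $f''(x)=\tfrac{m(m-1)}n(1-x)^{m-2}$. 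In general, Taylor's theorem about $U_n$ (the linear term vanishing) yields
\[
\mu(p)-\mu(U_n)\;=\;\sum_{i=1}^n \Bigl(p_i-\tfrac1n\Bigr)^2\int_0^1(1-s)\,f''\!\bigl(\tfrac1n+s(p_i-\tfrac1n)\bigr)\,ds,
\]
a sum of nonnegative terms, so we may discard coordinates far from $1/n$ and lower-bound the rest by localizing the inner integral.

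The three regimes are then handled as follows. For $m\le n$, keep only the coordinates with $p_i<1/n$: there $(1-x)^{m-2}=\Theta(1)$ on the relevant intervals, so $f''\ge\Omega(m^2/n)$, and since $\sum_{p_i<1/n}(\tfrac1n-p_i)=\dtv(p,U_n)=\eps$ over at most $n$ terms, Cauchy--Schwarz gives $\sum_{p_i<1/n}(\tfrac1n-p_i)^2\ge\eps^2/n$, whence $\mu(p)-\mu(U_n)=\Omega(\eps^2m^2/n^2)$ for \emph{every} $p$. For $m>n$, $f''$ is unimodal around $1/n$ rather than monotone, so I invoke the two-level reduction: the displayed $f''$ is a triangle-weighted average of the pmf of $\Bin(m-2,x)$ near $m/n-1$, and when $x$ lies in the band $|x-1/n|=O(\sqrt{n/m}/n)$ that binomial has mean $\Theta(m/n)$ and standard deviation $\Theta(\sqrt{m/n})$ with target point within $O(1)$ standard deviations, so $f''(x)=\Omega\!\bigl(m/\sqrt{m/n}\bigr)=\Omega(\sqrt{mn})$ there; this band contains the relevant level of the worst-case two-level $p$ exactly when $m=O(n/\eps^2)$. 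Indeed for $\alpha\le\tfrac12$ the low coordinates sit at distance $\tfrac{\eps}{(1-\alpha)n}\le\tfrac{2\eps}n$ from $1/n$, inside the band, and contribute $\ge\Omega(\sqrt{mn})\cdot(1-\alpha)n\cdot(\tfrac{\eps}{(1-\alpha)n})^2\ge\Omega(\eps^2\sqrt{m/n})$; the case $\alpha>\tfrac12$ is symmetric via the high coordinates. This covers $n<m=O(n/\eps^2)$, with a constant-factor overshoot of $n/\eps^2$ allowed at the price of a worse (still universal) constant. Finally, for $m$ a large multiple of $n/\eps^2$, strong convexity is no longer the right tool; there I argue elementarily that $\mu(p)\ge\dtv(p,U_n)-\E[\dtv(\wh p,p)]\ge\eps-\tfrac12\sqrt{n/m}$ and $\mu(U_n)\le\tfrac12\sqrt{n/m}$, both via $\E\|\wh p-p\|_1\le\sum_i\sqrt{p_i/m}\le\sqrt{n/m}$, so that $\mu(p)-\mu(U_n)\ge\eps/2$.

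The step I expect to be the crux is the $f''$ estimate for $m>n$ together with the stitching of the regimes: obtaining a lower bound on the shifted binomial pmf that is \emph{uniform} over an $x$-interval around $1/n$ which is simultaneously wide enough to contain the relevant level of the worst-case two-level distribution and narrow enough that the pmf has not yet decayed, and pinning down the precise threshold $m=\Theta(n/\eps^2)$ at which this band degenerates and one must hand off to the elementary argument --- all while keeping the implied constants from blowing up as $\alpha$ sweeps across $(0,1)$.
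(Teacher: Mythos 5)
Your proposal follows essentially the same route as the paper's proof: Taylor-expand $\mu$ about $U_n$ with the gradient term vanishing by symmetry, lower-bound the (diagonal) Hessian of $\mu$, reduce to a two/three-level distribution via the majorization/stochastic-domination result of Section~\ref{sec:sd}, and hand off to the elementary $\E\|\wh p-p\|_1\le\sqrt{n/m}$ learning bound once $m\gtrsim n/\eps^2$. Your internal execution is somewhat cleaner in places --- the finite-difference identity derivation of $f''$ in lieu of the paper's generating-function computation in Lemma~\ref{lem:hessian-exact}, the closed form $\mu(p)=\frac1n\sum_i(1-p_i)^m$ plus Cauchy--Schwarz for $m\le n$ (which sidesteps the reduction entirely in that regime), and a local-CLT-style pmf estimate in place of Stirling in Lemma~\ref{lem:hessian-approx} --- but the decomposition and the key lemmas are the same.
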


{We note that the bounds in the right hand side above are tight, up to constant factors. Any 
asymptotic improvement would yield a uniformity tester with sample complexity that violates our tight
information-theoretic lower bounds.}

{The proof of Lemma~\ref{lem:expectation-gap} (which will be deferred to appendix \ref{appendix:exp-gap lemma}) 
requires a couple of important intermediate lemmas.}
Our starting point is as follows: By the intermediate value theorem, we have the quadratic expansion
\[
\mu(p)=\mu(U_n)+\nabla \mu(U_n)^\intercal (p-U_n ) + \frac12 (p-U_n)^\intercal H_{p'}(p-U_n) \;,
\]
where $H_{p'}$ is the Hessian matrix of the function $\mu$ at some point $p'$ which lies on the line segment between $U_n$ and $p$.
This expression can be simplified as follows: {First, we show (Fact~\ref{cor:mean-domination}) that} our $\mu$ 
is minimized over all probability distributions on input $U_n$. 
Thus, the gradient $\nabla \mu(U_n)$ must be orthogonal to being a {direction in the space of probability distributions}.
{In other words, $\nabla \mu(U_n)$ must be proportional to the all-ones vector.}
More formally, {since $\mu$ is symmetric its gradient is a symmetric function, which implies
it will be symmetric when given symmetric input.}
Moreover, $(p-U_n)$ is a direction within the space of probability distributions, 
and therefore sums to $0$, making it orthogonal to the all-ones vector. 
Thus, we have that 
$\nabla \mu(U_n)^\intercal (p-U_n )=0$, and we obtain
\begin{equation}\label{eq:sc-lb}
\mu(p) - \mu(U_n) = \frac12 (p-U_n)^\intercal H_{p'}(p-U_n) \geq \frac12 \|p-U_n\|_2^2 \cdot \sigma \geq  \frac{1}{2} \|p-U_n\|_1^2 / n \cdot \sigma \;,
\end{equation}
where $\sigma$ is the minimum eigenvalue of the Hessian of $\mu$ 
on the line segment between $U_n$ and $p$.

{The majority of this section is devoted to proving a lower bound for $\sigma$.}
Before doing so, however, we must first address a technical consideration. 
Because we are considering a function over the space of probability distributions---which is not full-dimensional---the Hessian and gradient of $\mu$ with respect to $\R^n$ depend not only on the definition of our statistic $S$, but also its parameterization. 
For the purposes of this subsection, we parameterize $S$ as \nnew{$S(x) = \sum_{i=1}^n \max \left\{ \frac{x_i}{m} - \frac{1}{n},0 \right\}= \frac{1}{m} \sum_{i=1}^n \max \left\{ x_i - \frac{m}{n},0 \right\}$.

In the analysis we are about to perform, it will be helpful to replace $\frac{m}{n}$ with a free parameter $t$ which we will eventually set back to roughly	 $m/n$. Thus, we define }
\[
S_t(x) \triangleq \frac{1}{m} \sum_{i=1}^n \max \{x_i - t,0\}
\]
and
\begin{equation}\label{eq:expectation-formula}
\mu_t(p) \triangleq \E_{x \sim \text{Multinomial}(m,p)}[S_t(x)] = \frac{1}{m}\sum_{i=1}^n \sum_{k=\lceil t \rceil}^m   \binom{m}{k}   p_i^k   (1-p_i)^{m-k}  (k-t) \;.
\end{equation}

Note that when $t=m/n$ we have $S_t = S$ and $\mu_t=\mu$. 
Also note that when we compute the Hessian of $\mu_t(p)$, we are treating $\mu_t(p)$ as a function of $p$ and not of $t$. 
In the following lemma, we derive an exact expression for the entries of the Hessian. This result is perhaps surprising 
in light of the likely nonexistence of a closed form expression for $\mu(p)$. That is, 
while the expectation $\mu(p)$ may have no closed form, we prove that the Hessian of $\mu(p)$ does in fact have a closed form.

\begin{lemma}\label{lem:hessian-exact}
The Hessian of $\mu_t(p)$ viewed as a function of $p$ is a diagonal matrix whose $i$th diagonal entry is given by
\[
h_{ii} =  s_{t,i} \;,
\]
where we define $s_{t,i}$ as follows: 
Let $\Delta t$ be the distance of $t$ from the next largest integer, i.e., $\Delta t \triangleq \lceil t \rceil - t$. 
Then, we have that
\[
s_{t,i} = 
\begin{cases} 
0 & \textrm{ for } t=0 \\ 
(m-1){m-2 \choose t-1}p_i^{t-1}(1-p_i)^{m-t-1} &  \textrm{ for }  t \in \mathbb{Z}_{>0} \\ 
\Delta t \cdot s_{\lfloor t \rfloor,i} + (1-\Delta t) \cdot s_{\lceil t \rceil,i} &  \textrm{ for } t \geq 0 \text{ and } t \not \in \Z
\end{cases} \;.
\]
\end{lemma}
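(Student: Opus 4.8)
The plan is to exploit the fact that $\mu_t$ is \emph{separable}. By linearity of expectation, and since the $i$-th coordinate of $\mathrm{Multinomial}(m,p)$ is marginally $\Bin(m,p_i)$, we have $\mu_t(p) = \frac{1}{m}\sum_{i=1}^n g(p_i)$ where $g(x) \eqdef \E_{X\sim\Bin(m,x)}[\max\{X-t,0\}]$; this is precisely the rearrangement of \eqref{eq:expectation-formula}. Because each summand depends on a single coordinate, every off-diagonal second partial derivative of $\mu_t$ vanishes, so the Hessian is diagonal with $h_{ii} = \frac{1}{m}\,g''(p_i)$, and it remains only to evaluate $g''$ in the three cases. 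The case $t=0$ is immediate: $g(x) = \E[\Bin(m,x)] = mx$ is affine, so $g''\equiv 0$ and $s_{0,i}=0$.

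For a positive integer $t$, I would start from the elementary identity $\max\{k-t,0\} = \sum_{\ell=t+1}^{m}\mathbf{1}[k\ge\ell]$, valid for all integers $k\in\{0,\dots,m\}$, which gives $g(x) = \sum_{\ell=t+1}^{m}\Pr[\Bin(m,x)\ge\ell]$. Then I would apply twice the classical binomial-CDF derivative identity $\frac{d}{dx}\Pr[\Bin(N,x)\ge\ell] = N\Pr[\Bin(N-1,x)=\ell-1]$ (itself a one-line telescoping of the term-by-term derivative of the binomial tail, using $j\binom{N}{j}=N\binom{N-1}{j-1}$ and $(N-j)\binom{N}{j}=N\binom{N-1}{j}$). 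Differentiating once yields $g'(x)=m\sum_{\ell=t+1}^{m}\Pr[\Bin(m-1,x)=\ell-1]$. Differentiating again, after writing each point mass as the difference $\Pr[\Bin(m-1,x)\ge\ell-1]-\Pr[\Bin(m-1,x)\ge\ell]$ and differentiating both tails, the sum over $\ell$ telescopes, leaving only $g''(x)=m(m-1)\Pr[\Bin(m-2,x)=t-1]=m(m-1)\binom{m-2}{t-1}x^{t-1}(1-x)^{m-t-1}$ (the top boundary term vanishes because $\Pr[\Bin(m-2,x)=m-1]=0$). Dividing by $m$ gives $s_{t,i}=(m-1)\binom{m-2}{t-1}p_i^{t-1}(1-p_i)^{m-t-1}$; the degenerate value $t=m$, where the sum defining $g$ is empty and $g''=0$, is consistent with the formula since then $\binom{m-2}{m-1}=0$.

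For non-integer $t\ge0$, the key observation is that, for fixed $p$, the map $t\mapsto\mu_t(p)$ is \emph{affine} on $[\lfloor t\rfloor,\lceil t\rceil]$. Indeed, for every integer $k$ the function $t\mapsto\max\{k-t,0\}$ is piecewise linear with a single breakpoint at $t=k$, and since $\lfloor t\rfloor$ and $\lceil t\rceil$ are consecutive integers no breakpoint lies in the interior of $[\lfloor t\rfloor,\lceil t\rceil]$; hence each term of $S_t(x)=\frac1m\sum_i\max\{x_i-t,0\}$, and therefore $\mu_t(p)=\E[S_t(x)]$, is affine in $t$ on that interval. Since $t=\Delta t\cdot\lfloor t\rfloor+(1-\Delta t)\cdot\lceil t\rceil$ expresses $t$ as a convex combination of $\lfloor t\rfloor$ and $\lceil t\rceil$, affinity gives $\mu_t(p)=\Delta t\cdot\mu_{\lfloor t\rfloor}(p)+(1-\Delta t)\cdot\mu_{\lceil t\rceil}(p)$; applying $\partial^2/\partial p_i^2$ to both sides yields $s_{t,i}=\Delta t\cdot s_{\lfloor t\rfloor,i}+(1-\Delta t)\cdot s_{\lceil t\rceil,i}$, which is the third case (with $\lfloor t\rfloor$, possibly $0$, covered by the first case and $\lceil t\rceil\ge1$ by the second). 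Alternatively, one can reach the same formula by repeating the telescoping computation above starting from the general decomposition $\max\{k-t,0\}=\sum_{\ell=\lceil t\rceil}^{m}\mathbf{1}[k\ge\ell]-(1-\Delta t)\mathbf{1}[k\ge\lceil t\rceil]$.

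I do not expect a genuine obstacle here: once separability reduces everything to the single-variable function $g$, the heart of the matter is the two telescoping sums, which are mechanical. The only points requiring care are bookkeeping the summation ranges, checking that the boundary terms in each telescoping vanish, verifying that the closed form degenerates correctly at $t=0,1$ and $t=m$, and invoking the standard binomial-CDF derivative identity with the correct parameters ($N=m$, then $N=m-1$).
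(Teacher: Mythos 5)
Your proof is correct, and it takes a genuinely different route from the paper's in the two substantive steps. For the integer-$t$ case, the paper views $s_{t,i}$ as a sequence in $t$, derives a closed-form generating function for it via formal power-series manipulations, and reads off the coefficient of $x^t$; you instead decompose $\max\{k-t,0\}=\sum_{\ell=t+1}^m\mathbf 1[k\ge \ell]$, apply the binomial-tail derivative identity $\frac{d}{dx}\Pr[\Bin(N,x)\ge \ell]=N\Pr[\Bin(N-1,x)=\ell-1]$ twice, and telescope. This is more elementary and probabilistic, makes the single surviving boundary term visible, and sidesteps the generating-function machinery at the cost of carrying the CDF-derivative identity. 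For non-integer $t$, the paper grinds through an algebraic manipulation of the truncated sum (shifting $k-t$ to $k-\lceil t\rceil+\Delta t$ and flipping the summation range twice); you observe that $t\mapsto\mu_t(p)$ is affine on $[\lfloor t\rfloor,\lceil t\rceil]$ because each $\max\{k-t,0\}$ has its only breakpoint at the integer $t=k$, then differentiate the resulting affine interpolation in $p_i$. That structural observation is cleaner and immediately explains \emph{why} the interpolation formula holds, whereas the paper's computation merely verifies it. Both arguments use the same separability reduction to diagonalize the Hessian; your write-up makes the single-variable function $g(x)=\E_{\Bin(m,x)}[\max\{\cdot-t,0\}]$ explicit, which the paper leaves implicit.
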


In other words, we will derive the formula for integral $t \geq 1$ and then prove that the value for nonintegral $t \geq 0$ can be found by linearly interpolating between the closest integral values of $t$.

\begin{proof}
Note that because $S_t(x)$ is a separable function of $x$, $\mu_t(p)$ is a separable function of $p$, 
and hence the Hessian of $\mu_t(p)$ is a diagonal matrix. 
By Equation~\ref{eq:expectation-formula}, the $i$-th diagonal entry of this Hessian can be written explicitly as the following expression:
\[
s_{t,i} = \frac{\partial^2}{\partial p_i^2}\mu_t(p)=\frac{d^2}{dp_i^2}\frac{1}{m}\sum_{k=\lceil t\rceil}^m   \binom{m}{k}   p_i^k   (1-p_i)^{m-k}  (k-t) \;.
\]
Notice that if we sum starting from $k=0$ instead of $k=\lceil t \rceil$, then the sum equals the expectation of $\text{Bin}(m,p_i)$ minus $t$. That is, notice that:
\[
\frac{\mathrm{d}^2}{\mathrm{d}p_i^2} \frac{1}{m} \sum_{k=0}^m   \binom{m}{k}   p_i^k   (1-p_i)^{m-k}  (k-t) =  \frac{\mathrm{d}^2}{\mathrm{d}p_i^2} \frac{1}{m} (p_i m - t) = 0 \;.
\]
By this observation and the fact that the summand is $0$ for integer $t$ when $k=t$, 
we can switch which values of $k$ we are summing over to $k$ from $0$ through $\lfloor t \rfloor$ if we negate the expression:
\[
s_{t,i} = \frac{\partial^2}{\partial^2p_i}\mu_t(p)=\frac{1}{m}  \frac{\mathrm{d}^2}{\mathrm{d}p_i^2} \sum_{k=0}^{\lfloor t \rfloor}   \binom{m}{k}   p_i^k   (1-p_i)^{m-k}  (t-k) \;.
\]

We first prove the case when $t \in \mathbb{Z}_{+}$. In this case, we view $s_{t,i}$ as a sequence with respect to $t$ (where $i$ is fixed), which we denote $s_t$. We now derive a generating function for this sequence.\footnote{To avoid potential convergence issues, we view generating functions as formal polynomials from the ring of infinite formal polynomials. Under this formalism, there is no need to deal with convergence at all.} Observe that derivatives that are not with respect to the formal variable commute with taking generating functions. Then, the generating function for the sequence $\{s_t\}$ is
\[
\frac{\mathrm{d}^2}{\mathrm{d}p_i^2} \frac{1}{m} \left( x\frac{\mathrm{d}}{\mathrm{d}x} \left( \frac{(p_i x+1-p_i)^m}{1-x}\right) - \frac{x \frac{\mathrm{d}}{\mathrm{d}x} (p_i x+1-p_i)^m}{1-x} \right) = (m-1) (p_i x+1-p_i)^{m-2} x \;.
\]
Note that the coefficient on $x^0$ is $0$, so $s_{0,i}=0$ as claimed. 
For $t \in \Z_{>0}$, the right hand side is the generating function of
\[
(m-1) \binom{m-2}{t-1} p^{t-1} (1-p)^{m-t-1} \;.
\]
Thus, this expression gives the $i$-th entry Hessian in the $t \in \mathbb{Z}_{\geq 0}$, as claimed.

Now consider the case when $t$ is not an integer. In this case, we have:
\begin{align*}
s_{t,i} &\triangleq \frac{\mathrm{d}^2}{\mathrm{d}p_i^2} \frac{1}{m} \sum_{k=\lceil t \rceil}^m   \binom{m}{k}   p_i^k   (1-p_i)^{m-k}  (k-t) \\
&= \frac{\mathrm{d}^2}{\mathrm{d}p_i^2} \frac{1}{m} \sum_{k=\lceil t \rceil}^m   \binom{m}{k}   p_i^k   (1-p_i)^{m-k}  (k-\lceil t \rceil + \Delta t)\\
&= s_{\lceil t \rceil,i} + \Delta t \frac{\mathrm{d}^2}{\mathrm{d}p_i^2} \frac{1}{m} \sum_{k=\lceil t \rceil}^m   \binom{m}{k}   p_i^k   (1-p_i)^{m-k}.\\
&= s_{\lceil t \rceil,i} - \Delta t \frac{\mathrm{d}^2}{\mathrm{d}p_i^2} \frac{1}{m} \sum_{k=0}^{\lceil t \rceil-1}   \binom{m}{k}   p_i^k   (1-p_i)^{m-k} \;.
\end{align*}
The last equality is because if we change bounds on the sum so they are from $0$ through $m$, we get $1$ which has partial derivative $0$. Thus, we can flip which terms we are summing over if we negate the expression.

Note that this expression we are subtracting above can be alternatively written as:
\[
\Delta t \frac{\mathrm{d}^2}{\mathrm{d}p_i^2} \frac{1}{m} \sum_{k=0}^{\lceil t \rceil-1}   \binom{m}{k}   p_i^k   (1-p_i)^{m-k} = \Delta t \cdot (s_{\lceil t \rceil,i} - s_{\lfloor t \rfloor,i}) \;.
\]
Thus, we have
\[
s_{t,i} = s_{\lceil t \rceil,i} - \Delta t \cdot (s_{\lceil t \rceil,i} - s_{\lfloor t \rfloor,i}) = \Delta t \cdot s_{\lfloor t \rfloor,i} + (1-\Delta t) \cdot s_{\lceil t \rceil,i} \;,
\]
as desired. This completes the proof of Lemma~\ref{lem:hessian-exact}.
\end{proof}

It will be convenient to simplify the exact expressions of Lemma~\ref{lem:hessian-exact} into something more manageable.
This is done in the following lemma:
\begin{lemma}\label{lem:hessian-approx}
Fix any constant $c>0$. The Hessian of $\mu(p)$, viewed as a function of $p$, is a diagonal matrix 
whose $i$-th diagonal entry is given by
\[
h_{ii} = s_{t: = m/n, i} \geq \Theta(1) \cdot 
\begin{cases}
\frac{m^2}{n} & \textrm{ for } m \leq n\\ 
\sqrt{m n} & \textrm{ for } n < m \leq c \cdot \frac{n}{\epsilon^2}
\end{cases} \;,
\]
assuming $p_i=\frac{1 \pm \epsilon}{n}$, $m \geq 6$, $n \geq 2$, and $\epsilon \leq 1/2$.
\end{lemma}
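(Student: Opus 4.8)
The plan is to start from the exact formula for $s_{t,i}$ in Lemma~\ref{lem:hessian-exact}, specialized to $t=m/n$, and estimate it in the two regimes $m\le n$ and $n<m\le cn/\eps^2$ separately. Recall that for an integer $t'\ge 1$ we have $s_{t',i}=(m-1)\binom{m-2}{t'-1}p_i^{t'-1}(1-p_i)^{m-t'-1}=(m-1)\cdot\Pr[\Binom(m-2,p_i)=t'-1]$, and that for non-integer $t$ the value $s_{t,i}$ is the linear interpolation between the two neighbouring integer values. Since a convex combination of two nonnegative numbers is at least their minimum, in the regime $n<m$ (where $\lfloor m/n\rfloor\ge 1$) it will suffice to lower bound $s_{t',i}$ for both $t'\in\{\lfloor m/n\rfloor,\lceil m/n\rceil\}$; in the regime $m\le n$ the interpolation is between $s_{0,i}=0$ and $s_{1,i}$, so it collapses exactly to $s_{t,i}=(m/n)\cdot s_{1,i}$.

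For $m\le n$ we then have $s_{t,i}=(m/n)(m-1)(1-p_i)^{m-2}$. Note $m\ge 6$ forces $n\ge 6$ here, and $p_i\le(1+\eps)/n\le 3/(2n)$ since $\eps\le 1/2$. I would lower bound $(1-p_i)^{m-2}\ge(1-3/(2n))^{m-2}\ge(1-3/(2n))^{n}\ge\Omega(1)$ (using $m-2\le n$ and that $(1-a/n)^n$ is increasing for $n>a$), and combine with $m-1\ge\tfrac56 m$ to get $s_{t,i}\ge\Omega(m^2/n)$, as claimed.

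For $n<m\le cn/\eps^2$, fix $t'\in\{\lfloor m/n\rfloor,\lceil m/n\rceil\}$, so $t'\ge 1$ and $|t'-m/n|<1$. The binomial $\Binom(m-2,p_i)$ has mean $(m-2)p_i$ and variance $\sigma_B^2=(m-2)p_i(1-p_i)$; since $p_i=\Theta(1/n)$ and $1-p_i=\Theta(1)$ (using $\eps\le 1/2$, $n\ge 2$) and $m-2=\Theta(m)$ with $m>n$, we get $\sigma_B^2=\Theta(m/n)$ and in particular $\sigma_B=\Omega(1)$. I would then verify that the evaluation point $t'-1$ lies within $O(\sigma_B)$ of the mean: the distance is at most $m\eps/n+O(1)$, where $m\eps/n\le\sqrt{c}\,\sqrt{m/n}=O(\sigma_B)$ precisely because $m\le cn/\eps^2$, and $O(1)=O(\sigma_B)$ since $\sigma_B=\Omega(1)$. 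Standard estimates for the binomial pmf near its mean — the mode carries mass $\Omega(1/\sigma_B)$, and moving $O(\sigma_B)$ integer steps changes the log-pmf by $O(1)$ via telescoping the ratios $\frac{(m-2-k)p_i}{(k+1)(1-p_i)}$ — then give $\Pr[\Binom(m-2,p_i)=t'-1]=\Omega(1/\sigma_B)=\Omega(\sqrt{n/m})$, hence $s_{t',i}=\Omega(m\sqrt{n/m})=\Omega(\sqrt{mn})$; the same bound passes to $s_{t,i}$ by the convexity reduction from the first paragraph.

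The main obstacle is this second regime. One must make the binomial pmf lower bound fully rigorous in the range where $\sigma_B$ is only a bounded constant rather than large, which requires some care in the pmf-ratio telescoping (and in the bound on the mode's mass), and one must argue cleanly that the $\eps$-dependent shift of the binomial mean, of order $m\eps/n$, remains within $O(\sigma_B)$ of $t'-1$ — this is exactly where the hypothesis $m\le cn/\eps^2$ enters, and it explains why the stated Hessian bound carries no $\eps$ dependence (so that it applies unchanged at every point $p'$ on the segment from $U_n$ to $p$). The remaining steps are routine bookkeeping with the exact formula of Lemma~\ref{lem:hessian-exact} and Stirling-type estimates.
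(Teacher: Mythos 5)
Your proof is correct, and in the key regime $n<m\le cn/\eps^2$ it takes a genuinely different route from the paper's. You reinterpret the Hessian entry $s_{t',i}=(m-1)\binom{m-2}{t'-1}p_i^{t'-1}(1-p_i)^{m-t'-1}$ as $(m-1)\cdot\Pr[\Binom(m-2,p_i)=t'-1]$ and argue by a local-limit-type bound: unimodality plus Chebyshev give mode mass $\Omega(1/\sigma_B)$, and a ratio-telescoping argument shows the pmf decays by only $\Omega(1)$ over $O(\sigma_B)$ integer steps, so it suffices to show $t'-1$ lies within $O(\sigma_B)=O(\sqrt{m/n})$ of the mean $(m-2)p_i$. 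You correctly isolate that the $\eps$-dependent shift of the mean is $m\eps/n$ and that $m\le cn/\eps^2$ is precisely what forces $m\eps/n\le\sqrt{c}\,\sqrt{m/n}$, absorbing it into the $O(\sigma_B)$ window. The paper instead carries out an explicit Stirling expansion of $\binom{m-2}{t-1}$ (uniformly for $t=m/n+\gamma$, $\gamma\in[-1,1]$) and tracks the $\eps$-dependence through the cancellation $(1\pm\eps)^{m/n}(1\mp\eps/(n-1))^{m-m/n}\approx(1-\eps^2)^{m/n}=e^{-\Theta(\eps^2 m/n)}$, with $m\le cn/\eps^2$ used to keep this factor $\Omega(1)$. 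The two uses of the hypothesis are really the same fact seen from different angles. Your route is conceptually cleaner and more modular, at the expense of pushing the remaining delicacy into the binomial pmf lower bound when $\sigma_B$ is only a bounded constant (which you flag explicitly, and which can be handled, e.g., exactly by the Stirling calculation the paper performs or by a Poisson approximation since $p_i=\Theta(1/n)$). The $m\le n$ case is handled identically in both proofs. One small gain of your framing worth noting: it makes immediately transparent why the resulting Hessian bound carries no $\eps$-dependence, which is what lets Lemma~\ref{lem:expectation-gap} apply it at every point on the segment from $U_n$ to $p$.
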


Similarly, these bounds are tight up to constant factors, as further improvements 
would violate our sample complexity lower bounds.

\begin{proof}
By Lemma~\ref{lem:hessian-exact}, we have an exact expression $s_{t,i}$ 
for the $i$th entry of the Hessian of $\mu_t(p)$. 

First, consider the case where $m \leq n$. Then we have
\[
s_{t,i} = (1 - \Delta t) \cdot s_{\lceil t \rceil, i} \;.
\]
Substituting $t=m/n$, $\lceil t \rceil=1$, and $\Delta t = \lceil t \rceil - t = 1 - m/n$ gives
\[
s_{t,i} = \frac{m}{n} \cdot (m-1) (1-p_i)^{m-2} = \Theta(1) \cdot \frac{m^2}{n} \;.
\]
Now consider the case where $n < m \leq \Theta(1) \cdot \frac{n}{\epsilon^2}$. 
Note that the case where $n < m < 2n$ follows from (i) the fact that $s_{t,i}$ for fractional $t$ linearly interpolates between the value of $s_{t',i}$ the nearest two integral values of $t'$ and (ii) the analyses of the cases where $m \leq n$ and $2n \leq m \leq \Theta(1) \frac{n}{\epsilon^2}$. 
Thus, all we have left to do is prove the case where $2n \leq m \leq \Theta(1) \cdot \frac{n}{\epsilon^2}$.

Since $s_{t,i}$ is a convex combination of $s_{\lceil t \rceil,i}$ and $s_{\lfloor t \rfloor,i}$, 
it suffices to bound from below these quantities for $t=m/n$. 
Both of these tasks can be accomplished simultaneously by bounding from below the quantity 
$s_{t=m/n+\gamma,i}$ for arbitrary $\gamma \in [-1,1]$. 

We do this as follows: Let $t=m/n+\gamma$.
Using Stirling's approximation, we will show that for any $\gamma\in [-1,1]$, we get:
\[
s_{t,i}\geq \Theta(1)\cdot \sqrt{mn} \;.
\] 
Note that Stirling's approximation is tight up to constant factors 
as long as the number we are taking the factorial of is not zero. 
Note that $m-2 \geq 1$, $t -1 \geq 1$, and $m-t-1 \geq m/2 - 2 \geq 1$. 
Thus, if we apply Stirling's approximation to the factorials in the definition 
of the binomial coefficient and substitute $t=m/n + \gamma$, we obtain 
the following approximation, which is tight up to constant factors:
\begin{align*}
\binom{m-2}{t-1}&=\Theta(1)\cdot \frac{\sqrt{m-2}}{\sqrt{m - m/n - 1-\gamma}\sqrt{m/n - 1+\gamma}}\frac{(m-2)^{m-2}}{(m - m/n - 1-\gamma)^{m - m/n - 1-\gamma}(m/n-1+\gamma)^{m/n-1+\gamma}} \\
&=\Theta(1)\cdot \sqrt{\frac{(m - m/n - 1-\gamma) (m/n - 1+\gamma)}{(m - 2)^3}} \cdot \frac{(m-2)^m  }{(m - m/n -1-\gamma)^{m - m/n-\gamma} (m/n+\gamma)^{m/n+\gamma}} \\
&=\Theta(1)\cdot \frac{1}{\sqrt{mn}} \frac{m^m}{(m - m/n -1-\gamma)^{m - m/n-\gamma} (m/n+\gamma)^{m/n+\gamma}}
\end{align*}

Using this approximation, we get: 
\begin{align*}
s_{t,i}&=(m-1) \binom{m-2}{t-1} p_i^{t-1} (1-p_i)^{m-t-1} \\
&= \Theta(1) \cdot \sqrt{\frac{m}{n}} \cdot \frac{m^m p_i^{m/n - 1+\gamma} (1 - p_i)^{m - m/n - 1-\gamma} }{(m - m/n -1-\gamma)^{m - m/n-\gamma} (m/n-1+\gamma)^{m/n+\gamma}} \\
&= \Theta(1) \cdot \frac{1}{p_i}\sqrt{\frac{m}{n}} \cdot \frac{m^{m/n+\gamma} p_i^{m/n+\gamma} (1 - p_i)^{m - m/n - 1-\gamma} }{(1 - \frac{1}{n} - \frac{1}{m})^{m - m/n-\gamma} (m/n)^{m/n+\gamma}} \\
&= \Theta(1) \cdot \frac{1}{p_i}\sqrt{\frac{m}{n}} \cdot \frac{(np_i)^{m/n+\gamma} (1 - p_i)^{m - m/n - 1 -\gamma} }{(1 - \frac{1}{n})^{m - m/n-\gamma}} \\
\end{align*}
By substituting $p_i=\frac{1\pm \varepsilon}{n}$, we get: 
\begin{align*}
s_{t,i}&= \Theta(1) \cdot \sqrt{mn} \cdot \frac{(1 \pm \epsilon)^{m/n+\gamma-1} (1 - \frac{1 \pm \epsilon}{n})^{m - m/n-\gamma} }{(1 - \frac{1}{n})^{m - m/n-\gamma}} \\
&= \Theta(1) \cdot \sqrt{mn} \cdot \frac{(1 \pm \epsilon)^{m/n} (1 - \frac{1 \pm \epsilon}{n})^{m - m/n} }{(1 - \frac{1}{n})^{m - m/n}} \\
&= \Theta(1) \cdot \sqrt{mn} \cdot (1 \pm \epsilon)^{m/n} \left(1 \mp \frac{\epsilon}{n-1}\right)^{m - m/n} \\
&\geq \Theta(1) \cdot \sqrt{mn} \cdot (1 \pm \epsilon)^{m/n} \left(1 \mp \epsilon\right)^{\frac{m}{n-1} - \frac{m}{n(n-1)}} \\
&= \Theta(1) \cdot \sqrt{mn} \cdot (1 \pm \epsilon)^{m/n} \left(1 \mp \epsilon\right)^{m/n} \\
&= \Theta(1) \cdot \sqrt{mn} \cdot (1 - \epsilon^2)^{m/n} \\
&= \Theta(1) \cdot \sqrt{mn} \cdot e^{-\Theta(1) \cdot \epsilon^2(m/n)} \\
&\geq \Theta(1) \cdot \sqrt{mn} \; (\mbox{since }m<\Theta(1)\cdot\frac{n}{\varepsilon^2}) \;.
\end{align*}
This completes the proof of Lemma~\ref{lem:hessian-approx}.
\end{proof}

\subsection{Concentration of Test Statistic: Proof of Theorem~\ref{thm:main-unif}} \label{ssec:concen}
Let the $m$ samples be $Y_1, \dotsc, Y_m \in [n]$, and let $X_i$, $i \in [n]$, be
the number of $j \in [m]$ for which $Y_j = i$.  
Let $S$ be our empirical total variation test statistic, 
$S = \frac{1}{2} \sum_{i=1}^n \abs{\frac{X_i}{m} - \frac{1}{n}}$.
We prove the theorem in two parts, one when $m \geq n$, and one when
$m \leq n$.

We will require a ``Bernstein'' form of the standard bounded
differences (McDiarmid) inequality:
\begin{lemma}[Bernstein version of McDiarmid's inequality~\cite{ying2004mcdiarmid}]\label{lem:mcdiarmid}
Let {$Y_1,\dots, Y_m$} be independent random variables taking values in the set $\mathcal{Y}$. 
Let $f: \mathcal{Y}^m \rightarrow \R$ be a function of {$y_1, \ldots ,y_m$} so that for every {$j \in [m]$}
and $y_1, \ldots, y_m, y_j^\prime \in \mathcal{Y}$, we have that:
\[
 \vert f(y_1,\ldots, y_j, \ldots, y_m) - f(y_1, \ldots, y_j^\prime, \ldots, y_m) \vert \leq B \;.
\]
Then, we have:
\begin{align}
\Pr\left[ f(Y_1, \ldots, Y_m)-\E[f] \geq z \right] \leq \exp\left(\frac{-2z^2}{{m} B^2}\right) \;. \label{eq:2}
\end{align}
If in addition, for each $j \in [m]$ and $y_1, \dotsc, y_{j-1}, y_{j+1}, \dotsc, y_m$ we have that
\[
  \Var_{Y_j} \left[ f(y_1, \ldots, Y_j, \dots, y_m) \right] \leq \sigma_j^2 \;,
\]
then we have
\begin{align}
\Pr\left[ f(Y_1, \ldots, Y_m)-\E[f] \geq z \right] \leq \exp\left(\frac{-z^2}{2\sum_{j=1}^m \sigma_j^2 + 2Bz/3}\right) \;. \label{eq:3}
\end{align}
\end{lemma}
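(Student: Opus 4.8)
The plan is to derive both bounds from the standard exposure (Doob) martingale: Azuma--Hoeffding gives \eqref{eq:2}, and a Freedman-type Bernstein inequality for martingales with bounded increments gives \eqref{eq:3}. Concretely, let $\mathcal F_i = \sigma(Y_1,\dots,Y_i)$ and set $Z_i = \E[f(Y_1,\dots,Y_m)\mid\mathcal F_i]$, so that $Z_0 = \E[f]$, $Z_m = f(Y_1,\dots,Y_m)$, and $f-\E[f] = \sum_{i=1}^m D_i$ where $D_i = Z_i - Z_{i-1}$ is a martingale difference sequence adapted to $(\mathcal F_i)$. The first step is to record that the bounded-differences hypothesis controls each $D_i$: conditioned on $Y_1,\dots,Y_{i-1}$, the variable $Z_i$ equals $\varphi(Y_i)$ for the function $\varphi(y) = \E[f(Y_1,\dots,Y_{i-1},y,Y_{i+1},\dots,Y_m)\mid\mathcal F_{i-1}]$, and since $|f(\dots,y,\dots)-f(\dots,y',\dots)|\le B$ pointwise, conditional Jensen gives $|\varphi(y)-\varphi(y')|\le B$; as $\E[D_i\mid\mathcal F_{i-1}]=0$, we conclude $|D_i|\le B$ almost surely.

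For \eqref{eq:2} this suffices: Azuma--Hoeffding applied to $\sum_{i=1}^m D_i$ with increments bounded by $B$ yields $\Pr[f-\E[f]\ge z]\le \exp\!\left(-\frac{2z^2}{mB^2}\right)$. For \eqref{eq:3} I would in addition bound the predictable quadratic variation. For any fixed outcome of $Y_{i+1},\dots,Y_m$, varying only $Y_i$ changes $f$ with variance at most $\sigma_i^2$ by hypothesis, and a conditional-Jensen argument of the form $\Var_{Y_i}\!\left[\E_{Y_{>i}}[f]\right] \le \E_{Y_{>i}}\!\left[\Var_{Y_i}[f]\right]$ (valid because the square of a conditional expectation is dominated by the conditional expectation of the square) gives $\E[D_i^2\mid\mathcal F_{i-1}] \le \sigma_i^2$ almost surely, hence $\sum_{i=1}^m\E[D_i^2\mid\mathcal F_{i-1}]\le\sum_{i=1}^m\sigma_i^2$ surely. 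Then I would invoke Freedman's inequality: for a martingale difference sequence with $|D_i|\le B$ and total conditional variance at most $V$, $\Pr\!\left[\sum_i D_i\ge z\right]\le\exp\!\left(-\frac{z^2}{2V+2Bz/3}\right)$. Its proof bounds the conditional moment generating function $\E[e^{\lambda D_i}\mid\mathcal F_{i-1}] \le \exp\!\left(\frac{e^{\lambda B}-1-\lambda B}{B^2}\,\sigma_i^2\right)$ for $\lambda>0$, multiplies these estimates along $i=1,\dots,m$, applies Markov's inequality to $e^{\lambda\sum_i D_i}$, and optimizes $\lambda$; substituting $V=\sum_i\sigma_i^2$ produces exactly \eqref{eq:3}.

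The step I expect to be most delicate is the passage from the stated per-coordinate variance hypothesis---which bounds $\Var_{Y_j}[f]$ with \emph{every} other coordinate held at an arbitrary fixed value---to the bound on the predictable quadratic variation $\sum_i \E[D_i^2\mid\mathcal F_{i-1}]$ of the exposure martingale, since the latter also averages over the not-yet-revealed coordinates $Y_{i+1},\dots,Y_m$; getting the Jensen step and the direction of the inequality right is the crux. Assembling Freedman's bound with the precise constants $2V$ and $2Bz/3$ (rather than, say, $Bz$) is a routine but slightly fiddly optimization of $\lambda$ once the MGF estimate is in place. Everything else---the martingale setup, the increment bound $|D_i|\le B$, and the Azuma step---is classical, and one could alternatively just cite~\cite{ying2004mcdiarmid} for the whole statement.
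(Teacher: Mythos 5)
The paper does not prove this lemma; it is stated as a citation to~\cite{ying2004mcdiarmid}. Your reconstruction via the exposure (Doob) martingale plus Freedman's inequality is the standard route to this result and is correct, and the Jensen step you single out as delicate---passing from the per-coordinate variance hypothesis $\Var_{Y_j}[f(\dots, Y_j, \dots)] \le \sigma_j^2$, with all other coordinates held at arbitrary fixed values, to the predictable-quadratic-variation bound $\E[D_j^2 \mid \Fcal_{j-1}] \le \sigma_j^2$---does go through exactly as you indicate: fixing $Y_1,\dots,Y_{j-1}$, write $D_j = \E_{Y_{>j}}\bigl[f - c(Y_{>j})\bigr]$ where $c(y_{>j}) = \E_{Y_j}\bigl[f(\dots,Y_j,y_{>j})\bigr]$, square, apply conditional Jensen over $Y_{>j}$, then integrate in $Y_j$; the hypothesis bounds the integrand by $\sigma_j^2$ pointwise in $y_{>j}$. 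One precision point on~\eqref{eq:2}: invoking ``Azuma--Hoeffding with increments bounded by $B$'' (i.e.\ $|D_i|\le B$) only yields $\exp\bigl(-z^2/(2mB^2)\bigr)$, a factor of $4$ worse in the exponent than the stated $\exp\bigl(-2z^2/(mB^2)\bigr)$. To reach the latter you need the McDiarmid refinement that, conditionally on $\Fcal_{i-1}$, the increment $D_i$ is confined to an interval of \emph{length} $B$ (not $2B$), and then apply Hoeffding's lemma with its sharp constant. This is a phrasing issue rather than a real gap, and the Freedman half of your argument is sound as written.
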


\subsubsection{Case I: $m \geq n$}
 Since the $Y_j$'s are independent and $S$ is $\frac{1}{m}$-Lipschitz
in them, the first form of McDiarmid's inequality implies that 
\[
  \Pr[S - \E[S] \geq z] \leq \exp(-2mz^2) \;,
\]
and similarly, by applying it to $-S$, we have 
$\Pr[S - \E[S] \leq -z] \leq \exp(-2mz^2)$.

Let $R$ be the right-hand side of the Equation in
Lemma~\ref{lem:expectation-gap}, so $\mu(p) - \mu(U_n) \geq R$ in the
soundness case.  Since we threshold the tester at $t = \mu(U_n) + R/2$, we
find in both the completeness and soundness cases that the success probability will be at least
\[
    1 - \exp(-mR^2/2) \;,
\]
and hence we just need to show
\begin{align}
mR^2/2 \geq \log(1/\delta) \;. \label{eq:1}
\end{align}
Since we are in the regime that $m \geq n$, there are two possible cases
in Lemma~\ref{lem:expectation-gap}. 

For $n \leq m \leq n/\eps^2$, we need that
\[
 \frac{m}{2} \cdot \Theta(1) \cdot \eps^4m/n \geq \log(1/\delta)  
\]
or
\[
  m \geq \Theta(1) \cdot \frac{\sqrt{n\log(1/\delta)}}{\eps^2}  \;.   
\]

For $m \geq n/\eps^2$, we need that
\[
  \frac{m}{2} \cdot \Theta(1) \cdot \eps^2 \geq \log(1/\delta)
\]
or 
\[ 
m \geq  \Theta(1) \cdot \frac{\log(1/\delta)}{\eps^2} \;.
\]
The theorem's assumption on $m$ implies that both conditions hold, 
which completes the proof of Theorem~\ref{thm:main-unif} in this case.

\subsubsection{Case II: $m \leq n$}

To establish Theorem~\ref{thm:main-unif} for $m \leq n$, we will require 
the Bernstein form of McDiarmid's inequality (Equation (\ref{eq:3}) in Lemma~\ref{lem:mcdiarmid}).

To apply this form of Lemma~\ref{lem:mcdiarmid}, 
it suffices to compute $B$ and $\sigma_j$ for our
test statistic as a function of the $Y_j$'s.  Note that for $m \leq n$,
$\abs{\frac{X_i}{m} - \frac{1}{n}}$ is equal to
$\frac{X_i}{m} - \frac{1}{n}$ whenever $X_i \neq 0$.  In particular,
this implies
\begin{align*}
S &= \frac12 \sum_{i=1}^n \left\vert \frac{X_i}{m}-\frac1n \right\vert \\ 
&=\frac{1}{2} \sum_{i=1}^n \left\{ \left(\frac{X_i}{m}-\frac1n\right) + \frac{2}{n}\cdot \mathbf{1}_{X_i = 0} \right\} \\
&=\frac1n \cdot \vert \{ i:X_i=0 \} \vert \;.
\end{align*}
Hence, the value of the parameter $B$ for our test statistic is $1/n$, 
since each $Y_j$ will affect the number of nonzero $X_i$'s by at most $1$. 
In particular, the function value as $Y_j$ varies and the other $Y_{j'}$'s 
are kept fixed can be written as the sum of a deterministic quantity plus 
$(1/n) \cdot b$, where $b$ is a Bernoulli random variable that is $1$ 
if sample $Y_j$ collides with another sample $Y_{j'}$ and $0$ otherwise. 
Thus, the variance of $S$ as $Y_j$ varies and the other $Y_{j'}$'s are kept fixed 
is given by $\var[(1/n) \cdot b]$. This variance is $(1/n^2) \cdot r(1-r)$, 
where $r$ is the probability that $Y_j$ collides with another $Y_{j'}$.

\new{
By Fact \ref{cor:stochastic domination}, we have that probability that value of the statistic falls below the threshold (i.e type II error) is maximized for some distribution that has probability mass values in the set $\{\frac{1+\varepsilon^\prime}{n},\frac1n,\frac{1-\varepsilon^\prime}{n}\}$ for some $\varepsilon^\prime \geq  \dtv(p,U_n)/2$. This allows us to consider only this worst case. Therefore, we have that $r\leq m(1+\varepsilon^\prime)/n\leq 2m/n$ and} the variance of $S$ as $Y_j$ varies and the other $Y_{j'}$'s are kept fixed is at most
\[
\frac{1}{n^2} \cdot  r(1-r) \leq r/n^2 \leq \new{2}m/n^3 =: \sigma_j^2 \;.
\]
\new{in both the completeness and soundness cases.}

 Applying Equation (\ref{eq:3}) of Lemma~\ref{lem:mcdiarmid} we find
\[
\Pr[\abs{S - \E[S]} \geq z] \leq {2} \exp\left(\frac{-z^2}{\new{4} \cdot m^2/n^3 + (2/3) \cdot z/n}\right) \;.
\]
By Lemma~\ref{lem:expectation-gap}, in the soundness
case we have expectation gap $\mu(p) - \mu(U_n) \geq {R:=} C\eps^2m^2/n^2$
for some constant $C < 1$.
Substituting $z = R/2$ in the above concentration inequality 
yields that our tester will be correct with
probability $1-\delta$ as long as
\[
  m \geq \Theta(1) \cdot \frac{1}{\eps^2}\sqrt{n\log(2/\delta)},
\]
for an appropriately chosen constant, which is true by assumption.
This completes the proof of Theorem~\ref{thm:main-unif}.

\section{Conclusions and Future Work} \label{sec:future}
{In this paper, we gave the first uniformity tester that is sample-optimal, up to constant factors, as a function of the confidence parameter. 
Our tester is remarkably simple and our novel analysis may be useful in other related settings. By using a known reduction of identity to uniformity, 
we also obtain the first sample-optimal identity tester in the same setting.

Our result is a step towards understanding the behavior of distribution testing problems in the high-confidence setting. We view
this direction as one of fundamental theoretical and important practical interest. A number of interesting open problems remain.
Perhaps the most appealing one is to design a {\em general technique} (see, e.g.,~\cite{DK16}) that yields sample-optimal testers in the high confidence regime
for a wide range of properties.
From the practical standpoint, it would be interesting to perform a detailed experimental evaluation of the various algorithms 
(see, e.g.,~\cite{HM13-it, BalakrishnanW17}).
}

\bibliographystyle{alpha}
\bibliography{allrefs}

\newpage

\appendix

\section*{Appendix}

\section{Computation of the Expectation in Completeness Case} \label{app:exp}
Our statistic can be written as: $S=\sum_{i=1}^{n} \max\{ X_i-\frac{m}{n},0 \}$.
Therefore, by linearity of expectation, we get: 
\[
\E[S]=\sum_{i=1}^n \E\left[\max\{ X_i-\frac{m}{n},0 \}\right]=n\cdot \E\left[\max\{ X_1-\frac{m}{n},0 \}\right]  \;. 
\]
So, all we need to do is to compute: $\E\left[\max\{ X_i-\frac{m}{n},0 \}\right]$ for a single value of $i$ in the completeness case. 

Note that $X_i \sim \mathrm{Bin}(m,\frac1n)$ and that the above expectation can be written as:
\begin{equation}\label{expectation}
\E\left[\max\{ X_i-\frac{m}{n},0 \}\right]=\sum_{k=\lceil \frac{m}{n} \rceil}^m  \Pr[X_i=k](k-\frac{m}{n}) \;,
\end{equation}
where 
$\Pr[X_i=k]=\frac{(1-\frac1n)^{m-k}}{n^k}$. This is a sum of $O(m)$ terms each of which can be computed in constant time, giving an $O(m)$ runtime overall.

\section{Non-Existence of Indefinite Closed-Form for Components of Expectation}\label{app:no-closed-form}

In this appendix, we formalize and prove our assertion from Section~\ref{ssec:exp-gap} that 
the function $\mu(p)-\mu(U_n)$ is a linear combination of sums each of which has no indefinite closed form.

Recall Equation~\eqref{eq:expectation-formula} which says that the expectation is a linear combination of sums with summands of the form 
$\binom{m}{k}q^k (1-q)^{m-k} (k-t)$ for various values of $q$---where the values of $q$ are themselves different variables 
that any closed form would need to depend on (in addition to the other variables). 
A sum is said to have an indefinite closed form if, when the upper and lower limits of the sum are replaced with new variables, 
the resulting sum has a closed form valid for all values of all variables.

By closed form, we mean a closed form as defined in \cite[Definition 8.1.1]{PWZ97} which, as far as we are aware, 
is the main formal sense in which the phrase is used in combinatorics. 
This definition of closed form says that a function can be written as a sum of a constant number of rational functions, 
where the numerator and denominator in each is a linear combination of a constant number of products of exponentials, 
factorials, and constant degree polynomials. An example of such a function is $\frac{1}{\binom{n}{k}} + 7 \cdot \frac{2^k k! + 2k^5}{3^k}+k$.

To prove that a sum with summands $\binom{m}{k}q^k (1-q)^{m-k} (k-t)$ has no indefinite closed form---where $m, k, q, t$, 
and the limits of the sum are the variables that the closed form would need to be a function of---one can run Gosper's algorithm on this summand---
with $k$ as the index of summation---and observe that it returns that there is no indefinite closed form solution in the sense we have described \cite[Theorem 5.6.3]{PWZ97}, \cite{G78,PS95}.

\section{Omitted Proofs from Section \ref{sec:uniform}}

\subsection{Proof of Lemma \ref{lem:expectation-gap}}
\label{appendix:exp-gap lemma}
Given Lemmas~\ref{lem:hessian-exact} and \ref{lem:hessian-approx} from section \ref{ssec:exp-gap}, we are ready to prove the desired expectation gap.

\begin{proof}[{\bf Proof of Lemma~\ref{lem:expectation-gap}:}]
We start by reducing the soundness case to a much simpler setting.
To do this, we use the following fact, established in Section~\ref{sec:sd}:

\new{
\begin{fact}\label{cor:stochastic domination}
Let $S(D)$ be the random variable taking the value of our test statistic when the samples come from the distribution $D$. For any distribution $p$ on $[n]$, there exists a distribution $p^\prime$ supported on $[n]$
whose probability mass values are in the set $\{\frac{1+\varepsilon^\prime}{n},\frac1n,\frac{1-\varepsilon^\prime}{n}\}$ for some $\varepsilon^\prime \geq  \dtv(p,U_n)/2$, 
with at most one element having mass $\frac1n$, and such that the statistic $S(p)$ stochastically dominates $S(p^\prime)$. In particular, we have that {$\mu(p^\prime) \leq \mu(p)$}.
\end{fact}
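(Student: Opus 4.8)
The plan is to prove Fact~\ref{cor:stochastic domination} by constructing $p'$ explicitly, showing that $p$ \emph{majorizes} it, and invoking the general stochastic-domination principle from Section~\ref{sec:sd} (Lemma~\ref{lm:general domination}); that principle applies because our statistic, written as $S(x)=\frac1m\sum_{i=1}^n\max\{x_i-m/n,0\}$, is a symmetric convex function of the empirical histogram. Throughout we may assume $\varepsilon:=\dtv(p,U_n)>0$, as $p=U_n$ is degenerate.

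Set $\varepsilon':=\varepsilon/2$ and $k:=\lfloor n/2\rfloor$, and let $p'$ be the distribution with $k$ coordinates equal to $\frac{1+\varepsilon'}{n}$, $k$ coordinates equal to $\frac{1-\varepsilon'}{n}$, and, when $n$ is odd, one more coordinate equal to $\frac1n$. One checks at once that $p'$ is a probability distribution (its entries are nonnegative since $\varepsilon'<1$, and they sum to $\frac{2k+[n\text{ odd}]}{n}=1$), that its mass values lie in $\{\frac{1+\varepsilon'}{n},\frac1n,\frac{1-\varepsilon'}{n}\}$ with at most one coordinate of mass $\frac1n$, and that $\varepsilon'=\dtv(p,U_n)/2$, as required. (One could take $\varepsilon'$ as large as the majorization below permits; $\varepsilon/2$ suffices.)

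The crux is to show $p$ majorizes $p'$. Sort $p$ decreasingly, $p_{(1)}\ge\cdots\ge p_{(n)}$, and put $E(j):=\sum_{i=1}^j p_{(i)}-\frac jn$, with $E'$ defined the same way for $p'$; since both vectors sum to $1$, majorization is precisely the assertion $E(j)\ge E'(j)$ for all $j$. I would prove the uniform bound $E(j)\ge\frac{\varepsilon}{n}\min(j,n-j)$. Let $a:=|\{i:p_i>1/n\}|$. For $j\le a$, the increments $p_{(i)}-1/n$ are positive, nonincreasing on $i\in[1,a]$, and sum to $\varepsilon$, so the first $j$ of them average at least $\varepsilon/a$, giving $E(j)\ge\frac ja\varepsilon\ge\frac jn\varepsilon\ge\frac{\varepsilon}{n}\min(j,n-j)$. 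For $j\ge a$, write $E(j)=\sum_{i>j}(\frac1n-p_{(i)})$, a sum of nonnegative deficits that are nondecreasing in $i$; hence the $n-j$ deficits with $i>j$ are the largest $n-j$ among the $n-a$ deficits with $i>a$ (which total $\varepsilon$), so they average at least $\varepsilon/(n-a)$ and $E(j)\ge\frac{n-j}{n-a}\varepsilon\ge\frac{n-j}{n}\varepsilon\ge\frac{\varepsilon}{n}\min(j,n-j)$. Meanwhile $E'$ is piecewise linear --- rising with slope $\varepsilon'/n$ up to index $k$, flat for one further index when $n$ is odd, then falling with slope $-\varepsilon'/n$ --- so $E'(j)\le\frac{\varepsilon'}{n}\min(j,n-j)$ for all $j$. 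As $\varepsilon'=\varepsilon/2\le\varepsilon$, we get $E(j)\ge\frac{\varepsilon}{n}\min(j,n-j)\ge\frac{\varepsilon'}{n}\min(j,n-j)\ge E'(j)$, so $p$ majorizes $p'$.

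With the majorization in hand, Lemma~\ref{lm:general domination} gives that $S(p)$ stochastically dominates $S(p')$, hence $\mu(p)=\E[S(p)]\ge\E[S(p')]=\mu(p')$, which completes the proof. The only substantive step is the majorization check; everything else is bookkeeping. I expect the one place to be careful is choosing the target distribution correctly --- in particular, recognizing that the \emph{small}-index partial sums are the binding majorization constraints, and that the ``at most one coordinate of mass $\frac1n$'' requirement forces $k=\lfloor n/2\rfloor$, which is why one cannot push $\varepsilon'$ beyond $\Theta(\varepsilon)$. A parity-free alternative exists --- first ``flatten'' $p$ to a three-valued distribution by Robin-Hood mass transfers (each only lowering the majorization order), then symmetrize --- but the direct construction is cleaner.
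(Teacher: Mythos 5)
Your proof is correct and shares the paper's backbone---establish $p\succ p'$ and then invoke Lemma~\ref{lm:general domination}---but you get to the majorization by a genuinely different route. The paper constructs $p'$ from $p$ by averaging its $\lfloor n/2\rfloor$ heaviest and $\lfloor n/2\rfloor$ lightest coordinates, deduces $p\succ p'$ from Lemma~\ref{lem:averaging-implies-majorization} (a doubly-stochastic-matrix argument applied twice), and separately invokes Lemma~\ref{lem:averaging-preserves-tv} to control $\varepsilon'$; this gives a $p'$ that depends on the shape of $p$ and in fact yields the stronger guarantee $\varepsilon'\geq\dtv(p,U_n)$. You instead fix $\varepsilon'=\dtv(p,U_n)/2$ up front, take the canonical three-valued $p'$ with that spread, and verify $p\succ p'$ directly via the partial-sum bound $\sum_{i\le j}(p^{\downarrow}_i-1/n)\geq\frac{\varepsilon}{n}\min(j,n-j)$, which you prove with a clean averaging argument on the excesses (for $j\leq a$) and the deficits (for $j\geq a$). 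Your version is more elementary and self-contained---it bypasses both auxiliary lemmas and makes the reason for majorization transparent---at the small cost of a target $p'$ decoupled from $p$ and an $\varepsilon'$ that is a factor two smaller than what the paper's averaging produces. Both meet the stated requirement $\varepsilon'\geq\dtv(p,U_n)/2$, and the constant-factor difference is absorbed in the $\Theta(\cdot)$ of Lemma~\ref{lem:expectation-gap} where this fact is used.
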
    
}

Fact~\ref{cor:stochastic domination} is proven in Appendix~\ref{sec:sd}. By Fact~\ref{cor:stochastic domination}, there is a distribution $p^\prime$ 
that satisfies the conditions of Lemma \ref{lem:hessian-approx}, 
has total variation distance $\Theta(\eps)$ to the uniform distribution, 
and {$\mu(p^\prime) \leq \mu(p)$}.
Therefore, it suffices to prove a lower bound on the expectation gap between the completeness and soundness cases for distributions $p$ of this form. 

Note that all probability distributions on the line from $p$ to $U_n$ are also of this form for different (no larger) values of $\eps$.  
Thus, Lemma~\ref{lem:hessian-approx} gives a lower bound on the diagonal entries of the Hessian at all points on this line. 
Since the Hessian is diagonal, this also bounds from below the minimum eigenvalue of the Hessian on this line. 
Therefore, by this and Equation~(\ref{eq:sc-lb}), we obtain the first two cases of this lemma, 
as well as the third case for $\frac{n}{\eps^2} \leq m \leq 4 \cdot \frac{n}{\eps^2}$.

The final case of this lemma for $4 \cdot \frac{n}{\eps^2} \leq m$ follows immediately 
from  the folklore fact that if one takes at least this many samples, the empirical distribution approximates 
the true distribution with expected $\ell_1$ error at most $\eps/2$. For completeness, we give a proof. We have
\newcommand\numberthis{\addtocounter{equation}{1}\tag{\theequation}}
\begin{align*}
\E[\|X / m - p\|_1] &= \sum_i \E[|X_i / m - p_i|] \leq \sum_i \sqrt{\var[X_i / m - p_i]} \\
&\leq \sum_i \sqrt{mp_i / m^2} \leq \sum_i \sqrt{m(1/n) / m^2} \numberthis \label{step:concavity} \\
&= \sqrt{n/m} \leq \eps/2 \;,
\end{align*}
where Equation~(\ref{step:concavity}) follows 
from the fact that the sum is a symmetric concave function of $p$, so it is maximized by setting all the $p_i$'s to be equal.
\end{proof}

\section{Matching Information-Theoretic Lower Bound} \label{sec:lb}

In this section, we prove our matching sample complexity lower bound. 
Namely, we prove:

\begin{theorem} \label{thm:lb}
Any algorithm that distinguishes with probability at least $1-\delta$ 
the uniform distribution on $[n]$ from any distribution that is $\eps$-far from uniform,
in total variation distance, requires at least
\[ \Omega\left(\left(\sqrt{n \log(1/\delta)}+\log(1/\delta)\right)/\varepsilon^2\right)
\]
samples.
\end{theorem}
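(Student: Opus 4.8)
The plan is to establish the lower bound $\Omega\left((\sqrt{n\log(1/\delta)} + \log(1/\delta))/\eps^2\right)$ by proving each of the two summands separately and then taking the maximum, which is within a constant factor of the sum.

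\paragraph{The $\log(1/\delta)/\eps^2$ term.} First I would handle the easier additive term. Embed the two-point problem of distinguishing a fair coin from an $\eps$-biased coin: restrict attention to a partition of $[n]$ into two halves, and let $q = U_n$ versus $p$ being uniform on each half but with total mass $(1\pm\eps)/2$ on the two halves. This $p$ has $\dtv(p,U_n) = \eps/2$, and the only information a sample carries about which case we are in is which half it lands in, so the problem reduces to distinguishing $\mathrm{Bernoulli}(1/2)$ from $\mathrm{Bernoulli}(1/2 \pm \eps/2)$ with confidence $1-\delta$. A standard KL / Hellinger computation shows this requires $\Omega(\log(1/\delta)/\eps^2)$ samples: the per-sample Hellinger-squared distance is $\Theta(\eps^2)$, and to distinguish product measures with total variation $1 - 2\delta$ one needs total squared Hellinger distance $\Omega(\log(1/\delta))$.

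\paragraph{The $\sqrt{n\log(1/\delta)}/\eps^2$ term.} This is the main obstacle and I would follow the outline already sketched in the introduction. Step 1: \emph{Poissonize.} Replace the fixed sample size $m$ with $\mathrm{Poi}(m)$ samples, so that the number of samples of coordinate $i$ is independently $\mathrm{Poi}(mp_i)$; since $\mathrm{Poi}(m)$ concentrates around $m$ up to constant (indeed sub-exponential) deviations, a lower bound in the Poissonized model transfers to the fixed-$m$ model up to constants. Step 2: \emph{Define the hard instance.} In the alternative hypothesis, set each $p_i$ independently to $\frac{1+\eps}{n}$ or $\frac{1-\eps}{n}$ with probability $1/2$ each. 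This is a ``pseudo-distribution'' — it need not sum to $1$ — but it sums to $1 \pm O(\sqrt{\eps^2/n})$ with exponentially high probability, and one is $\eps/2$-far from uniform (after a trivial rescaling / conditioning) with overwhelming probability; the machinery of \cite{WY16, DK16} lets us pass from the pseudo-distribution to an honest hard instance while only changing constants. Step 3: \emph{Reduce to a single-coordinate Hellinger computation.} Under $H_0$, coordinate $i$ is seen $\mathrm{Poi}(m/n)$ times; under $H_1$ it is seen according to the mixture $\frac12\mathrm{Poi}((1-\eps)m/n) + \frac12\mathrm{Poi}((1+\eps)m/n)$. I would compute that the squared Hellinger distance between these two single-coordinate laws is $\Theta(\eps^2 m/n)$ — expanding $\mathrm{Poi}(\lambda(1\pm\eps))$ around $\mathrm{Poi}(\lambda)$ with $\lambda = m/n$, the first-order terms cancel by the symmetric $\pm\eps$ choice, leaving a second-order term of order $\eps^2\lambda = \eps^2 m/n$ (valid in the relevant regime $m \lesssim n/\eps^2$; for $m \gtrsim n/\eps^2$ the bound saturates and learning is anyway optimal). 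Step 4: \emph{Tensorize.} Because the coordinates are mutually independent in both hypotheses, the squared Hellinger distance between the two product measures on $[n]$ coordinates satisfies $1 - H^2(P_0,P_1) = \prod_i (1 - H^2_i) = (1 - \Theta(\eps^2 m/n))^n$, so $H^2(P_0,P_1) \le 1 - e^{-\Theta(\eps^2 m)}$. For a tester to succeed with probability $1-\delta$ we need $\dtv(P_0,P_1) \ge 1-2\delta$, and since $\dtv \le \sqrt{2}\,H$ this forces $H^2(P_0,P_1) \ge \Omega(1) - O(\delta) \ge 1 - O(\delta)$ (say $\ge 1 - c\delta$ for small $c$), hence $e^{-\Theta(\eps^2 m)} \le O(\delta)$, i.e.\ $m \ge \Omega(\log(1/\delta)/\eps^2)$. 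To upgrade the $n$-free bound to include $\sqrt{n}$, I would instead only randomize a $1/\sqrt{\cdot}$-fraction, or more directly use the sharper estimate $1 - H^2(P_0,P_1) \le \exp(-n \cdot \Theta(\eps^2 m/n)) = \exp(-\Theta(\eps^2 m))$ — wait, that already lost the $\sqrt n$; the correct route is to observe that what we actually need is $\dtv$ small, and $\dtv(P_0,P_1)^2 \le \sum_i H^2_i \cdot (\text{something})$ is too lossy, so instead I would use that $\dtv(P_0,P_1) \le 1 - \tfrac12 e^{-2 H^2_{\mathrm{tot}}}$ type bounds together with the fact that $H^2_{\mathrm{tot}} = \sum_i H^2_i = \Theta(\eps^2 m)$; requiring $H^2_{\mathrm{tot}} = \Omega(\log(1/\delta))$ already gives the additive term, while to get $\sqrt{n\log(1/\delta)}$ one must use the \emph{variance} of the log-likelihood ratio (a second-moment / $\chi^2$-type argument à la \cite{Paninski:08, WY16}) rather than Hellinger alone: the mean of the log-likelihood ratio is $\Theta(\eps^2 m)$ but its standard deviation is $\Theta(\eps^2 m/\sqrt n)$, so the distributions are indistinguishable unless the mean exceeds the standard deviation times $\sqrt{\log(1/\delta)}$, giving $\eps^2 m \gtrsim (\eps^2 m/\sqrt n)\sqrt{\log(1/\delta)}$, i.e.\ $m \gtrsim \sqrt{n\log(1/\delta)}/\eps^2$.

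\paragraph{Assembling the bound.} Combining the two regimes — the two-point argument giving $\Omega(\log(1/\delta)/\eps^2)$ and the many-coordinate moment argument giving $\Omega(\sqrt{n\log(1/\delta)}/\eps^2)$ — and noting $\max(a,b) \ge (a+b)/2$, yields the claimed $\Omega\left((\sqrt{n\log(1/\delta)} + \log(1/\delta))/\eps^2\right)$. The delicate points I expect to spend the most care on are: (i) making the pseudo-distribution-to-distribution reduction rigorous with the $\delta$-dependence intact, so that the exponentially small failure probabilities of ``$p$ doesn't sum to $1$'' or ``$p$ is accidentally close to uniform'' do not swamp the target confidence $\delta$ (this is exactly where the $\delta \gg 2^{-\Theta(n)}$ hypothesis is implicitly used, and for $\delta$ below that threshold the $\log(1/\delta)/\eps^2$ term dominates anyway); and (ii) getting the sharp $\sqrt n$ factor, which requires the second-moment control of the likelihood ratio rather than a crude Hellinger tensorization, and carefully tracking that the variance is a $\sqrt n$ factor smaller than the mean in the regime $m \lesssim n/\eps^2$.
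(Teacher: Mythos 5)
Your overall architecture matches the paper's: two separate lower bounds ($\Omega(\log(1/\delta)/\eps^2)$ via the two-point/coin-flipping reduction, and $\Omega(\sqrt{n\log(1/\delta)}/\eps^2)$ via Poissonization plus a random $\pm\eps$ perturbation), with pseudo-distribution machinery to handle the ``sums to $1\pm o(1)$'' issue. The first part is fine. But the second part has a concrete quantitative error that derails the rest of your argument.

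You assert that the per-coordinate squared Hellinger distance between $\mathrm{Poi}(\lambda)$ and $\tfrac12\mathrm{Poi}((1-\eps)\lambda)+\tfrac12\mathrm{Poi}((1+\eps)\lambda)$ (with $\lambda=m/n$) is $\Theta(\eps^2\lambda)=\Theta(\eps^2 m/n)$. That is the wrong order: because the mixture is \emph{symmetric} in $\pm\eps$, the first-order-in-$\eps$ contributions cancel, and the correct bound (Lemma~7 of~\cite{VV14}, used as Fact~\ref{fact:hellinger-Poisson-mixtures} in the paper) is $H^2 \le C\lambda^2\eps^4 = C(m/n)^2\eps^4$. With this fourth-order bound, the straight Hellinger tensorization you dismiss actually \emph{does} give the full $\sqrt{n}$ factor: $1-H^2(P_0,P_1) = \prod_i(1-H_i^2) \ge (1-C(m/n)^2\eps^4)^n \ge e^{-\Theta(m^2\eps^4/n)}$, combined with the sharp Hellinger-to-TV conversion (Lemma~\ref{lem:hellinger-tv-bound}: $H^2\le 1-\delta\Rightarrow\dtv\le 1-\delta^2/2$) forces $m^2\eps^4/n\gtrsim\log(1/\delta)$, i.e.\ $m\gtrsim\sqrt{n\log(1/\delta)}/\eps^2$. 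You concluded Hellinger was insufficient precisely because you had the wrong per-coordinate distance; the detour through a second-moment / log-likelihood-ratio argument is unnecessary once the $\lambda^2\eps^4$ scaling is in hand.

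Moreover, the replacement argument you propose is also broken as written: you state that the mean of the log-likelihood ratio is $\Theta(\eps^2 m)$ and its standard deviation is $\Theta(\eps^2 m/\sqrt n)$, and then assert the inequality ``mean $\gtrsim$ std $\cdot\sqrt{\log(1/\delta)}$'' yields $m\gtrsim\sqrt{n\log(1/\delta)}/\eps^2$. But with your numbers that inequality reads $\eps^2 m \gtrsim (\eps^2 m/\sqrt n)\sqrt{\log(1/\delta)}$, in which $m$ cancels entirely, giving only the vacuous $n\gtrsim\log(1/\delta)$ and no lower bound on $m$ at all. (Again the root cause is the same missed cancellation: the correct mean is $\Theta(m^2\eps^4/n)$, the fourth-order quantity; with that, mean-over-std is $m\eps^2/\sqrt n$ and the argument goes through.) So both the Hellinger route you abandoned and the LLR route you substituted for it are correct in outline but rest on the wrong per-coordinate scaling, and the computation you display in the LLR step is non sequitur. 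The fix is to use the symmetric-mixture Poisson Hellinger bound with the correct $\lambda^2\eps^4$ scaling (Fact~\ref{fact:hellinger-Poisson-mixtures}) together with the $H\mapsto\dtv$ conversion in the near-$1$ regime (Lemma~\ref{lem:hellinger-tv-bound}).
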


Theorem~\ref{thm:lb} will immediately follow from separate sample complexity 
lower bounds of $\Omega(\log(1/\delta)/\eps^2)$ and 
$\Omega(\sqrt{n \log(1/\delta)}/\eps^2)$ that we will prove.
We start with a simple sample complexity lower bound of $\Omega(\log(1/\delta)/\eps^2)$:

\begin{lemma}\label{lem:lb-log}
For all $n, \eps$, and $\delta$, any $(\eps, \delta)$ uniformity tester 
requires $\Omega(\log(1/\delta)/\eps^2)$ samples.
\end{lemma}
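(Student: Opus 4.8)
The plan is to reduce the uniformity testing problem to the classical problem of distinguishing a fair coin from an $\eps$-biased coin, for which the $\Omega(\log(1/\delta)/\eps^2)$ lower bound is standard (e.g., via a two-point argument using KL divergence or Hellinger distance). First I would fix the ``hard instance'': consider the uniform distribution $U_n$ versus the distribution $p$ that places mass $\frac{1+2\eps}{n}$ on the first $n/2$ coordinates (say $\lfloor n/2 \rfloor$ of them) and $\frac{1-2\eps}{n}$ on the remaining coordinates — up to rounding adjustments — so that $\dtv(p, U_n) = \Theta(\eps) \geq \eps$ after rescaling constants. (When $n$ is odd or very small one handles the parity issue by leaving one coordinate at $1/n$; when $n=2$ this is literally the biased-coin problem.)

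Next I would argue that a sample from either distribution, when we only record whether the sampled element lies in the ``first half'' block $A$ or the ``second half'' block $B$, is a Bernoulli random variable: it is $\mathrm{Bernoulli}(1/2)$ under $U_n$ and $\mathrm{Bernoulli}(\tfrac12 + \eps)$ (approximately, after the constant rescaling) under $p$. Therefore any $(\eps,\delta)$-tester for uniformity, run on $m$ samples, yields an algorithm that distinguishes these two coins with probability $1-\delta$ using $m$ coin flips — simply apply the coarsening map to each sample and feed the result into the tester. Hence a lower bound on the coin-distinguishing sample complexity transfers directly to $m$.

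Finally I would invoke (or quickly prove) the coin lower bound: distinguishing $\mathrm{Bernoulli}(1/2)$ from $\mathrm{Bernoulli}(1/2+\eps)$ with error probability $\delta$ requires $\Omega(\log(1/\delta)/\eps^2)$ samples. The clean way is via Hellinger / KL: the squared Hellinger distance between the two single-sample distributions is $\Theta(\eps^2)$, so by tensorization the Hellinger distance between the $m$-fold products is at most $O(\sqrt{m}\,\eps)$; if $m = o(\log(1/\delta)/\eps^2)$ then this is $o(\sqrt{\log(1/\delta)})$, and the standard inequality relating total variation distance of the product measures to their Hellinger distance forces $\dtv$ to be $1 - \delta^{o(1)}$, too close for any test to have both Type I and Type II error at most $\delta$. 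The main (minor) obstacle is just the bookkeeping around the coarsening map and the rounding when $n$ is small or odd — making sure the constructed $p$ genuinely has $\dtv(p,U_n)\geq \eps$ (after absorbing constants) and that the induced Bernoulli bias is $\Theta(\eps)$; everything else is a direct reduction plus a textbook two-point bound.
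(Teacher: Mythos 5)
Your proposal is correct and takes essentially the same route as the paper: both reduce the hard pair (uniform vs.\ a ``half heavy, half light'' perturbation) to distinguishing a fair coin from an $\eps$-biased one, handle odd $n$ by parking one coordinate at $1/n$, and invoke the standard $\Omega(\log(1/\delta)/\eps^2)$ two-point lower bound. One small wording issue: you describe ``apply the coarsening map to each sample and feed the result into the tester,'' but a uniformity tester expects samples from $[n]$, not coin flips; the cleaner phrasing (matching your intent) is either the lifting direction (given coin flips, emit a uniform element of block $A$ or $B$, then run the tester) or a data-processing observation (since $U_n$ and $p$ coincide conditioned on the block, the tester can only exploit the block indicator, i.e.\ the coin flip). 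Either way the reduction goes through, and the rest of your argument is the textbook bound.
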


\begin{proof}
If $n$ is odd, set the last probability to $1/n$, subtract $1$ from $n$, and invoke the following lower bound instance on the remaining elements. If $n$ is even, do the following. Consider the distribution $p$ which has probability $p_i=\frac{1+\varepsilon}{n}$ for each element $1\leq i\leq \frac{n}{2}$ 
and  $p_i=\frac{1-\varepsilon}{n}$ for each element $\frac{n}{2}\leq i \leq n$. 
Clearly, $\dtv(p, U_n) = \eps$.
Note that the probability that a sample comes from the first half of the domain is $\frac{1+\varepsilon}{2}$ and the probability that it comes from the second half of the domain is $\frac{1-\varepsilon}{2}$. Therefore, distinguishing $p$ from $U_n$ is equivalent to distinguishing between a fair coin 
and an $\eps$-biased coin. It is well-known (see, e.g., Chapter~2 of~\cite{Bar-Yossef02}) 
that this task requires $m=\Omega(\log(1/\delta)/\eps^2)$ samples.
\end{proof}

The rest of this section is devoted to the proof of the following lemma, which gives our desired lower bound:

\begin{lemma} \label{lem:lb-sqlog}
For all $n, \eps$, and $\delta$, any $(\eps, \delta)$ uniformity tester 
requires at least $\Omega(\sqrt{n \log(1/\delta)}/\eps^2)$ samples. 
\end{lemma}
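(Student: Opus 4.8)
The plan is to exhibit a random family of $\eps$-far distributions that is information-theoretically indistinguishable from $U_n$ whenever the sample size is below the target. First I would Poissonize the sampling: a tester using $m$ i.i.d.\ samples can be run on the first $m$ of $\Poi(2m)$ samples, failing only when fewer than $m$ arrive (probability $e^{-\Omega(m)}$), so it suffices to rule out testers in the Poissonized model, where a distribution $p$ produces independent counts $X_i \sim \Poi(2m p_i)$; this only replaces the allowed failure probability $\delta$ by $\delta + e^{-\Omega(m)}$, which is harmless (when $m$ is so small that $e^{-\Omega(m)} \gtrsim \delta$ we have $\log(1/\delta) = O(m)$ and the bound already follows from Lemma~\ref{lem:lb-log}, and similarly one may assume $m \le n/\eps^2$ throughout, since otherwise either the claimed bound holds trivially or Lemma~\ref{lem:lb-log} applies). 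Write $\nu := 2m/n$. The null instance is $p = U_n$, so each $X_i \sim \Poi(\nu)$ independently. For the alternative I would split $[n]$ into $n/2$ pairs and, independently for each pair, assign the masses $(1+\eps)/n$ and $(1-\eps)/n$ to its two coordinates in a uniformly random order; every realization is an honest distribution with $\dtv(p, U_n) = \eps$, which sidesteps the ``pseudo-distribution'' normalization issues that arise if the coordinates are instead perturbed i.i.d.\ (those can also be handled via the arguments of~\cite{WY16, DK16}).

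The heart of the argument is bounding the statistical distance between the null and alternative product laws restricted to a single pair. Under the null a pair contributes $\Poi(\nu)^{\otimes 2}$, and under the alternative the mixture $Q_1 := \tfrac12\,\Poi((1+\eps)\nu)\otimes\Poi((1-\eps)\nu) + \tfrac12\,\Poi((1-\eps)\nu)\otimes\Poi((1+\eps)\nu)$. Using the elementary identity $\sum_{k \ge 0} \Poi(a)(k)\,\Poi(b)(k)/\Poi(\nu)(k) = e^{\nu - a - b + ab/\nu}$, a direct computation gives $\chi^2(Q_1 \,\|\, \Poi(\nu)^{\otimes 2}) = \cosh(2\nu\eps^2) - 1$, so via $H^2 \le \tfrac12\chi^2$ the squared Hellinger distance of a pair is $O(\nu^2\eps^4)$ in the regime $\nu\eps^2 = O(1)$. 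The crucial feature --- and the reason the bound ends up with $1/\eps^2$ rather than $1/\eps$ --- is that $Q_1$ is mean-matched to $\Poi(\nu)^{\otimes 2}$ in each coordinate, so the first-order term in $\eps$ cancels and the distance is quadratic, not linear, in $\eps$.

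Finally I would tensorize over the $n/2$ independent pairs via multiplicativity of the Bhattacharyya coefficient: writing $P_0, P_1$ for the full product laws, $\sum_x \sqrt{P_0(x)P_1(x)} = (1 - H^2(\text{one pair}))^{n/2} \ge \exp(-\Theta(n\nu^2\eps^4)) = \exp(-\Theta(\eps^4 m^2/n))$. Since any tester with failure probability at most $\delta$ forces $\dtv(P_0, P_1) \ge 1 - 2\delta$, and $\dtv(P_0,P_1) \le \sqrt{1 - (\sum_x\sqrt{P_0(x)P_1(x)})^2}$, we obtain $\exp(-\Theta(\eps^4 m^2/n)) \le O(\sqrt\delta)$, i.e.\ $m = \Omega(\sqrt{n\log(1/\delta)}/\eps^2)$, which is the claim. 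The step I expect to be the main obstacle is getting the per-pair distance at exactly the right order $\Theta(\eps^4\nu^2)$: a careless estimate gives only $\Theta(\eps^2\nu)$, which would lose a factor of $1/\eps$, and the improvement relies entirely on symmetrizing the perturbation so that the per-coordinate means agree. The surrounding steps --- the Poissonization reduction, the $\chi^2$-to-Hellinger conversion, and the tensorization --- are routine.
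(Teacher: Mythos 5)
Your proof is correct, and it arrives at the same bound as the paper by essentially the same mechanism (Poissonize, show each coordinate-block has squared Hellinger distance $O(\eps^4 m^2/n^2)$, tensorize the Bhattacharyya coefficient, and convert to TV), but the construction of the alternative family is genuinely different in a way that simplifies things. The paper perturbs each $w_i = (1\pm\eps)/n$ i.i.d., which breaks normalization and therefore forces a detour through ``pseudo-distributions'' (Lemmas~\ref{lem:poi-lb} and~\ref{lem:dist-to-poi-lb}, with a Chernoff bound to control the rare event that the total mass drifts outside $[1-\eps/2, 1+\eps/2]$); it then invokes Lemma~7 of \cite{VV14} to bound $H^2\bigl(\Poi(\lambda),\,\tfrac12\Poi((1+\eps)\lambda)+\tfrac12\Poi((1-\eps)\lambda)\bigr)$ per coordinate. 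Your construction instead pairs up coordinates and randomizes the orientation of $\{(1+\eps)/n,(1-\eps)/n\}$ within each pair, so every realization is a bona fide distribution and the normalization machinery is unnecessary. You also replace the cited Hellinger lemma with a clean closed-form computation: the identity $\sum_k \Poi(a)(k)\Poi(b)(k)/\Poi(\nu)(k) = e^{\nu-a-b+ab/\nu}$ gives $\chi^2(Q_1\,\|\,\Poi(\nu)^{\otimes 2}) = \cosh(2\nu\eps^2)-1$ exactly, and the reason this is $\Theta(\nu^2\eps^4)$ rather than $\Theta(\nu\eps^2)$ is precisely your observation that the 50/50 mixture over orientations matches the per-coordinate means of the null, so the linear-in-$\eps$ term vanishes. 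This is the analogue of what makes Lemma~7 of \cite{VV14} quadratic in $\eps^2$ in the paper's route. Both routes lead to $\mathrm{BC}(P_0,P_1) \geq \exp(-\Theta(\eps^4 m^2/n))$ and hence $m=\Omega(\sqrt{n\log(1/\delta)}/\eps^2)$. One small caveat: you should state the Poissonization caveat slightly more carefully --- what you actually use is that Lemma~\ref{lem:lb-log} already gives $m \ge \Omega(\log(1/\delta)/\eps^2) \ge \Omega(\log(1/\delta))$, so the $e^{-\Omega(m)}$ failure term is at most $\delta^{\Omega(1)}$, which after a constant adjustment is absorbed into the allowed error --- but this is a write-up issue, not a gap.
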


To prove Lemma~\ref{lem:lb-sqlog}, we will construct two indistinguishable families of \emph{pseudo-distributions}. 
A pseudo-distribution $w$ is a non-negative measure, i.e., it is similar to a probability distribution except 
that the ``probabilities'' may sum to something other than $1$. We will require that our pseudo-distributions 
always sum to a quantity within a constant factor of $1$. A pair of pseudo-distribution families 
is said to be \emph{$\delta$-indistinguishable using $m$ samples} if no tester exists that can, 
for every pair of pseudo-distributions $w, w'$---one from each of the two families---distinguish the product distributions 
$\operatorname*{\bigotimes} \Poi(m w_i)$ versus $\operatorname*{\bigotimes} \Poi(m w'_i)$ with failure probability at most $\delta$.

This technique is fairly standard and has been used in \cite{WY16,VV14,DK16} to establish lower bounds for distribution testing problems. 
The benefit of the method is that it is much easier to show that pseudo-distributions are indistinguishable, 
as opposed to working with ordinary distributions. Moreover, lower bounds proven using pseudo-distributions 
imply lower bounds on the original distribution testing problem.

We will require the following lemma, whose proof is implicit in the analyses of \cite{WY16,VV14,DK16}:

\begin{lemma}\label{lem:poi-lb}
Let $\mathcal{P}'$ be a property of distributions. 
We extend $\mathcal{P}'$ to the unique property $\mathcal{P}$ of pseudo-distributions 
which agrees with $\mathcal{P}'$ on true distributions and is preserved under rescaling. 
Suppose we have two families $\mathcal{F}_1, \mathcal{F}_2$ of pseudo-distributions with the following properties:
\begin{enumerate}
\item All pseudo-distributions in $\mathcal{F}_1$ have property $\mathcal{P}$
and all those in $\mathcal{F}_2$ are $\eps$-far in total variation distance from any pseudo-distribution that has the property.
\item $\mathcal{F}_1$ and $\mathcal{F}_2$ are $\delta$-indistinguishable using $m$ samples.
\item Every pseudo-distribution in each family has $\ell_1$-norm within the interval $[\frac{1}{c_1},c_2]$, for some constants $c_1, c_2>1$. 
\end{enumerate}
Then there exist two families $\widetilde{\mathcal{F}}_1,\widetilde{\mathcal{F}}_2$ of probability distributions with the following properties:
\begin{enumerate}
\item All distributions in $\widetilde{\mathcal{F}}_1$ have property $\mathcal{P}$ 
and all those in $\widetilde{\mathcal{F}}_2$ are $\frac{\eps}{c_2}$-far in total variation distance from any distribution that has the property.
\item Any tester that can distinguish $\widetilde{\mathcal{F}}_1$ and $\widetilde{\mathcal{F}}_2$ has worst-case 
error probability $\geq \delta-2^{-c m}$, for some constant $c>0$ or requires $\Theta(1) \cdot m$ samples.
\end{enumerate}
\end{lemma}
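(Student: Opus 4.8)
The plan is to obtain $\widetilde{\mathcal{F}}_1,\widetilde{\mathcal{F}}_2$ by simply normalizing every pseudo-distribution to a genuine probability distribution, to verify that this normalization preserves the ``has property $\mathcal{P}$'' versus ``far from the property'' dichotomy up to the claimed constant-factor loss in $\eps$, and then to show that any tester for the normalized families that were simultaneously sample-efficient and too accurate could be converted, by de-Poissonization, into a tester distinguishing the original Poissonized product distributions with failure probability strictly below $\delta$, contradicting hypothesis~2.

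For the normalization, for each $w \in \mathcal{F}_1 \cup \mathcal{F}_2$ I would write $Z_w = \|w\|_1 \in [1/c_1, c_2]$, set $\widetilde{w} = w/Z_w$, and let $\widetilde{\mathcal{F}}_i = \{\widetilde{w} : w \in \mathcal{F}_i\}$. Since $\mathcal{P}$ is invariant under rescaling and coincides with $\mathcal{P}'$ on true distributions, every $w \in \mathcal{F}_1$ yields a $\widetilde{w}$ with property $\mathcal{P}'$. For $w \in \mathcal{F}_2$ and any true distribution $q$ with property $\mathcal{P}'$, the pseudo-distribution $Z_w q$ still has property $\mathcal{P}$, so by hypothesis~1, $\|w - Z_w q\|_1 \geq 2\eps$; dividing through by $Z_w \leq c_2$ gives $\dtv(\widetilde{w}, q) = \tfrac{1}{2Z_w}\|w - Z_w q\|_1 \geq \eps/c_2$. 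This is exactly Property~1 of the conclusion.

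For the sample-complexity transfer I would argue the contrapositive of Property~2. Let $T$ be a tester for $\widetilde{\mathcal{F}}_1$ versus $\widetilde{\mathcal{F}}_2$ using $k$ i.i.d.\ samples; if $k \geq c_0 m$ for the constant $c_0 > 0$ fixed below there is nothing to prove, so assume $k < c_0 m$ and, for contradiction, that $T$ has worst-case error $< \delta - 2^{-cm}$. I would then build a Poissonized tester $T'$: on input a sample from $\bigotimes_i \Poi(m w_i)$, use the standard fact that such a sample can be generated by first drawing $N \sim \Poi(m Z_w)$ and then $N$ i.i.d.\ draws from $\widetilde{w} = w/Z_w$; if $N \geq k$, run $T$ on the first $k$ of these draws and echo its verdict, and if $N < k$ output arbitrarily. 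Because $w \in \mathcal{F}_1$ exactly when $\widetilde{w}$ has the property and $w \in \mathcal{F}_2$ exactly when $\widetilde{w}$ is $\eps/c_2$-far from it, $T'$ errs only when $T$ errs or when $N < k$. Since $m Z_w \geq m/c_1$, choosing $c_0 < 1/(2c_1)$ forces $k < \tfrac12 m Z_w$, and a Poisson tail bound gives $\Pr[N < k] \leq e^{-\Omega(m)} \leq 2^{-cm}$ for a suitably small constant $c > 0$. Hence $T'$ has error $< (\delta - 2^{-cm}) + 2^{-cm} = \delta$, contradicting that $\mathcal{F}_1$ and $\mathcal{F}_2$ are $\delta$-indistinguishable with $m$ samples; therefore $T$ must have error $\geq \delta - 2^{-cm}$, which is Property~2.

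The only step with real content is the de-Poissonization: I expect the main care to lie in ensuring that replacing the random sample size $N$ by the deterministic value $k$ costs only an additive $2^{-\Omega(m)}$ in the failure probability and only a constant factor in the sample budget. This is precisely where hypothesis~3 ($\|w\|_1 \geq 1/c_1$) is used --- it forces $\Poi(m\|w\|_1)$ to concentrate comfortably above any $k = O(m)$ --- and it is the source of both the $2^{-cm}$ slack and the $\Theta(1)\cdot m$ sample threshold appearing in the statement; the remaining manipulations are routine bookkeeping with the constants $c_1$ and $c_2$.
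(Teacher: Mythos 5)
The paper does not give its own proof of this lemma; it cites \cite{WY16,VV14,DK16} and says the proof is ``implicit'' there, so there is no in-paper argument to compare yours against. Your proof supplies the missing details and is correct, and it is the standard one: normalize $\widetilde{w}=w/\|w\|_1$, observe that the rescaling-invariance of $\mathcal{P}$ together with $\|w\|_1\le c_2$ turns the $\eps$-separation for pseudo-distributions into an $\eps/c_2$-separation for the normalized distributions, and then de-Poissonize a hypothetical too-good tester $T$ for $\widetilde{\mathcal{F}}_1$ vs.\ $\widetilde{\mathcal{F}}_2$ by drawing $N\sim\Poi(m\|w\|_1)$ and feeding $T$ the first $k$ of the $N$ i.i.d.\ draws it implicitly contains, paying only $\Pr[N<k]\le 2^{-cm}$ (thanks to $\|w\|_1\ge 1/c_1$ and a Poisson lower tail). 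One small point worth making explicit if you write this up formally: a Poissonized tester in this setting observes the \emph{histogram} $(N_1,\dots,N_n)$, not an ordered sample sequence, so $T'$ should regenerate an ordered sequence by taking a uniformly random permutation of the multiset encoded by the histogram before handing the first $k$ entries to $T$; conditioned on $N\ge k$ this does reproduce $k$ i.i.d.\ draws from $\widetilde{w}$. With that clarification the argument is airtight.
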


In our case, the property of distributions $\mathcal{P}'$ is simply being the uniform distribution. 
The families of pseudo-distributions we will use for our lower bound are the family $\mathcal{F}_1$ 
that only contains the uniform distribution and the family $\mathcal{F}_2$ of all pseudo-distributions of the form 
$w_i = \frac{1 \pm \eps}{n}$ such that $|1-\sum_i w_i| \leq \eps/2$. Note that this constraint on the sum of the $w_i$'s 
ensures the first and last conditions needed to invoke Lemma~\ref{lem:poi-lb}\footnote{If we did not have this constraint, 
the first condition would not be satisfied (because e.g., a $w$ such that $w_i = (1+\eps)/n$ for all $i$ is not $\eps$-far 
from being proportional to the all-ones vector.}.

Furthermore, by Lemma~\ref{lem:lb-log}, the required number of samples $m$ 
satisfies $m \geq \Omega(\log(1/\delta))$. 
Ignoring constant factors,  we may assume that $m \geq c' \log(1/\delta)$ for any constant $c'>0$. 
In particular, by selecting $c'$ appropriately, we can guarantee that $2^{-cm} \leq \delta/3$, where $c$ 
is the constant in the last statement of Lemma~\ref{lem:poi-lb}.
Thus, the error probability guaranteed by Lemma~\ref{lem:poi-lb} 
for distinguishing the true distribution families is at least $(2/3) \delta$.

Thus, all that remains is to show that $\mathcal{F}_1$ and $\mathcal{F}_2$ 
are $\delta$-indistinguishable using $m$ samples. In order to show these families are indistinguishable, 
we show that it is impossible to distinguish whether the product distribution 
$\operatorname*{\bigotimes} \Poi(m w_i)$ has $w$ uniform or $w$ generated 
according to the following random process: we pick each $w_i$ independently by setting 
$w_i=\frac{1+\eps}{n}$ or $w_i=\frac{1-\eps}{n}$ each with probability $1/2$.

A distribution generated by this process has a small probability of not being in $\mathcal{F}_2$. 
Specifically, this happens iff it fails to satisfy the constraint on having a sum within $1 \pm \eps/2$. 
However, by an application of the Chernoff bound, it follows that this happens with probability at mots $\leq 2^{-\Theta(1) \cdot n}$. 
Since the bound in Lemma~\ref{lem:lb-log} is larger (up to constant factors) than the lower bound we presently wish to prove in the case that $\delta/3 > 2^{-\Theta(1) \cdot n}$, we will assume $\delta \leq 2^{-\Theta(1) \cdot n}$; 
in which case, the following lemma implies that we can still invoke Lemma~\ref{lem:poi-lb}, 
where the probability of not being in $\mathcal{F}_2$ is absorbed into our overall indistinguishability probability, 
and we get a final indistinguishability probability of at least $(2/3) \delta - \delta/3 = \delta/3$.

The following lemma is implicit in \cite{WY16,VV14,DK16}:

\begin{lemma}\label{lem:dist-to-poi-lb}
Let $\mathcal{P}$ be a property of pseudo-distributions. 
Suppose we have two families $\mathcal{F}_1,\mathcal{F}_2$ 
of pseudo-distributions and two distributions $\mathcal{D}_1,\mathcal{D}_2$ 
on $\mathcal{F}_1$ and $\mathcal{F}_2$ respectively with the following properties:
\begin{enumerate}
\item With probability each at least $1-\delta_1$, a distribution output by $\mathcal{D}_1$ is in $\mathcal{F}_1$ and a distribution output by $\mathcal{D}_2$ is in $\mathcal{F}_2$.
\item If we generate $w$ according to $\mathcal{D}_1$ or $\mathcal{D}_2$, then any algorithm for determining which family $w$ 
came from given access to $\operatorname*{\bigotimes} \Poi(m w_i)$ has worst case error probability at least $\delta_2$.
\end{enumerate}
Then $\mathcal{F}_1$ and $\mathcal{F}_2$ are $(\delta_2 - \delta_1)$-indistinguishable using $m$ samples.
\end{lemma}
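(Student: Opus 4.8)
The plan is to argue by contraposition. Assuming hypothesis~(1), I will show that if $\mathcal{F}_1$ and $\mathcal{F}_2$ were \emph{not} $(\delta_2-\delta_1)$-indistinguishable using $m$ samples, then hypothesis~(2) would fail. So suppose there is a tester $T$ --- a single (possibly randomized) map from an $m$-sample Poissonized outcome to a label in $\{1,2\}$ --- that for \emph{every} pair $w\in\mathcal{F}_1$, $w'\in\mathcal{F}_2$ distinguishes $\bigotimes_i \Poi(mw_i)$ from $\bigotimes_i \Poi(mw'_i)$ with failure probability at most $\delta_2-\delta_1$; equivalently, $\Pr[T=1\mid \text{data}\sim\bigotimes_i\Poi(mw_i)]\ge 1-(\delta_2-\delta_1)$ for every $w\in\mathcal{F}_1$, and symmetrically for $\mathcal{F}_2$.

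I would use $T$ verbatim as an algorithm for the task in~(2): sample $w$ from $\mathcal{D}_1$ or $\mathcal{D}_2$, feed the realized vector $\bigotimes_i\Poi(mw_i)$ to $T$, and announce ``family $b$'' when $T$ outputs $b$. The essential point is that $T$ is one fixed decision rule, so the pairwise guarantee applies uniformly to whichever pseudo-distribution was actually drawn. The error analysis is then a one-step conditioning. If $w\sim\mathcal{D}_1$, then with probability at least $1-\delta_1$ we have $w\in\mathcal{F}_1$, in which case (invoking the guarantee for the pair $(w,w')$ with any fixed $w'\in\mathcal{F}_2$) the probability that $T$ mislabels is at most $\delta_2-\delta_1$; on the remaining event, of probability at most $\delta_1$, I bound the mislabeling probability by $1$. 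Hence the overall error is at most $(1-\delta_1)(\delta_2-\delta_1)+\delta_1 = \delta_2 - \delta_1(\delta_2-\delta_1) < \delta_2$, and the symmetric computation gives the same when $w\sim\mathcal{D}_2$. This contradicts~(2), completing the proof.

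I do not anticipate a genuine obstacle; the argument is essentially bookkeeping. The two points that need care are: (i) the reduction must use a single tester valid for all pairs rather than a pair-dependent one --- this is exactly what the definition of $(\delta_2-\delta_1)$-indistinguishability supplies; and (ii) tracking the strict-versus-non-strict inequalities, which is where the $-\delta_1$ loss, and the strictness of the final bound (using $\delta_1>0$), come from. I should also flag the trivial edge cases ($\mathcal{F}_1$ or $\mathcal{F}_2$ empty, or $\delta_2\le\delta_1$), where the conclusion is vacuous and nothing needs to be shown.
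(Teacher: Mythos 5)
Your proof is correct. The paper itself provides no proof of this lemma (it is stated as ``implicit in'' \cite{WY16,VV14,DK16}), so there is no proof of record to compare against; the reduction you give --- using the single uniformly-good tester from the negation of $(\delta_2-\delta_1)$-indistinguishability to build a distinguisher for $\mathcal{D}_1$ vs.\ $\mathcal{D}_2$, paying $\delta_1$ for the event of falling outside the family --- is exactly the standard argument. Your care on the two points you flag is warranted: the definition of indistinguishability indeed quantifies $w,w'$ inside a single existential over testers, which is what the contrapositive needs, and the sharper bound $(1-\delta_1)(\delta_2-\delta_1)+\delta_1=\delta_2-\delta_1(\delta_2-\delta_1)$ (rather than the looser $\le\delta_2$) is what gives the strict contradiction with hypothesis~(2) when $0<\delta_1<\delta_2$.
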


Thus, we now simply need to show that $\mathcal{D}_1$ and $\mathcal{D}_2$ are hard to distinguish. 
Let $X_i,X_i^\prime$ be the random variables equal to the number of times the element $i$ is sampled 
in the completeness and soundness cases respectively. We will require a technical lemma that will be used to 
bound the Hellinger distance between any pair of corresponding coordinates in the completeness and soundness cases.
By $ \frac{1}{2}\Poi(\lambda_1) + \frac{1}{2}\Poi(\lambda_2)$, we denote a uniform mixture of the corresponding distributions.
We have

\begin{fact}[Lemma 7 of~\cite{VV14}] \label{fact:hellinger-Poisson-mixtures}
  For any $\lambda > 0, \eps < 1$ we have
  \[
    H^2\left( \Poi(\lambda), \frac{1}{2}\Poi((1+\eps)\lambda) + \frac{1}{2}\Poi((1-\eps)\lambda) \right) \leq C\lambda^2 \eps^4.
  \]
  for some constant $C$.
\end{fact}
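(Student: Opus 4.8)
The plan is to work directly with the Bhattacharyya–coefficient form of the squared Hellinger distance. Write $P = \Poi(\lambda)$ and $Q = \frac12\Poi((1+\eps)\lambda) + \frac12\Poi((1-\eps)\lambda)$. Since both are probability distributions and $P(k) = \Poi(\lambda)(k) > 0$ for every $k$, we have $H^2(P,Q) = 1 - \sum_{k\ge 0}\sqrt{P(k)Q(k)} = \frac12\sum_{k\ge 0} P(k)\bigl(1 - \sqrt{Q(k)/P(k)}\bigr)^2$. Setting $r_k \eqdef Q(k)/P(k) - 1 \ge -1$, I would invoke the elementary inequality $(1-\sqrt{1+r})^2 \le r^2$, valid for every $r \ge -1$ (substitute $s = \sqrt{1+r}$: it becomes $(s-1)^2 \le (s-1)^2(s+1)^2$, i.e. $1 \le (s+1)^2$). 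This gives $H^2(P,Q) \le \frac12\sum_k P(k)\, r_k^2 = \frac12\,\E_{k\sim\Poi(\lambda)}[r_k^2]$. The reason for using this crude-looking bound rather than Taylor-expanding $\sqrt{\cdot}$ is that $r_k$ approaches $-1$ in the tail (large $k$), where a second-order remainder would blow up; the inequality above has no such issue.

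The next step is to evaluate $\E_{k\sim\Poi(\lambda)}[r_k^2]$ exactly. Write $\psi_k(\eta) \eqdef \Poi((1+\eta)\lambda)(k)/\Poi(\lambda)(k) = e^{-\eta\lambda}(1+\eta)^k$, so that $r_k = \tfrac12\psi_k(\eps) + \tfrac12\psi_k(-\eps) - 1$. Expanding the square reduces everything to the quantities $\E[\psi_k(\eta)]$, $\E[\psi_k(\eta)^2]$, and $\E[\psi_k(\eps)\psi_k(-\eps)]$, all under $k\sim\Poi(\lambda)$, and each is computed in one line from the Poisson probability generating function $\E[z^k] = e^{\lambda(z-1)}$: one gets $\E[\psi_k(\eta)] = 1$, $\E[\psi_k(\eta)^2] = e^{\lambda\eta^2}$, and $\E[\psi_k(\eps)\psi_k(-\eps)] = \E[(1-\eps^2)^k] = e^{-\lambda\eps^2}$. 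Substituting, the terms simplify to the clean identity $\E_{k\sim\Poi(\lambda)}[r_k^2] = \tfrac12 e^{\lambda\eps^2} + \tfrac12 e^{-\lambda\eps^2} - 1 = \cosh(\lambda\eps^2) - 1$, hence $H^2(P,Q) \le \tfrac12\bigl(\cosh(\lambda\eps^2) - 1\bigr)$.

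Finally, I would convert this into the stated bound by a short case split, which is where the only genuine subtlety lies. If $\lambda\eps^2 \le 1$, then since $(\cosh x - 1)/x^2$ is a power series with nonnegative coefficients (hence increasing) with value $\cosh(1)-1 < 1$ at $x=1$, we get $\cosh(\lambda\eps^2)-1 \le (\lambda\eps^2)^2$, so $H^2(P,Q) \le \tfrac12\lambda^2\eps^4$. If instead $\lambda\eps^2 > 1$, the bound $\tfrac12(\cosh(\lambda\eps^2)-1)$ is useless (exponentially large) but unnecessary: $H^2(P,Q) \le 1 < (\lambda\eps^2)^2 = \lambda^2\eps^4$ trivially. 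In either case $H^2(P,Q) \le \lambda^2\eps^4$, so the claim holds with $C=1$; the hypothesis $\eps < 1$ is used only to guarantee $1-\eps>0$, so that $\psi_k(-\eps)$ and the distribution $\Poi((1-\eps)\lambda)$ are well defined. The main obstacle is thus conceptual rather than computational: one must recognize that one should avoid Taylor-expanding the square root in order to keep the Poisson tail under control, and that one should branch on the size of $\lambda\eps^2$ rather than seek a single estimate that is simultaneously tight for small $\lambda\eps^2$ and valid for large $\lambda\eps^2$.
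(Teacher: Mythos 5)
The paper does not actually prove this fact; it is imported wholesale as Lemma~7 of~\cite{VV14}, so there is no proof here for yours to be ``the same as.'' Judged on its own, your argument is correct and complete. The key reduction, $(1-\sqrt{1+r})^2 \le r^2$ for $r\ge -1$, amounts to the pointwise comparison $2H^2(P,Q)\le \chi^2(Q\,\|\,P)$, and the $\chi^2$-divergence then admits an \emph{exact} evaluation via the Poisson probability generating function: with $\psi_k(\eta)=e^{-\eta\lambda}(1+\eta)^k$, the three moments $\E[\psi_k(\eta)]=1$, $\E[\psi_k(\eta)^2]=e^{\lambda\eta^2}$, $\E[\psi_k(\eps)\psi_k(-\eps)]=e^{-\lambda\eps^2}$ give $\E[r_k^2]=\cosh(\lambda\eps^2)-1$ on the nose. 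The final case split on whether $\lambda\eps^2\le 1$ is exactly the right move: $\cosh x - 1\le x^2$ on $[0,1]$ (ratio is a power series with nonnegative coefficients, value $\cosh 1 - 1<1$ at the endpoint) supplies the sharp regime, and the trivial $H^2\le 1$ covers the rest. You also correctly identify why Taylor-expanding the square root would be a trap --- the likelihood ratio tends to $-1$ in the Poisson tail, where a second-order remainder is uncontrolled --- which is precisely the conceptual point. The only cosmetic nit: $\eps<1$ is not truly needed for $\Poi((1-\eps)\lambda)$ to make sense (the $\eps=1$ case gives a point mass, which is still a valid distribution); the hypothesis is inherited from~\cite{VV14} rather than forced by your proof.
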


We also require the following Lemma which gives a tighter relationship between Hellinger distance when the distance is close to $1$.

\begin{lemma}\label{lem:hellinger-tv-bound}
Any distributions with Hellinger-squared distance $H^2(p,q) \leq 1-\delta$ have total variation distance at most $\dtv(p,q) \leq 1-\delta^2/2$.
\end{lemma}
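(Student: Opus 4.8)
The plan is to relate total variation distance and Hellinger distance through the Bhattacharyya coefficient (fidelity) $F(p,q) := \sum_i \sqrt{p_i q_i}$. By definition, $H^2(p,q) = 1 - F(p,q)$, so the hypothesis $H^2(p,q) \le 1-\delta$ is exactly $F(p,q) \ge \delta$. On the other side, there is a standard inequality bounding the $\ell_1$ distance in terms of the fidelity: $\tfrac12\|p-q\|_1 \le \sqrt{1 - F(p,q)^2}$. Given these two facts, the conclusion is immediate: $\dtv(p,q) \le \sqrt{1-F^2} \le \sqrt{1-\delta^2} \le 1 - \delta^2/2$, where the last step uses $\sqrt{1-x} \le 1 - x/2$ for $x \in [0,1]$ with $x = \delta^2$.

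The one step that requires actual work is the inequality $\dtv(p,q) \le \sqrt{1 - F(p,q)^2}$. I would prove it as follows. Write $a_i = \sqrt{p_i}$, $b_i = \sqrt{q_i}$, so $\sum a_i^2 = \sum b_i^2 = 1$ and $F = \sum a_i b_i$. Then
\[
2\,\dtv(p,q) = \sum_i |a_i^2 - b_i^2| = \sum_i |a_i - b_i|\,(a_i + b_i) \le \Big(\sum_i (a_i-b_i)^2\Big)^{1/2}\Big(\sum_i (a_i+b_i)^2\Big)^{1/2}
\]
by Cauchy--Schwarz. Now $\sum_i (a_i - b_i)^2 = 2 - 2F$ and $\sum_i (a_i+b_i)^2 = 2 + 2F$, so the right-hand side equals $\sqrt{(2-2F)(2+2F)} = 2\sqrt{1-F^2}$. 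Dividing by $2$ gives $\dtv(p,q) \le \sqrt{1-F^2}$, as claimed.

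The main obstacle, such as it is, is simply remembering to invoke Cauchy--Schwarz on the correct factorization $|a_i^2 - b_i^2| = |a_i - b_i|(a_i+b_i)$ and recognizing that both resulting sums evaluate in closed form in terms of $F$; everything else is a one-line substitution. No case analysis or parameter tuning is needed, and the bound $\sqrt{1-x} \le 1-x/2$ is elementary (it follows from squaring, since $(1-x/2)^2 = 1 - x + x^2/4 \ge 1-x$). This gives the stated bound $\dtv(p,q) \le 1 - \delta^2/2$ under the hypothesis $H^2(p,q) \le 1-\delta$.
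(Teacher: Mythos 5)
Your proof is correct, and it takes a recognizably different route to the same end. Both arguments phrase the hypothesis as $F := \sum_i \sqrt{p_i q_i} \ge \delta$ and then invoke Cauchy--Schwarz once, but the factorizations differ. You apply Cauchy--Schwarz to $\sum_i |p_i - q_i| = \sum_i |\sqrt{p_i}-\sqrt{q_i}|\,(\sqrt{p_i}+\sqrt{q_i})$, which yields the sharp classical inequality $\dtv(p,q) \le \sqrt{1-F^2}$ and then a final elementary weakening $\sqrt{1-x} \le 1-x/2$. The paper instead writes $F = \sum_i \sqrt{\min(p_i,q_i)}\sqrt{\max(p_i,q_i)}$ and applies Cauchy--Schwarz there to get $F^2 \le \big(\sum_i \min(p_i,q_i)\big)\big(\sum_i \max(p_i,q_i)\big) \le 2\big(1-\dtv(p,q)\big)$, using $\sum_i \min(p_i,q_i) = 1-\dtv$ and the crude bound $\sum_i \max(p_i,q_i) \le 2$; rearranging gives the lemma directly. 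Your intermediate result $\dtv \le \sqrt{1-F^2}$ is a bit stronger (in fact, the paper's route would also give it if $\sum_i\max(p_i,q_i) = 1+\dtv$ were used instead of the bound by $2$), but both end at the same stated inequality with essentially equal effort; yours has the advantage of passing through a standard named inequality, while the paper's is a one-shot derivation tailored to the exact form needed.
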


The more standard inequality between these quantities only gives $\dtv(p,q) \leq \sqrt{2(1-\delta)}$ which is worse than the trivial bound of $1$ when $\delta$ is small.

\begin{proof}
Let $H^2(p,q) \leq 1-\delta$. Let $a_i:=\min(p_i,q_i)$ and $b_i=\max(p_i,q_i)$. Then we have
\begin{gather*}
1 - \delta \geq H^2(p,q) = \frac{1}{2} \sum_i (\sqrt{p_i} - \sqrt{q_i})^2 = 1 - \sum_i \sqrt{p_i} \sqrt{q_i} \\
\delta^2 \leq \left( \sum_i \sqrt{p_i} \sqrt{q_i} \right)^2 = \left( \sum_i \sqrt{a_i} \sqrt{b_i} \right)^2.
\end{gather*}

Applying Cauchy-Schwarz then yields
\[
\delta^2 \leq \left( \sum_i a_i \right) \left( \sum_i b_i \right) \leq (1-\dtv(p,q)) \cdot 2.
\]
Thus,
\[
\dtv(p,q) \leq 1-\delta^2/2.
\]
\end{proof}

\noindent We are now ready to prove Lemma~\ref{lem:lb-sqlog}.

\begin{proof}[{\bf Proof of Lemma~\ref{lem:lb-sqlog}:}]
As follows from the discussion preceding Fact~\ref{fact:hellinger-Poisson-mixtures}, it suffices 
to show that $\mathcal{D}_1$ and $\mathcal{D}_2$ are hard to distinguish. 
We will use Fact~\ref{fact:hellinger-Poisson-mixtures} to show that the Hellinger 
distance between the overall distributions is small, which implies their total variation distance is small, 
and hence that they cannot be distinguished with probability better than $\delta$.

Recall that each of the $n$ coordinates of the vectors output by these distributions 
is distributed according to $\Poi(m/n)$ for $\mathcal{D}_1$ vs. a uniform mixture 
of $\Poi((1\pm\eps)m/n)$ for $\mathcal{D}_2$.  A single coordinate then has
\[
 H^2(X_1, X_1') \leq C(m^2/n^2)\eps^4 \;,
\]
so the collection of all coordinates has
\[
  H^2(X, X') \leq 1 - (1 - C(m^2/n^2)\eps^4)^n \leq 1 - e^{-C(m^2/n)\eps^4}
\]
so by Lemma~\ref{lem:hellinger-tv-bound},
\[
  \dtv(X, X') \leq 1 - e^{-2C(m^2/n)\eps^4}/2.
\]
For this latter quantity to be at least $1-\delta$, we need
\[
  m =\Omega\left( (1/\eps^2) \cdot \sqrt{n \log (1/\delta)} \right)
\]
samples, as desired. The result then follows from
Lemma~\ref{lem:dist-to-poi-lb} and Lemma~\ref{lem:poi-lb}.  This
completes the proof.
\end{proof}

\section{Stochastic Domination for Statistics of the Histogram} \label{sec:sd}
In this section, we consider the set of statistics which 
are symmetric convex functions of the histogram 
(i.e., the number of times each domain element is sampled) of an arbitrary random variable $Y$.
We start with the following definition:
\begin{definition}\label{def:majorization}
Let $p=(p_1,\dots,p_n),q=(q_1,\dots,q_n)$ be probability distributions
and $p^{\downarrow},q^{\downarrow}$ denote the vectors with the same values as $p$ and $q$ respectively, 
but sorted in non-increasing order. We say that $p$ \emph{majorizes} $q$ (denoted by $p\succ q$) if 
\begin{equation}\label{eq:condition}
\forall k: \sum_{i=1}^k p^{\downarrow}_i \geq \sum_{i=1}^k q^{\downarrow}_i \;.
\end{equation}
\end{definition}

\noindent The following theorem from \cite{arnold12} gives an equivalent definition:
\begin{theorem}{\cite{arnold12}}\label{thm:majorization doubly stochastic}
Let $p=(p_1,\dots,p_n),q=(q_1,\dots,q_n)$ be any pair of probability distributions. Then, $p\succ q$ if and only if there exists a doubly stochastic matrix $A$ such that $q=Ap$. 
\end{theorem}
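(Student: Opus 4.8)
The plan is to prove both directions of the equivalence; the ``only if'' (doubly stochastic $\Rightarrow$ majorization) direction is a short averaging argument, and the ``if'' direction is the classical ``Robin Hood'' construction by $T$-transforms.

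For the direction $q = Ap$ with $A$ doubly stochastic $\Rightarrow p \succ q$: fix $k$ and let $S \subseteq [n]$ with $|S| = k$ be a set of indices achieving $\sum_{i=1}^k q^{\downarrow}_i = \sum_{i \in S} q_i$. Writing $q_i = \sum_j A_{ij} p_j$ and interchanging the order of summation, $\sum_{i \in S} q_i = \sum_j c_j p_j$ where $c_j := \sum_{i \in S} A_{ij} \in [0,1]$ and $\sum_j c_j = k$ (using that the rows of $A$ sum to $1$). A weighted sum $\sum_j c_j p_j$ with weights in $[0,1]$ summing to $k$ is maximized by putting weight $1$ on the $k$ largest entries of $p$, so $\sum_{i\in S} q_i \le \sum_{j=1}^k p^{\downarrow}_j$. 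Taking $k = n$ and using that the columns of $A$ also sum to $1$ gives $\sum_i q_i = \sum_j p_j$, so the inequality is an equality at $k = n$. This is precisely $p \succ q$.

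For the converse $p \succ q \Rightarrow q = Ap$, I would first reduce to the case where $p = p^{\downarrow}$ and $q = q^{\downarrow}$ are both sorted, since the sorting operations are permutation matrices (hence doubly stochastic) and the set of doubly stochastic matrices is closed under products. Then I argue by induction on the number of coordinates in which $p$ and $q$ differ. If they agree everywhere, take $A = I$. Otherwise, as $\sum p_i = \sum q_i$ and $p \succ q$, one checks that the first disagreement index satisfies $p > q$ there, so there exist indices $j < k$ with $p_j > q_j$, $p_k < q_k$, and $p_i = q_i$ for $j < i < k$ (and necessarily $p_j > p_k$). Apply the $T$-transform $T_\lambda = \lambda I + (1-\lambda)\,\Pi_{jk}$, where $\Pi_{jk}$ is the transposition matrix of coordinates $j$ and $k$ and $1-\lambda = \min\{p_j - q_j,\ q_k - p_k\}/(p_j - p_k) \in (0,1]$; this choice makes $(T_\lambda p)$ equal to $q$ in coordinate $j$ or $k$ while leaving coordinates outside $\{j,k\}$ untouched. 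Each $T_\lambda$ is doubly stochastic. The key lemma to verify is that $p \succ T_\lambda p \succ q$ and that $T_\lambda p$ agrees with $q$ in strictly more coordinates than $p$ does; iterating, one obtains $q$ from $p$ after finitely many such transforms, and the product of all the $T_\lambda$'s (together with the sorting permutations) is the desired doubly stochastic matrix $A$.

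The main obstacle is the verification of this key lemma in the inductive step: confirming that the Robin Hood transfer preserves every partial-sum inequality $\sum_{i=1}^{\ell} (T_\lambda p)^{\downarrow}_i \ge \sum_{i=1}^{\ell} q^{\downarrow}_i$ (which is cleanest to check directly against the sorted partial sums of $q$, without needing $T_\lambda p$ to remain sorted), and that it strictly increases the number of agreements so the recursion terminates. The remaining ingredients — closure of the doubly stochastic matrices under multiplication, and the averaging bound in the first direction — are routine.
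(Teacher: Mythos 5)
The paper does not reprove this statement; it is quoted as a black box from \cite{arnold12}, and it is the classical Hardy--Littlewood--P\'olya / Birkhoff characterization of majorization. Your sketch is the standard textbook proof of that result, so there is no ``paper's own argument'' to compare against beyond the citation.

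As a plan it is essentially sound, and your own assessment of where the work lies is accurate: the forward direction (doubly stochastic $\Rightarrow$ majorized) is complete as written, and the converse hinges on the $T$-transform induction. One concrete imprecision to fix in the converse: after sorting, it is \emph{not} true that the first disagreement index $j$ (which indeed has $p_j>q_j$) automatically admits a $k>j$ with $p_k<q_k$ and $p_i=q_i$ for $j<i<k$; for instance $p=(5,4,1)$, $q=(4,3,3)$ has $p\succ q$ and the first disagreement at $j=1$, yet $p_2>q_2$ as well. The correct choice is $j=$ the \emph{largest} index with $p_j>q_j$ and $k=$ the smallest index $>j$ with $p_k<q_k$ (such $k$ exists because $\sum_{i\le j}(p_i-q_i)>0$ forces deficit after $j$); by maximality of $j$ the intermediate coordinates satisfy $p_i=q_i$, and $p_j>q_j\ge q_k>p_k$ gives $p_j>p_k$. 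With that choice the transfer amount $d=\min\{p_j-q_j,\;q_k-p_k\}$ locks one coordinate onto $q$, $T_\lambda p$ in fact remains sorted (so you need not worry about re-sorting), and the required inequality $\sum_{i\le\ell}(T_\lambda p)_i\ge\sum_{i\le\ell}q_i$ for $j\le\ell<k$ reduces to $\sum_{i\le\ell}(p_i-q_i)\ge d$, which follows from $\sum_{i<j}(p_i-q_i)\ge 0$ (majorization), $p_j-q_j\ge d$, and $p_i=q_i$ for $j<i\le\ell$. The reduction via permutation matrices and the averaging bound in the forward direction are fine as written.
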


\noindent \textbf{Remark:}
It is shown in \cite{arnold12} that multiplying the distribution $p$ by a doubly stochastic matrix is equivalent 
to performing a series of so called ``Robin hood operations'' and permutations of elements. 
Robin hood operations are operations in which probability mass in transferred 
from heavier to lighter elements. For more details, the reader is referred to \cite{arnold12,MOA79}.

\medskip

\noindent {Note that Definition \ref{def:majorization} defines a partial order 
over the set of probability distributions. We will see that the uniform distribution is a minimal element 
for this partial order, which directly follows as a special case of the 
following lemma.}

\begin{lemma}\label{lem:averaging-implies-majorization}
Let $p$ be a probability distribution over $[n]$ and $S\subseteq [n]$. 
Let $q$ be the distribution which is identical to $p$ on $[n]\setminus S$, 
and for every $i \in S$ we have $q_i=\frac{{p}(S)}{\vert S\vert}$, 
where $\vert S\vert$ denotes the cardinality of $S$. 
Then, we have that $p\succ q$.    
\end{lemma}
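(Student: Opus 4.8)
The plan is to verify the majorization condition \eqref{eq:condition} directly from the definition, by comparing the partial sums of the sorted versions of $p$ and $q$. The key observation is that $q$ is obtained from $p$ by replacing the entries indexed by $S$ with their common average $a := p(S)/|S|$, while leaving the entries outside $S$ untouched; in particular $q$ and $p$ have the same total mass and the same multiset of values off $S$. So I would fix $k$ and bound $\sum_{i=1}^k q^{\downarrow}_i$ from above by $\sum_{i=1}^k p^{\downarrow}_i$.

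First I would set up notation: let $a = p(S)/|S|$, and think of the sorted vector $q^\downarrow$ as interleaving the off-$S$ values of $p$ (unchanged) with $|S|$ copies of $a$. For a given prefix length $k$, the top $k$ entries of $q^\downarrow$ consist of some number $j$ of the unchanged (off-$S$) entries — necessarily the $j$ largest ones among them — together with $\min(k-j, |S|)$ copies of $a$ (this last count is just $k-j$ as long as $k \le n$, which it is). The corresponding sum is therefore (sum of the $j$ largest off-$S$ values of $p$) $+\ (k-j)\,a$. Now I would compare this to $\sum_{i=1}^k p^\downarrow_i$. The top $k$ entries of $p^\downarrow$ include those same $j$ off-$S$ values (they are a subset of the off-$S$ values, and $p$ and $q$ agree there, and $p^\downarrow$'s top $k$ certainly contains the $j$ largest off-$S$ values that beat the threshold — here one must be slightly careful, but the monotone structure makes it work out), plus $k-j$ entries drawn from $S$ in $p$. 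It thus suffices to show that the sum of the top $k-j$ entries of $p$ restricted to $S$ is at least $(k-j)\,a$; but that is immediate because the average of \emph{all} $|S|$ entries of $p|_S$ is exactly $a$, so the average of the top $k-j$ of them is at least $a$, giving sum at least $(k-j)a$. Summing the two comparisons yields $\sum_{i=1}^k q^\downarrow_i \le \sum_{i=1}^k p^\downarrow_i$.

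An alternative, cleaner route is to invoke Theorem~\ref{thm:majorization doubly stochastic}: exhibit a doubly stochastic matrix $A$ with $q = Ap$. Here $A$ is block-diagonal, acting as the identity on coordinates in $[n]\setminus S$ and as the all-equal averaging matrix $\tfrac{1}{|S|}\mathbf{1}\mathbf{1}^\intercal$ on the coordinates in $S$; this matrix is manifestly doubly stochastic (nonnegative, every row and column sums to $1$), and applying it to $p$ replaces each $p_i$, $i \in S$, with $p(S)/|S| = q_i$ while fixing the rest, so $q = Ap$ and hence $p \succ q$. I would likely present this version as the main argument and relegate the direct prefix-sum computation to a remark, since it is a one-line construction.

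The main obstacle, such as it is, lies only in the direct argument: making rigorous the claim that when we pass to sorted order, the top-$k$ prefix of $q^\downarrow$ really does decompose as ``$j$ of the unchanged large values plus $k-j$ copies of $a$'' with the \emph{same} $j$ governing which unchanged values appear in the top-$k$ prefix of $p^\downarrow$ — i.e. handling ties between $a$ and the off-$S$ values, and checking that replacing a block by its average cannot push an off-$S$ value from inside the prefix to outside it in a way that breaks the inequality. This is routine once one notes that averaging only moves the $S$-values toward each other and toward $a$, never changing the off-$S$ values, but it requires a careful case split. Using the doubly-stochastic-matrix characterization sidesteps this entirely, which is why I would lead with it.
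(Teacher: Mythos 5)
Your main argument — exhibiting the block doubly stochastic matrix $A$ that acts as the identity off $S$ and as $\tfrac{1}{|S|}\mathbf{1}\mathbf{1}^\intercal$ on $S$, observing $q=Ap$, and invoking Theorem~\ref{thm:majorization doubly stochastic} — is exactly the paper's proof. (As a side note, the direct prefix-sum sketch you relegate to a remark is indeed the fiddly route: the number $j$ of off-$S$ entries in the top-$k$ prefix of $q^\downarrow$ need not equal the corresponding count for $p^\downarrow$, so one cannot line up the two sums term-by-term as written; but you already flag this and correctly lead with the matrix argument.)
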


\begin{proof}
Let $A=(a_{ij})$ be the doubly stochastic matrix $A=(a_{ij})$ with entries: 
\[
a_{ij}= \begin{cases}
1 & \textrm{ if } i=j\not\in S\\ 
\frac{1}{|S|} & \textrm{ if } i \in S \wedge j \in S \\
0   &  \textrm{otherwise}
\end{cases} \;.
\]
Observe that $q=Ap$. Therefore, Theorem \ref{thm:majorization doubly stochastic} implies that $p \succ q$.
\end{proof}

{In the rest of this section, we use the following standard terminology: 
We say that a real random variable $A$ stochastically dominates a real random variable $B$ 
if for all $x \in \R$ it holds $\Pr[A > x] \geq \Pr[B > x]$.}
We now state the main result of this section (see Section~\ref{ssec:dom} for the proof):

\begin{lemma}\label{lm:general domination}
Let $f: \R^n \to \R$ be a symmetric convex function and $p$ be a distribution over $[n]$. 
Suppose that we draw $m$ samples from $p$, and let $X_i$ denote the number of times we sample element $i$. 
Let $g(p)$ be the random variable $f(X_1,X_2,\ldots,X_n)$. 
Then, for any distribution $q$ over $[n]$ such that $p\succ q$, 
we have that $g(p)$ stochastically dominates $g(q)$.
 \end{lemma}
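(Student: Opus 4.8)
The plan is to reduce the statement about arbitrary majorization to the single-step case handled by Lemma~\ref{lem:averaging-implies-majorization} plus symmetry, and then to couple the sampling processes. Since $p \succ q$, Theorem~\ref{thm:majorization doubly stochastic} gives a doubly stochastic matrix $A$ with $q = Ap$. By the Birkhoff--von Neumann theorem, $A$ is a convex combination of permutation matrices, and (as recalled in the remark after Theorem~\ref{thm:majorization doubly stochastic}) the action of $A$ decomposes into a finite sequence of permutations and ``Robin Hood'' averaging operations, each of which is an instance of the map in Lemma~\ref{lem:averaging-implies-majorization} with $|S| = 2$. Permutations do not change the distribution of $g(\cdot)$ because $f$ is symmetric, so it suffices to prove the lemma when $q$ is obtained from $p$ by a single two-element averaging step: $q_a = q_b = (p_a + p_b)/2$ for one pair $\{a,b\}$, and $q_i = p_i$ otherwise. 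Chaining these stochastic dominations (stochastic dominance is transitive) then yields the general case.

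For the single-step case, the plan is to exhibit an explicit coupling. Draw $m$ i.i.d.\ labels from $q$; I will produce from them a sample of size $m$ from $p$ on the same probability space so that the two histograms differ only in coordinates $a$ and $b$, and the $p$-histogram restricted to $\{a,b\}$ is a ``less balanced'' version of the $q$-histogram. Concretely, condition on the total count $N$ of samples landing in $\{a,b\}$; under $q$ this count splits as $\Bin(N, 1/2)$ between $a$ and $b$, whereas under $p$ it splits as $\Bin(N, p_a/(p_a+p_b))$. Both are sums of $N$ i.i.d.\ bits, and since $\Bin(N,1/2)$ is the ``most concentrated'' such binomial in the convex order, I can couple a $\Bin(N,1/2)$ variable $Y$ and a $\Bin(N, r)$ variable $Z$ (with $r = p_a/(p_a+p_b)$) so that, writing $(X_a, X_b) = (Y, N-Y)$ for $q$ and $(X_a', X_b') = (Z, N-Z)$ for $p$, the pair $(X_a',X_b')$ is obtained from $(X_a,X_b)$ by moving mass away from balance. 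Holding all other coordinates fixed at their common values and using that $t \mapsto f(\ldots, t, \ldots, N-t, \ldots)$ is convex in $t$ (a one-dimensional convex function, being the restriction of the convex $f$ to an affine line), we get $g(p) \ge g(q)$ pointwise on this coupling after taking the appropriate deterministic rearrangement. Integrating out $N$ preserves the domination.

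The main obstacle I anticipate is making the one-dimensional convexity argument yield \emph{stochastic} domination rather than merely a comparison of expectations. The clean way is: by convexity of $\phi(t) := f(\text{fixed}; t, N - t)$ and symmetry $\phi(t) = \phi(N-t)$, the function $\phi$ is nondecreasing in $|t - N/2|$ along the integer lattice; then it suffices to couple $Y \sim \Bin(N,1/2)$ and $Z \sim \Bin(N,r)$ so that $|Z - N/2| \ge |Y - N/2|$ almost surely, which is possible precisely because $\Bin(N,1/2)$ is stochastically dominated by $\Bin(N,r)$ in the ``distance from the mean $N/2$'' sense (equivalently, $|Y - N/2|$ is stochastically smaller than $|Z - N/2|$). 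Establishing this last coupling is the crux; I would prove it by the standard quantile/monotone coupling for the two distributions of the folded variable, checking that the CDF of $|Y-N/2|$ dominates that of $|Z-N/2|$ by a direct comparison of binomial tails (using symmetry of the $p=1/2$ case and unimodality). Once that coupling is in hand, $\phi(Z) \ge \phi(Y)$ almost surely, hence $g(p) \succeq g(q)$ conditionally on $N$, and unconditionally by averaging; transitivity over the finitely many Robin Hood steps completes the proof.
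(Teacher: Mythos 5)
Your overall architecture --- decompose the majorization $p \succ q$ into elementary two--coordinate steps, establish stochastic domination for a single step by coupling, then chain by transitivity --- is the same as the paper's. The inner coupling argument, however, is genuinely different, and there are two real gaps, one of which is an outright error.

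\textbf{Gap 1 (an error in the decomposition).} You reduce to the case where one step is a \emph{full} averaging $q_a = q_b = (p_a + p_b)/2$ of a pair, i.e.\ Lemma~\ref{lem:averaging-implies-majorization} with $\lvert S\rvert = 2$. This does not generate all of majorization. Take $n=2$, $p = (1,0)$, $q = (0.7, 0.3)$: then $p \succ q$, but the only distribution reachable from $p$ by full pair-averagings is $(1/2,1/2)$, and no further full averaging moves you off it. The elementary step the Hardy--Littlewood--P\'olya/Birkhoff decomposition actually produces is a \emph{partial} transfer (T-transform): the $2\times 2$ doubly stochastic map of Equation~\eqref{2-element averaging repeat} with a free parameter $w \in [1/2,1]$, which the paper parameterizes explicitly. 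To repair this you must generalize your single-step argument to compare a $\Bin\!\left(N, p_a/(p_a+p_b)\right)$ split against a $\Bin\!\left(N, q_a/(q_a+q_b)\right)$ split with both ratios at most $1/2$, not just against $\Bin(N,1/2)$.

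\textbf{Gap 2 (the acknowledged crux).} The folded-binomial domination --- that one can couple $Z \sim \Bin(N, r_1)$ and $Y \sim \Bin(N, r_2)$ with $r_1 < r_2 \le 1/2$ so that $\lvert Z - N/2\rvert \geq \lvert Y - N/2 \rvert$ a.s.\ --- is true, but ``direct comparison of binomial tails using unimodality'' is not yet a proof, and the intuition ``$\Bin(N,1/2)$ is most concentrated'' is slippery (its \emph{variance} is actually maximal among $\Bin(N,r)$). The clean route is monotone likelihood ratio of the folded pmf: pairing $j$ with $N-j$ gives $\Pr[\lvert \Bin(N,r) - N/2 \rvert = k] \propto (4r(1-r))^{N/2}\cosh\!\big(k\ln\frac{1-r}{r}\big)$ (with the obvious adjustment at $k=0$), and the ratio of these across two values of $r\le 1/2$ is monotone in $k$; a quantile coupling then gives the a.s.\ ordering. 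Once this is carried out (and Gap~1 is fixed by taking $r_2 < 1/2$), the rest of your argument --- that $\phi(t) = f(\ldots; t, N-t)$ is a symmetric convex, hence $\lvert t - N/2\rvert$-monotone, function restricted to integers, and that integrating out $N$ preserves domination --- is correct.

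\textbf{Comparison with the paper.} Your route is a noticeably different decomposition of the problem. The paper's single-step Lemma~\ref{lm:domination} couples the $q$-samples to the $p$-samples directly (flip a $b$-draw to an $a$-draw with probability $\tfrac{(1-w)(p_b - p_a)}{p_b}$), conditions on the unordered pair $\{X_{low},X_{high}\}$ and the swap count $X_{mid}$, and applies the two-coordinate Jensen step (Lemma~\ref{yensen}) so that the whole comparison boils down to $p_a^y q_b^z \geq p_a^z q_b^y$ for $y\le z$ --- an inequality that is immediate from $q_b > p_a$. This sidesteps any binomial estimate. Your approach instead isolates the two-coordinate subproblem by conditioning on $N = X_a + X_b$ and identifies a self-contained probabilistic fact (folded-binomial ordering) as the driver, which is conceptually attractive and potentially reusable, but buys that clarity at the cost of a nontrivial auxiliary lemma that the paper's coupling avoids having to prove.
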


\medskip

As a simple consequence of the above, we obtain the following:

\begin{fact}\label{cor:mean-domination}
Let $p$ be a distribution on $[n]$ and $S \subseteq [n]$. 
Let $p^\prime$ be the distribution which is identical to $p$ in $[n]\setminus S$ and the probabilities in $S$ are averaged 
(i.e., $p_i^\prime=\frac{p(S)}{\vert S\vert}$). Then, we have that $\mu(p^\prime)\leq \mu(p)$, where $\mu(\cdot)$ denotes the expectation of our statistic as defined in Section~\ref{sec:uniform}. In particular, $\mu(U_n) \leq \mu(p)$ for all $p$.
\end{fact}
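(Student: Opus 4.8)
\textbf{Proof proposal for Fact~\ref{cor:mean-domination}.}
The plan is to derive this immediately as a corollary of the two preceding results: Lemma~\ref{lem:averaging-implies-majorization} and Lemma~\ref{lm:general domination}. First I would observe that the expectation $\mu(\cdot)$ of our test statistic is exactly $\mu(p) = \E[S(X_1,\dots,X_n)]$, where $S(x) = \frac{1}{2}\sum_{i=1}^n |x_i/m - 1/n|$ is a symmetric convex function of the histogram vector (it is a sum of absolute values of affine functions of the coordinates, hence convex, and it is invariant under permuting the coordinates, hence symmetric). So $\mu$ falls into exactly the class of statistics covered by Lemma~\ref{lm:general domination}, with $f = S$ and $g(p) = S(X_1,\dots,X_n)$.

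Next, given $p$ and $S \subseteq [n]$, let $p'$ be the distribution obtained from $p$ by averaging the probabilities within $S$, i.e. $p'_i = p(S)/|S|$ for $i \in S$ and $p'_i = p_i$ otherwise. By Lemma~\ref{lem:averaging-implies-majorization} we have $p \succ p'$. Then Lemma~\ref{lm:general domination} (applied with the symmetric convex $f = S$) tells us that the random variable $g(p) = S(X_1,\dots,X_n)$ drawn from $p$ stochastically dominates $g(p')$ drawn from $p'$. Taking expectations of both sides --- stochastic domination of real random variables implies their expectations are ordered, since $\E[A] = \int_0^\infty \Pr[A > x]\,dx - \int_{-\infty}^0 \Pr[A \le x]\,dx$ and $S \ge 0$ --- we conclude $\mu(p) = \E[g(p)] \geq \E[g(p')] = \mu(p')$, which is the claimed inequality.

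For the final assertion, I would take $S = [n]$: averaging all $n$ coordinates of any distribution $p$ produces the uniform distribution $U_n$ (each entry becomes $p([n])/n = 1/n$). Hence $p \succ U_n$ for every distribution $p$, and the inequality just established gives $\mu(U_n) \leq \mu(p)$ for all $p$; equivalently, $U_n$ is a minimal element of the majorization order and minimizes $\mu$. There is essentially no obstacle here: all the real content lives in Lemma~\ref{lm:general domination} (the stochastic domination statement) and Lemma~\ref{lem:averaging-implies-majorization} (that averaging implies majorization), both of which are already proven; the only thing to check is the trivial point that our statistic $S$ is symmetric and convex, and that stochastic domination passes to expectations.
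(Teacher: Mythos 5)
Your proposal is correct and follows exactly the paper's own route: invoke Lemma~\ref{lem:averaging-implies-majorization} to get $p \succ p'$, then Lemma~\ref{lm:general domination} (noting $S$ is a symmetric convex function of the histogram) to get stochastic domination, and pass to expectations. Your added remarks --- checking convexity/symmetry of $S$ explicitly, recalling why stochastic domination orders expectations, and taking $S=[n]$ for the $U_n$ case --- are simply spelled-out versions of steps the paper leaves implicit.
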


\begin{proof}
Recall that our statistic applies a symmetric convex function $f$ to the histogram of the sampled distribution.
Since $p^\prime$ is averaging the probability masses on a subset $S\subseteq [n]$, 
Lemma~\ref{lem:averaging-implies-majorization} gives us that $p\succ p^\prime$.
Therefore, by  Lemma \ref{lm:general domination} we conclude that $g(p)$ stochastically
dominates $g(p^\prime)$, which implies that:
$\mu(p)=\E[g(p)] \geq \E[g(p^\prime)]=\mu(p^\prime)$, as was to be shown.
\end{proof}

The following lemma shows that given an arbitrary distribution $p$ over $[n]$ that is $\varepsilon$-far from the uniform distribution $U_n$, 
if we average the heaviest $\lfloor n/2\rfloor$ elements and then the lightest $\lfloor n/2\rfloor$ elements, 
we will get a distribution that is  $\varepsilon^\prime > \varepsilon/2$-far from uniform. 
\begin{lemma}\label{lem:averaging-preserves-tv}
  Let $p$ be a probability distribution and
  $p^\prime$ be the distribution obtained from
  $p$ after averaging the
  $\lfloor\frac{n}{2}\rfloor$ heaviest and the
  $\lfloor\frac{n}{2}\rfloor$ lightest elements separately.  Then, the
  following holds:
\[
\frac{\Vert p-U_n\Vert_1}{2}\leq \Vert p^\prime-U_n\Vert_1\leq \Vert p-U_n\Vert_1 \;.
\] 
\end{lemma}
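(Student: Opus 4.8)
The plan is to prove the two inequalities separately, using the structure of the averaging operation. Write $\Delta = p - U_n$, so $\Vert p - U_n \Vert_1 = \sum_i |\Delta_i|$. Let $H$ be the set of the $\lfloor n/2 \rfloor$ heaviest elements of $p$ and $L$ the set of the $\lfloor n/2 \rfloor$ lightest (if $n$ is odd, one element is in neither set, and $p'$ agrees with $p$ there). After averaging, for $i \in H$ we have $p'_i = p(H)/|H|$, and for $i \in L$ we have $p'_i = p(L)/|L|$, so $\Delta'_i = p'_i - 1/n$ is constant on $H$ and constant on $L$.

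For the upper bound, I would invoke the fact that averaging is a coordinate-wise projection that cannot increase the $\ell_1$ distance to a constant vector. More concretely, $\sum_{i \in H} |\Delta'_i| = |H| \cdot \bigl| \tfrac{1}{|H|}\sum_{i \in H} \Delta_i \bigr| = \bigl| \sum_{i \in H} \Delta_i \bigr| \le \sum_{i \in H} |\Delta_i|$ by the triangle inequality, and likewise on $L$; summing over the two blocks (and the leftover coordinate, if any, where nothing changes) gives $\Vert p' - U_n \Vert_1 \le \Vert p - U_n\Vert_1$. Alternatively, since $p \succ p'$ by Lemma~\ref{lem:averaging-implies-majorization} and $x \mapsto \Vert x - U_n\Vert_1$ is symmetric and convex, this is immediate from the theory of Schur-convex functions; I would use whichever is cleaner to state.

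For the lower bound, the key observation is that all the heaviest $\lfloor n/2\rfloor$ coordinates of $p$ have $p_i \ge 1/n$ and all the lightest $\lfloor n/2\rfloor$ have $p_i \le 1/n$ — this holds because the average of each half is on the correct side of $1/n$, and the "heavy/light" split puts every coordinate $\ge$ every coordinate on the light side; if $n$ is odd the median element can be placed appropriately. Hence $\Delta_i \ge 0$ for $i \in H$ and $\Delta_i \le 0$ for $i \in L$, so there is no cancellation when we sum $\Delta_i$ within a block: $\bigl|\sum_{i \in H}\Delta_i\bigr| = \sum_{i \in H} |\Delta_i| = \tfrac12 \Vert \Delta|_H\Vert_1$ and similarly on $L$. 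Also, because $p$ and $p'$ are both distributions, $\sum_i \Delta_i = \sum_i \Delta'_i = 0$, which forces $p(H) \ge 1/2 \ge p(L)$-type balance; combined with the fact that the leftover coordinate (odd $n$) contributes $|\Delta|$ at most $\max(\Vert\Delta|_H\Vert_1, \Vert\Delta|_L\Vert_1)$ worth to one side... Actually the clean route: $\Vert p - U_n\Vert_1 = \Vert \Delta|_H\Vert_1 + \Vert\Delta|_L\Vert_1 + |\Delta_{\text{mid}}|$ and $\Vert p' - U_n\Vert_1 = \bigl|\sum_{i\in H}\Delta_i\bigr| + \bigl|\sum_{i\in L}\Delta_i\bigr| + |\Delta_{\text{mid}}| = \Vert\Delta|_H\Vert_1 + \Vert\Delta|_L\Vert_1 + |\Delta_{\text{mid}}| = \Vert p - U_n\Vert_1$ when $\Delta$ has a sign on each block — so in fact $\Vert p' - U_n\Vert_1 = \Vert p - U_n\Vert_1$ exactly in that case, and the factor of $2$ slack is only needed to absorb the degenerate situation where the heavy/light split is forced to be "wrong" for a few near-median coordinates (i.e., some supposedly-heavy coordinate actually has $p_i < 1/n$). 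The main obstacle is exactly this bookkeeping: I must show that the coordinates that sit on the wrong side of $1/n$ after the split contribute at most half of $\Vert p - U_n\Vert_1$ in total, which follows because such "misplaced" mass on the heavy side is bounded by the mass on the light side that is $\ge 1/n$ (there is none once the split is correct), so a short argument bounding the total "wrong-side" $\ell_1$ weight by $\tfrac12\Vert p - U_n\Vert_1$ closes the gap.
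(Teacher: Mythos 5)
Your upper bound is fine: the triangle inequality within each block, or equivalently Schur-convexity of $x\mapsto\lVert x-U_n\rVert_1$ together with Lemma~\ref{lem:averaging-implies-majorization}, gives $\lVert p'-U_n\rVert_1\le\lVert p-U_n\rVert_1$, essentially as the paper does (the paper routes it through Fact~\ref{cor:mean-domination}, but these are the same observation).

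The lower bound is where the gap is. Your ``key observation'' that every one of the heaviest $\lfloor n/2\rfloor$ coordinates has $p_i\ge 1/n$ and every one of the lightest $\lfloor n/2\rfloor$ has $p_i\le 1/n$ is simply false: take $n=4$ and $p=(0.5,\,0.2,\,0.2,\,0.1)$, so $1/n=0.25$; the second-heaviest coordinate lies in $H$ but has $p_i=0.2<1/n$. The sorted split guarantees only that every coordinate in $H$ is $\ge$ every coordinate in $L$, and that the block averages $p(H)/|H|$ and $p(L)/|L|$ straddle $1/n$ --- it says nothing about individual coordinates, which is precisely why the factor $2$ is needed and why the lemma is not an equality. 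You recognize this later (``the factor of 2 slack is only needed to absorb the degenerate situation\dots''), but the argument you then sketch does not close the gap: you assert the misplaced $\ell_1$ weight on the heavy side is ``bounded by the mass on the light side that is $\ge 1/n$ (there is none once the split is correct)'' --- in the only configuration under discussion (some $i\in H$ with $p_i<1/n$, hence all of $L$ has $p_j<1/n$), that light-side mass is exactly zero, so the bound you are invoking would force the misplaced mass to be zero and you are back where you started. There is no argument here, and the constant is tight: for $p=(a,\frac{1-a}{n-1},\dots,\frac{1-a}{n-1})$ with $a$ slightly above $1/n$ and $n$ large, the ratio $\lVert p'-U_n\rVert_1/\lVert p-U_n\rVert_1$ tends to $1/2$, so anything sloppy with constants will fail.

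The clean way to finish your route: WLOG the misplaced coordinates lie in $H$ (so every $j\in L$ and the median, if odd, satisfy $p_j<1/n$). Work with $\dtv$ rather than $\ell_1$. Then $\dtv(p,U_n)=\sum_{i:p_i<1/n}(1/n-p_i)=B+C+D$, where $B$ is the contribution of the misplaced coordinates in $H$, $C$ of $L$, and $D$ of the median, while $\dtv(p',U_n)=C+D$. For each misplaced $i\in H$ and each $j\in L$ we have $p_j\le p_i<1/n$, hence $1/n-p_j\ge 1/n-p_i$; since $|M|\le|H|-1<|L|$, an injection into $L$ gives $B\le C$, hence $C+D\ge\tfrac12(B+C+D)$, which is the lemma. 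This is exactly the inequality the paper proves, though it phrases it directly in terms of the sorted tails: $\dtv(p,U_n)=\sum_{i:p_i>1/n}(p_i-1/n)\le 2\sum_{k\le\lfloor n/2\rfloor}(p^\downarrow_k-1/n)=2\sum_{k\le\lfloor n/2\rfloor}(p'^\downarrow_k-1/n)\le 2\,\dtv(p',U_n)$, after the WLOG that at least half the coordinates have $p_i\ge 1/n$. Your decomposition into $A,B,C,D$ is a fine scaffold, but the step that actually carries the factor $2$ must be a genuine pairing/counting argument of this kind, not the assertion that no misplacement occurs.
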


\medskip

\noindent We note that by doing the averaging as suggested by the above lemma, we obtain a distribution $p^\prime$ that is supported 
on the following set of three values: $\{\frac{1+\varepsilon^\prime}{n},\frac1n,\frac{1-\varepsilon^\prime}{n}\}$, 
for some $\frac{\varepsilon}{2}\leq \varepsilon^\prime\leq \varepsilon$. 
Hence, we can reduce the computation of the expectation gap for an arbitrary distribution $p$, 
to computing the gap for a distribution of this form. 
Fact~\ref{cor:stochastic domination} is an immediate corollary of
Lemmas~\ref{lem:averaging-implies-majorization},~\ref{lm:general domination}, and~\ref{lem:averaging-preserves-tv}.

\subsection{Proof of Lemma \ref{lm:general domination}} \label{ssec:dom}

To establish Lemma \ref{lm:general domination}, we are going to use the following intermediate lemmas: 
\begin{lemma}\label{yensen}
Let $f:\mathbb{R}^n\rightarrow \mathbb{R}$ be a symmetric convex function, 
and $a,b,c\in \mathbb{R}$ such that $0<a<b$ and $c>0$. Then, 
\[
f(a,b+c,x_3,\dots,x_n)\geq f(a+c,b,x_3,\dots,x_n) \;.
\]
\end{lemma}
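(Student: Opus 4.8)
The plan is to exhibit the point $(a+c, b, x_3, \dots, x_n)$ as a convex combination of $(a, b+c, x_3, \dots, x_n)$ and a permutation of it, and then invoke convexity together with symmetry of $f$. Concretely, I would first observe that since $0 < a < b$ and $c > 0$, the point $a+c$ lies strictly between $a$ and $b+c$: indeed $a < a+c$ (as $c>0$) and $a + c < b + c$ (as $a < b$). Hence there is a weight $\lambda \in [0,1]$ with $a + c = \lambda a + (1-\lambda)(b+c)$, namely $\lambda = \frac{b - a}{b + c - a}$. I would then check that the \emph{same} $\lambda$ produces the other coordinate correctly: $\lambda(b+c) + (1-\lambda) a = b$. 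This is the key algebraic identity, and it holds because the two swapped coordinates sum to the same total $a + b + c$ on both sides, so once the first convex combination is pinned down the second is forced.

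With this in hand, write $u = (a, b+c, x_3, \dots, x_n)$ and let $u' = (b+c, a, x_3, \dots, x_n)$ be $u$ with its first two coordinates transposed. The identities above say exactly that
\[
(a+c, b, x_3, \dots, x_n) = \lambda\, u + (1-\lambda)\, u' \;.
\]
Now apply convexity of $f$: $f(a+c, b, x_3, \dots, x_n) \le \lambda f(u) + (1-\lambda) f(u')$. Finally, since $f$ is symmetric, $f(u') = f(u)$, so the right-hand side equals $f(u) = f(a, b+c, x_3, \dots, x_n)$, giving the claimed inequality $f(a, b+c, x_3, \dots, x_n) \ge f(a+c, b, x_3, \dots, x_n)$.

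I do not anticipate a genuine obstacle here; the only point requiring a moment's care is verifying that a single $\lambda$ simultaneously interpolates both swapped coordinates, and that the denominator $b + c - a$ is strictly positive so that $\lambda$ is well defined and lies in $[0,1]$ — both follow immediately from the hypotheses $0 < a < b$ and $c > 0$. (One should also note the degenerate possibility $a = b$ is excluded by strict inequality, though even $a=b$ would make the statement trivially an equality.) This lemma is the ``Robin Hood move on two coordinates'' building block; presumably the subsequent argument chains such moves together, using Theorem~\ref{thm:majorization doubly stochastic} and the remark on Robin Hood operations, to pass from $p \succ q$ to the full stochastic domination in Lemma~\ref{lm:general domination}.
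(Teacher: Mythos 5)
Your proof is correct and takes essentially the same approach as the paper's: both exhibit $(a+c,b,x_3,\dots,x_n)$ as a convex combination of $(a,b+c,x_3,\dots,x_n)$ and its two-coordinate transposition, then invoke convexity and symmetry of $f$. The only differences are cosmetic: you compute the interpolation weight $\lambda=\frac{b-a}{b+c-a}$ explicitly and work directly in $\mathbb{R}^n$, whereas the paper restricts to the two active coordinates and establishes the convex-combination fact geometrically (collinearity on the line $x_1+x_2=a+b+c$ plus a sign check on an inner product to show betweenness) without naming $\lambda$.
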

\begin{proof}
Consider the set of convex functions $f^\prime_{x_3,\dots,x_n}:\mathbb{R}^2\rightarrow \mathbb{R}$ defined as:
 \[f^\prime_{x_3,\dots,x_n}(x_1,x_2)=f(x_1,x_2,x_3,\dots,x_n) \;.\]
 We will show that for every possible choice of $x_3,\dots,x_n$ it holds that:
 \[
  f^\prime_{x_3,\dots,x_n}(a,b+c)\geq f^\prime_{x_3,\dots,x_n}(a+c,b) \;.
 \]
Since $f$ is symmetric, so is $f^\prime$. Therefore, we have that $f^\prime_{x_3,\dots,x_n}(a,b+c)=f^\prime_{x_3,\dots,x_n}(b+c,a)$.
The $3$ points: $P_1=(a,b+c), P_2=(a+c,b), P_3=(b+c,a)$ are collinear 
since their coordinates satisfy the equation $x_1+x_2=a+b+c$.

We have that $P_2$ is between $P_1$ and $P_3$ since:
\[
\langle\vec{P_1P_2}, \vec{P_2P_3}\rangle= \langle(c,-c), (b-a,a-b) \rangle>0 \;.
\]
By applying Jensen's inequality, we get that    
\[f^\prime_{x_3,\dots,x_n}(a+c,b)\leq \frac{f^\prime_{x_3,\dots,x_n}(a,b+c)+f^\prime_{x_3,\dots,x_n}(b+c,a)}{2}=f^\prime_{x_3,\dots,x_n}(a,b+c) \;. \]
as desired. 
\end{proof}

The stochastic domination between the two statistics is established in the following lemma:
\begin{lemma}\label{lm:domination}
Let $f:\mathbb{R}^n\rightarrow \mathbb{R}$ be a symmetric convex function, 
$p$ be a distribution over $[n]$, and $a, b \in [n]$ be such that $p_a<p_b$. 
Also, let $q$ be the distribution which is identical to $p$ on $[n]\setminus \{a,b\}$, and for which: 

\begin{equation} \label{2-element averaging repeat}
\begin{pmatrix}
q_a \\
q_b 
\end{pmatrix}=
\begin{pmatrix}
w & 1-w \\
1-w & w
\end{pmatrix}
\begin{pmatrix}
p_a  \\
p_b 
\end{pmatrix} \;,
\end{equation}
where $w\in [\frac12,1]$. Suppose we take $m$ samples from $p$ and let $X_i$ denote the number of times we sample element $i$. 
Let $g(p)$ be the random variable $f(X_1,X_2,\dots,X_n)$. Then, $g(p)$ stochastically dominates $g(q)$.    

\end{lemma}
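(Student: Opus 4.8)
The plan is to condition on everything except how the samples split between coordinates $a$ and $b$, and then reduce to a one–dimensional comparison of two binomial distributions, exploiting the ``valley'' shape that convexity of $f$ forces along a line.

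First I would note that $p$ and $q$ agree on every coordinate outside $\{a,b\}$ and that $p_a+p_b=q_a+q_b$. Hence the joint law of the counts $(X_i)_{i\neq a,b}$ together with $N:=X_a+X_b$ is the same whether the $m$ samples come from $p$ or from $q$. Conditioned on $N=\nu$ and on $(X_i)_{i\neq a,b}=x$, the count $X_a$ is $\mathrm{Bin}(\nu,\rho_p)$ under $p$ and $\mathrm{Bin}(\nu,\rho_q)$ under $q$, where $\rho_p:=p_a/(p_a+p_b)$ and $\rho_q:=q_a/(q_a+q_b)$; a one–line computation using $w\ge 1/2$ and $p_a<p_b$ gives $\rho_p\le\rho_q\le 1/2$. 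Writing $h_{\nu,x}(j)$ for the value of $f$ when the coordinates $a,b$ are set to $j$ and $\nu-j$ and the other coordinates are set to $x$, it therefore suffices to show, for every fixed $\nu,x$ and $t\in\R$, that
\[
\Pr\bigl[h_{\nu,x}(\mathrm{Bin}(\nu,\rho_p))>t\bigr]\ \ge\ \Pr\bigl[h_{\nu,x}(\mathrm{Bin}(\nu,\rho_q))>t\bigr]\;,
\]
since integrating this back over the common law of $(N,(X_i)_{i\neq a,b})$ gives the claimed stochastic domination.

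The next step is the structural fact about $h:=h_{\nu,x}$: by symmetry of $f$, $h(j)=h(\nu-j)$, and since $f$ is convex while the three points obtained by putting $(j,\nu-j)$, $(j',\nu-j')$, $(\nu-j,j)$ into coordinates $a,b$ (rest equal to $x$) are collinear, the middle point $(j',\nu-j')$ is a convex combination of the other two for $0\le j\le j'\le \nu/2$, so $h(j')\le h(j)$. Thus $h$ is symmetric about $\nu/2$ and non-increasing on $\{0,1,\dots,\lfloor\nu/2\rfloor\}$ — the discrete analogue of Lemma~\ref{yensen}. Consequently, for every $t$ the super-level set $\{j:h(j)>t\}$ is the complement inside $\{0,\dots,\nu\}$ of a symmetric interval $[l,\nu-l]$ with $l\ge 0$ and $l\le\lceil\nu/2\rceil$, so the displayed inequality is equivalent to $\Pr[\mathrm{Bin}(\nu,\rho_p)\in[l,\nu-l]]\le\Pr[\mathrm{Bin}(\nu,\rho_q)\in[l,\nu-l]]$, i.e.\ to the statement that $r\mapsto\Pr[\mathrm{Bin}(\nu,r)\in[l,\nu-l]]$ is non-decreasing on $[0,1/2]$ for each such $l$.

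Finally I would establish this monotonicity by differentiation: using the telescoping identity $\frac{d}{dr}\Pr[\mathrm{Bin}(\nu,r)\le k]=-\nu\binom{\nu-1}{k}r^k(1-r)^{\nu-1-k}$ and writing the interval probability as $\Pr[\le\nu-l]-\Pr[\le l-1]$, the derivative simplifies to $\nu\binom{\nu-1}{l-1}\,r^{l-1}(1-r)^{l-1}\bigl((1-r)^{\nu-2l+1}-r^{\nu-2l+1}\bigr)$, which is non-negative for $r\le 1/2$ because $\nu-2l+1\ge 0$. I expect the main obstacle to be this last part — pinning down the right reduction (valley shape $\Rightarrow$ symmetric complementary super-level sets) and then pushing the binomial-tail computation through cleanly, including the parity and boundary bookkeeping hidden in $l\le\lceil\nu/2\rceil$; by contrast, the conditioning setup and the convexity-along-a-line observation are comparatively routine.
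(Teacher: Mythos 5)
Your argument is correct, and it follows a genuinely different route from the paper's. The paper constructs an explicit coupling between the $p$-samples and the $q$-samples: each $b$-sample is independently ``swapped'' to $a$ with probability $(1-w)(p_b-p_a)/p_b$, and one tracks three counts $X_{low}$, $X_{high}$, $X_{mid}$. After conditioning on the unordered pair $\{X_{low},X_{high}\}=\{y,z\}$ together with $X_{mid}$ and the other coordinates, Lemma~\ref{yensen} reduces the question to showing $\Pr[X_{low}=y,\,X_{high}=z]\ge\Pr[X_{low}=z,\,X_{high}=y]$ for $y\le z$, which is the one-line inequality $p_a^{\,y}q_b^{\,z}\ge p_a^{\,z}q_b^{\,y}$. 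You instead exploit $p_a+p_b=q_a+q_b$ to condition on $N=X_a+X_b$ and the remaining coordinates (whose joint law is the same under $p$ and $q$), turning the problem into comparing two binomials $\mathrm{Bin}(\nu,\rho_p)$ and $\mathrm{Bin}(\nu,\rho_q)$ with $\rho_p\le\rho_q\le\tfrac12$. You then use the same convexity-plus-symmetry ``valley'' observation to identify the super-level sets of $h$ as complements of symmetric intervals $[l,\nu-l]$, and prove the needed monotonicity of $r\mapsto\Pr[\mathrm{Bin}(\nu,r)\in[l,\nu-l]]$ on $[0,\tfrac12]$ by the telescoping derivative identity. Both are correct and both ultimately rest on the same convexity/symmetry structural fact about $f$; the paper's coupling gives a slicker, purely algebraic final step, while your version is coupling-free and reduces the whole statement to a clean, quotable monotonicity fact about binomial interval probabilities, at the cost of the extra derivative computation and the boundary bookkeeping for $l\in\{0,\lceil\nu/2\rceil\}$ that you correctly flag.
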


\begin{proof}
To prove stochastic domination between $g(p)$ and $g(q)$, we are going to define a coupling under which it is always true that $g(p)$ takes a larger value than $g(q)$.

Initially, we define an auxiliary coupling between $p$ and $q$ as follows: 
To get a sample from $q$, we first sample from $p$ and use it as our sample, 
unless the output is element ``$b$'', in which case we output ``$a$'' with probability 
$\frac{(1-w)(p_b-p_a)}{p_b}$ and ``$b$'' otherwise\footnote{Note that this coupling does not fix the value of $g(q)$ 
given a fixed value for $g(p)$, and is defined for convenience. 
We still have to show stochastic domination for the coupled random variables using a second coupling.}. 

Suppose now that we draw $m$ samples from $p$, which we also convert to samples from $q$ using the above rule. 

In relation to this coupling we define the following random variables:
\begin{itemize}
\item $X_{low}$: The number of times element ``$a$'' is sampled. 
\item $X_{high}$: The number of times element ``$b$'' is sampled and \emph{is not} swapped for element ``$a$'' in $q$.  
\item $X_{mid}$: The number of times element ``$b$'' is sampled and \emph{is} swapped for element ``$a$'' in $q$. 
\end{itemize}
From the above, we have that:
\[
X_a=X_{low} ,\quad X_b=X_{high}+X_{mid},\quad X_a^\prime=X_{low}+X_{mid},\quad X_b^\prime=X_{high} \;,
\]
where $X_i^\prime$ is the number of times element $i$ is sampled in $q$. 

We want to show that $g(p)$ stochastically dominates $g(q)$.
That is, we want to show that\footnote{To simplify notation, we pick $a=1$ and $b=2$ without loss of generality.}:
\[
 \forall t: \Pr[f(X_{low},X_{high}+X_{mid},X_3,\dots,X_n)\geq t]\geq \Pr[f(X_{low}+X_{mid},X_{high},X_3,\dots,X_n)\geq t] \;.
\]
We now condition on the events 
$E_{\{y,z\}}:\{X_{low},X_{high}\}=\{y,z\}$ and $E_{c,x_3,\dots,x_n}:X_{mid}=c\wedge X_3=x_3\wedge \dots \wedge X_n=x_n\}$, 
where $y\leq z$ without loss of generality. 
Let $B=E_{\{y,z\}}\wedge E_{c,x_3,\dots,x_n}$.

We have that:
\begin{align*}
&\Pr[f(X_{low},X_{high}+X_{mid},X_3,\dots,X_n)\geq t]\\
&\quad=\sum_{y\leq z} \Pr[f(X_{low},X_{high}+X_{mid},X_3,\dots,X_n)\geq t\mid B]\Pr[B] \;.
\end{align*}
So, it suffices to show that for every $y,z,c,x_3,\dots,x_n,t:$ 
\begin{align}
&\phantom{= {}}\Pr[f(X_{low},X_{high}+X_{mid},X_3,\dots,X_n)\geq t\mid B]\notag\\
&\geq \Pr[f(X_{low}+X_{mid},X_{high},X_3,\dots,X_n)\geq t\mid B]  \;.\label{eq:goaldom}
\end{align}

At this point, we have conditioned on everything except which of
$X_{low}$ and $X_{high}$ is $y$ and which is $z$.  That is, after
conditioning on the event $B=E_{\{y,z\}}\wedge E_{c,x_3,\dots,x_n}$,
we have that:
\[
\{f(X_{low}+X_{mid},X_{high},X_3,\dots,X_n),f(X_{low},X_{high}+X_{mid},X_3,\dots,X_n) \}= \{u,w\} \;,
\]
where $u=f(y+c,z,x_3,\dots,x_n),w=f(y,z+c,x_3,\dots,x_n)$.  Since by
assumption $y\leq z$, we have by Lemma \ref{yensen} that $u\leq w$.  Then
\eqref{eq:goaldom} holds trivially as an equality for $t\leq u $ and
for $ t>w$.  For the remaining values of $t$, it is equivalent to:
\[
\Pr[f(X_{low},X_{high}+X_{mid},X_3,\dots,X_n)=w\mid B]\geq \Pr[f(X_{low}+X_{mid},X_{high},X_3,\dots,X_n)=w\mid B],
\]
and hence to
\[
  \Pr[X_{low}=y,X_{high}=z \mid B]\geq\Pr[X_{low}=z,X_{high}=y \mid B].
\]
Now, this is also equivalent to a version with less restricted conditioning,
\[
  \Pr[X_{low}=y,X_{high}=z \mid E_{c,x_3,\dotsc,x_n}]\geq\Pr[X_{low}=z,X_{high}=y \mid E_{c,x_3,\dotsc,x_n}],
\]
because neither event occurs in the added regime where $E_{\{y,z\}}$
is false.  But if we rethink how our samples were drawn, we find that
this is equivalent to showing that
\[
  p_a^yq_b^z \geq p_a^zq_b^y.
\]
This holds since $q_b^{z-y}>p_a^{z-y}$, concluding the proof.
\end{proof}

\begin{proof}[{\bf Proof of Lemma \ref{lm:general domination}:}]
Since $p\succ q$, we have by Theorem~\ref{thm:majorization doubly stochastic} (and the remark that follows it)
that $q$ can be constructed from $p$ by repeated applications of  \eqref{2-element averaging repeat}. 
Therefore, Lemma \ref{lm:domination} and the fact that stochastic domination is transitive imply that $g(p)$ stochastically dominates $g(q)$.
\end{proof}

\subsection{Proof of Lemma \ref{lem:averaging-preserves-tv}}
Recall that $p^\downarrow_i$ denotes the vector $p$ with entries rearranged in non-increasing order.
Suppose that at least $n/2$ elements have at least $1/n$ probability mass\footnote{This is without loss of generality, since we can use essentially the same argument in the other case.}.
Therefore, if $p$ is not the uniform distribution, we have
\[
\sum_{k=1}^{\lfloor n/2\rfloor}p^\downarrow_k=\frac{1 + \varepsilon^\prime}{2} > \frac12 \;,
\] 
for some $\varepsilon^\prime >0$.

Thus, we have that 
\begin{align*}
p_{k}^{\prime\downarrow} = \begin{cases}
      \frac{1+\varepsilon^\prime}{n} & \text{for } k\leq\frac{n}{2}\\
      \frac{1-\varepsilon^\prime}{n} & \text{for }  k>\frac{n}{2}
   \end{cases}
   \end{align*}
   when $n$ is even, and 
   \begin{align*}
p_{k}^{\prime\downarrow} = \begin{cases}
      \frac{1+\varepsilon^\prime}{n} & \text{for } k\leq\frac{n-1}{2}\\
       \frac1n & \text{for } k=\frac{n+1}{2}  \\
      \frac{1-\varepsilon^\prime}{n} &  \text{for } k>\frac{n+1}{2}
   \end{cases}
\end{align*}
when $n$ is odd.

Moreover, since we are just averaging, we have that 
$
\sum_{k=1}^{\lfloor n/2\rfloor}p_{k}^{\downarrow}=\sum_{k=1}^{\lfloor n/2\rfloor}p_{k}^{\prime\downarrow}=\frac{1 + \varepsilon^\prime}{2} \;.
$
Since we have assumed 
that the majority of elements has mass at least $\frac1n$, we know that the total variation distance is given by:
\[
\dtv(p,U_n)= \sum_{i:p_i> 1/n} (p_i -1/n) \leq 2 \sum_{k=1}^{\lfloor n/2\rfloor} (p^\downarrow_i - 1/n) 
= 2 \sum_{k=1}^{\lfloor n/2\rfloor} (p^{\prime\downarrow}_i - 1/n)
\leq 2 \sum_{i:p_i^\prime>1/n} (p^{\prime\downarrow}_i - 1/n)= 2\dtv(p^\prime,U_n) \;.
\]
Thus, $\dtv(p^\prime,U_n)\geq (1/2) \dtv(p,U_n)$ or $\Vert p-U_n \Vert_1 /2 \leq \Vert p^\prime-U_n \Vert_1$, as desired.

\end{document}